\pgfplotsset{compat=newest}
\numberwithin{equation}{section}
\newcommand{\subscript}[2]{$#1 _ #2$}
\newtheorem{thm}{Theorem}
\newtheorem{prop}[thm]{Proposition}
\newtheorem{lem}[thm]{Lemma}
\newtheorem{cor}[thm]{Corollary}
\newtheorem*{Rem}{Remark}
\theoremstyle{plain} 
\newcommand{\thistheoremname}{}
\newtheorem{genericthm}[thm]{\thistheoremname}
\begin{document}
\title{Localization and IDS Regularity in the Disordered Hubbard Model within Hartree-Fock Theory}
\author{Rodrigo Matos and Jeffrey Schenker}
\address{Department of Mathematics, Michigan State University, East Lansing MI 48823, USA}
\email{matosrod@msu.edu}
\maketitle

\date{November 25, 2016}

\begin{abstract}
{Using the fractional moment method it is shown  that, within the Hartree-Fock approximation for the disordered Hubbard Hamiltonian, weakly interacting Fermions at positive temperature exhibit localization, suitably defined as exponential decay of eigenfunction correlators. Our result holds in any dimension in the regime of large disorder and at any disorder in the one dimensional case. As a consequence of our methods, we are able to show H\"older continuity of the integrated density of states with respect to energy, disorder and interaction.}

\end{abstract}

\section{Introduction}
Our goal in this note is to study Anderson localization in the context of infinitely many particles. We shall formulate our results for the disordered Hubbard model within Hartree-Fock theory. However, as the techniques involved are quite flexible, we expect that similar statements can be made in a more general framework, under appropriate modifications of the decorrelation estimates on section \ref{improvesec}. The (deterministic) Hubbard model under Generalized Hartree-Fock Theory has been discussed (at zero and positive temperature) by Lieb, Bach and Solovej  in \cite{B-Lieb-S} but, to the best of our knowledge, the localization properties of the disordered version of this model remained unexplored, even in the context of restricted Hartree-Fock theory, up to the present work. The main difficulty lies on the addition of a self-consistent effective field, which will be random and non-local by nature, to a random Schr\"odinger operator.
 The conclusion of this note can be summarized as follows: under technical assumptions, the results on (single-particle) Anderson localization obtained in the non-interacting setting in the regimes of large disorder (in dimension $d\geq 2$) and at any disorder (in dimension $d=1$),  remain valid under the presence of sufficiently weak interactions. More specifically, in the regime of strong disorder this is accomplished in any dimension by theorem \ref{main} below. Theorem \ref{1dloc} contains the improvement in dimension one, where any disorder strength leads to localization, provided the interaction strength is taken sufficiently small.
 Our methods contain various bounds in the fluctuations of the effective interaction which are interesting on their own right and potentially useful on different contexts. To exemplify this, we prove H\"older regularity of the integrated density of states (IDS) with respect to various parameters by adapting arguments of \cite{H-K-S}, which is the content of theorem \ref{thmids}.\par
 \subsection{Discussion of the results and main obstacles}
   Mathematically, our setting can be understood as an Anderson-type model $H_{\omega}=H_0+\lambda U_{\omega}$ where the values of the random potential $U$ at different sites are correlated in a highly non-local and self-consistent fashion. The correlations are governed by a nonlinear function of $H_{\omega}$, as explained on section \ref{motivation}. In comparison to the recent result on Hartree-Fock theory for lattice fermions in \cite{Duc}, achieved via multiscale analysis, we use the fractional moment method to establish  exponential decay of the eigenfunction correlators at large disorder in any dimension but also at any disorder in dimension one. In particular, in the above regimes we obtain for any $t>0$,  exponential decay (on expectation) for the matrix elements of the Hamiltonian evolution, which means that, on average, $|\mel{m}{e^{-itH}}{n}|$ decays exponentially on $|m-n|$.\par
  The result of complete localization in dimension one in such interacting context is new and deserves attention on its own. Its main technical difficulty lies on the non-local correlations of the potential, which means that standard tools such as Furstenberg's theorem and Kotani theory are not available. Moreover, a large deviation theory for the Green's function is a further obstacle to establishing dynamical localization even if one obtains uniform positivity of the Lyapunov exponent. We overcome these challenges using ideas of \cite[Chapter 12]{A-W-B}, where arguments reminiscent of the proof of the main result in \cite{K-S} are presented. We then obtain positivity of the Lyapunov exponent at any disorder using uniform positivity for the Lyapunov exponent of the Anderson model, combined with an explicit bound on how this quantity depends on the interaction strength, see theorem \ref{uniformposh}. When it comes to establishing a large deviation theorem, our modification of the argument in \cite[Theorem 12.8]{A-W-B} relies on quantifying the decorrelations on the effective potential, which is presented on lemma \ref{mixinglem} in the form of a strong mixing statement.
 It is worth clarifying that, since our proof is based on fractional moments, we have not established localization in one dimension for rough potentials as in  \cite{C-K-M}. Moreover, the gap assumption in \cite{Duc} is replaced by working at positive temperature thus our results do not apply to Hartree-Fock ground states.
 \subsection{Hartree-Fock theory}
 Hartree-Fock theory has been widely applied in computational physics and chemistry. It also has a rich mathematical literature which goes well beyond the scope of random operators, see for instance \cite{H-Lewin-S},\cite{G-H-Lewin},\cite{B-Lieb-S},\cite{B-L-L-S},\cite{Lieb-Simon} and references therein.
 \subsection{Background on Localization for interacting systems}
 The main results of this note lie in between the vast literature on (non-interacting) single particle localization and the recent efforts to study many particle systems, as in the case of an arbitrary, but finite, number of particles in the series of works by Chulaevsky-Suhov \cite{C-S1},\cite{C-S2},\cite{C-S3} and Aizenman-Warzel \cite{A-W-P}. In comparison to the later, we only seek for a single-particle localization result but allow for infinitely many interactions, which occur in the form of a mean field. 
 In comparison to the recent developments on spin chains, as the study of the XY spin chain in \cite{H-S-St} and the droplet spectrum of the XXZ quantum spin chain in \cite{E-K-St} and \cite{B-W}, the notions of localization for a single-particle effective Hamiltonian are more clear and can be displayed from pure point spectrum to exponential decay of eigenfunctions and exponential decay of eigenfunction correlators. The later is agreed to be the strongest form of single particle localization and it is what we accomplish in this manuscript. If fact, dynamical localization in the form of theorems \ref{1dloc} and \ref{main} implies pure point spectrum via the RAGE theorem (see \cite[Proposition 5.3]{Stolz}) and exponential decay of eigenfunctions, see \cite[Theorem 7.2 and Theorem 7.4]{A-W-B}.

 \section{Definitions and Statement of the Main Result}
\subsection{Notation}In what follows, $\mathbb{Z}^d$ will be equipped with the norm
$|n|=|n_1|+...+|n_d|$ for $n=(n_1,...,n_d)$. Given a subset 
$\Lambda\subset \mathbb{Z}^d$, we define 
$\ell^2(\Lambda):=\{\varphi:\Lambda \rightarrow \mathbb{C}\,| 
\sum_{n\in \Lambda}|\varphi(n)|^2<\infty\}$ and, for 
$\varphi\in \ell^2(\Lambda)$, we let $\|\varphi\|_{\ell^2(\Lambda)}:=\left(\sum_{n\in \Lambda}|\varphi(n)|^2<\infty\}\right)^{1/2}$. Throughout this note, $\eta$ will be a positive constant and $F_{\beta,\kappa}$ will denote the Fermi-Dirac function at inverse temperature $\beta>0$ and chemical potential $\kappa$: \begin{equation}\label{fermidirac}
F_{\beta,\kappa}(z)=\frac{1}{1+e^{\beta(z-\kappa)}}.
\end{equation}
We shall omit the dependence on the above parameters whenever it is clear from the context.
For many of our bounds, the specific form of (\ref{fermidirac}) is not important and $F$ could denote a fixed function which is analytic on the strip $\mathcal{S}=\{z\in \mathbb{C}:\,\, |\mathrm{Im}z|<\eta\}$ and continuous up to the boundary of $\mathcal{S}$, in which case we define $\|F\|_{\infty}:=\sup_{z\in \mathcal{S}}|F(z)|$. For the function $F_{\beta,\kappa}$ in (\ref{fermidirac}) one can take $\eta=\frac{\pi}{2\beta}$. However, to obtain robust results which incorporate delicate fluctuations, further properties of the Fermi-Dirac function are necessary. Namely, in section \ref{improvesec} we use the the fact that $tF(t)$ is bounded as $t\to \infty$ and that $t(1-F(t))$ is bounded  $t\to -\infty$. These properties will also play a role in the decoupling estimates needed in the proof of theorem \ref{1dloc} but could be relaxed if one is only interested in the large disorder proof of theorem \ref{main} for a specific distribution with heavy tails (for instance, the Cauchy distribution).\par
 Our main goal is to study localization properties of non-local perturbations of the Anderson model $H_{\mathrm{And}}:=-\Delta+\lambda V_{\omega}$ which naturally arise in the context of Hartree-Fock theory for the Hubbard model. The random potential $V_{\omega}$ is the multiplication operator on $\ell^2(\mathbb{Z}^d)$  defined as
 \begin{equation}\label{potentialdef}\left(V_{\omega}\varphi\right)(n)=\omega_n\varphi(n)
\end{equation} for all $n\in \mathbb{Z}^d$ and $\{\omega_n\}_{n\in \mathbb{Z}^d}$ are independent, identically distributed random variables on which we impose technical assumptions described in the next paragraph. The hopping operator $\Delta$ is the discrete Laplacian on $\mathbb{Z}^d$, defined via \begin{equation}\left(\Delta\varphi\right)(n)=\sum_{|m-n|=1}\left(\varphi(m)-\varphi(n)\right).
\end{equation} The proofs of localization via fractional moments usually do not require the hopping to be dictated by $\Delta$; below we will replace $\Delta$ by a more general operator $H_0$ whose matrix elements decay sufficiently fast away from the diagonal. It is technically useful to formulate some of our results in finite volume, i.e, we will work with restrictions of the operators to $\ell^2(\Lambda)$ but the estimates obtained will be volume independent, meaning that all the constants involved are independent of $\Lambda\subset \mathbb{Z}^d$. We will use $\mathds{1}_{\Lambda}$ to denote the characteristic function of $\Lambda$ as well as the natural projection $P_{\Lambda}:\ell^2(\mathbb{Z}^d) \rightarrow \ell^2(\Lambda)$.
  With these preliminaries we are ready to define the Schr\"odinger operators studied in this work.
\subsection{Definition of the operators}
Let  $H_{\mathrm{And}}=H_0+\lambda V_{\omega}$ be the Anderson model on $\ell^2\left(\mathbb{Z}^d\right)$ where:
\begin{enumerate}[label=(\subscript{A}{{\arabic*}})]
\item{ $$\zeta(\nu):=\sup_{m}\sum_{n\in \mathbb{Z}^d}|H_0(m,n)|\left(e^{\nu|m-n|}-1\right)<\frac{\eta}{2}, \,\,\,\,\mathrm{for\, some}\,\,\nu>0 \,\,\mathrm{fixed}.$$}

\item{ $V_{\omega}$ is defined as in $(\ref{potentialdef})$ and the random variables $\{\omega(n)\}_{n\in \mathbb{Z}^d}$ are independent, identically distributed with a density $\rho$: 
$$\mathbb{P}\left(\omega(0)\in I\right)=\int_{I} \rho(x)\,dx,\,\,\,\,\,\mathrm{for}\,\,\, I\subset \mathbb{R}\,\,\,\mathrm{a\,\, Borel\,\, set\,}.$$}
\item{ We also assume that $\mathrm{supp}\,\rho=\mathbb{R}$ with 
\begin{equation}\label{flucassump}\frac{\rho(x)}{\rho(x')}\geq e^{-c_1(\rho)|x-x'|(1+c_2(\rho)\max\{\,|x|,|x'|\,\})}
\end{equation}
for some $c_1(\rho)>0$ and $c_2(\rho)\geq 0$ and any $x,x'\in \mathbb{R}$.}
\end{enumerate}
Before stating the remaining assumptions on $\rho$, we need to introduce some notation. Assume that $\rho$ satisfies (\ref{flucassump}). Let
\begin{equation}\label{upperdensity}
\overline{\rho}(x)=\frac{\rho(x)}{\int^{\infty}_{-\infty}\rho(\alpha)h(x-\alpha)\,d\alpha}
\end{equation}
where
\begin{equation}\label{fluctintegral3}h(x)=
\left\{
\begin{array}{lll}
e^{-{\overline{c}}_{\rho}|x|} \;\,\,\,\,\, \mathrm{if} \;c_2(\rho)=0. \\
 \\
e^{-{\overline{c}}_{\rho}|x|^2}\,\,\,\,\, \; \mathrm{if} \;c_2(\rho)>0. \;\;\;
\end{array}
\right.
\end{equation}

\begin{enumerate}[resume,label=(\subscript{A}{{\arabic*}})]
\item\label{fluctintegral1}{ The function $\overline{\rho}$ is bounded for some ${\overline{c}}_{\rho}>0$.
}
\end{enumerate}
\begin{Rem}\label{remark2}
The technical assumptions $(A_3)-(A_4)$ will be needed for the large disorder result of theorem \ref{main} below. They include, for instance, the Cauchy distribution, the Gaussian,  and the exponential distribution $\rho(v)=\frac{m}{2}e^{-m|v|}$.
\end{Rem}

The above assumptions will suffice to show localization at large disorder on theorem \ref{main} below. To show complete localization in dimension one, theorem \ref{1dloc} will also require a moment condition on $\overline{\rho}$, which is the following. 
\begin{enumerate}[resume,label=(\subscript{A}{{\arabic*}})]

\item\label{momentassumption}{ For some $\varepsilon>0$ and some ${\overline{c}}_{\rho}>0$, $\int^{\infty}_{-\infty}|x|^{\varepsilon}\overline{\rho}(x)\,dx<\infty$.}
\end{enumerate}

\begin{Rem}\label{remark3}
The assumption $(A_5)$ covers, for example, the Gaussian and the exponential distributions but it does not cover the Cauchy or other distribution with heavy tails. It will be necessary for the one dimensional result of theorem \ref{1dloc} below. More specifically, this requirement will imply a moment condition which will be needed to relate the Green's function to the Lyapunov exponent, see sections \ref{1dideas} and \ref{detailslowerbound}.
\end{Rem}
\begin{Rem}\label{remark1} The specific bound on $\zeta(\nu)$ is necessary to ensure that the Combes-Thomas bound 
$|G(m,n;E+i\eta)|\leq \frac{2}{\eta}e^{-\nu|m-n|}$ holds \cite[Theorem 10.5]{A-W-B}, where $G(m,n;z)$ denotes $\mel{m}{(H-z)^{-1}}{n}$, whenever this quantity is defined. \end{Rem}

 \par

Define the operator $H_{\mathrm{Hub}}$, acting on $\ell^{2}\left(\mathbb{Z}^d\right)\oplus \ell^{2}\left(\mathbb{Z}^d\right)$  by \begin{equation}\label{Hubbarddefmain}H_{\mathrm{Hub}}=\begin{pmatrix}
\,H_{\uparrow}(\omega) & 0 \\
0 & H_{\downarrow}(\omega)\,\\
\end{pmatrix}
:=
\begin{pmatrix}
\,H_0+\lambda V_{\omega}+gV_{\uparrow}(\omega) & 0 \\
0 & H_0+\lambda V_{\omega}+gV_{\downarrow}(\omega)\,\\
\end{pmatrix}
\end{equation}
where the operators $H_{\uparrow}(\omega)$ and $H_{\downarrow}(\omega)$ act on $\ell^{2}\left(\mathbb{Z}^d\right)$ and the so-called effective potentials are defined via
\begin{equation}\label{eff}
\begin{pmatrix}
\,V_{\uparrow}(\omega)(n) \\
V_{\downarrow}(\omega)(n)\,\\
\end{pmatrix}
=\begin{pmatrix}
\,\mel{n}{F(H_{\downarrow})}{n} \\
\mel{n}{F(H_{\uparrow})}{n}\,\\
\end{pmatrix}.
\end{equation}
Note that the above equations only define $H_{\uparrow}(\omega)$ and $H_{\downarrow}(\omega)$ implicitly. Existence and uniqueness of 
$V_{\uparrow}$ and $V_{\downarrow}$ will be shown in section \ref{existHubb} via a fixed point argument. The model (\ref{Hubbarddefmain}) is usually referred to as the Hartree approximation, due to the absence of exchange terms. In section \ref{motivation} below we will show that the terminology Hartree-Fock approximation is justified when $g<0$, which represents a repulsive interaction.

The Hubbard model is schematically represented in the following picture. The black (horizontal) edges represent hopping between sites and the red (vertical) edges represent the effective interaction between the two layers, which are identical copies of $\mathbb{Z}^d$.
\vspace{1cm}
\begin{center}
\begin{tikzpicture}[scale=1.0]
\draw[fill=black] (-1,1) circle (3pt);
\draw[fill=black] (-1,2) circle (3pt);

\draw[fill=black] (0,1) circle (3pt);
\draw[fill=black] (0,2) circle (3pt);

\draw[fill=black] (1,1) circle (3pt);
\draw[fill=black] (1,2) circle (3pt);

\draw[fill=black] (2,1) circle (3pt);
\draw[fill=black] (2,2) circle (3pt);

\draw[fill=black] (3,1) circle (3pt);
\draw[fill=black] (3,2) circle (3pt);

\draw[fill=black] (4,1) circle (3pt);
\draw[fill=black] (4,2) circle (3pt);

\draw[fill=black] (5,1) circle (3pt);
\draw[fill=black] (5,2) circle (3pt);

\node at (2.5,0) {(0,0)\,\,\,\,\,\,};
\node at (6.5,1) {$n_{\downarrow}$\,\,\,\,\,\,};
\node at (6.5,2) {$n_{\uparrow}$\,\,\,\,\,\,};
\
\draw[red, thick]
(5,1)--(5,2);
\draw[red, thick]
(4,1)--(4,2);
\draw[red, thick]
(3,1)--(3,2);
\draw[red, thick]
(2,1)--(2,2);
\draw[red, thick]
(1,1)--(1,2);
\draw[red, thick]
(0,1)--(0,2);
\draw[red, thick]
(-1,1)--(-1,2);
 \draw[thick]
 (-1,2)--(0,2) --(1,2) -- (2,2) -- (3,2) --(4,2) -- (5,2);
 \draw[thick]
 (-1,1)--(0,1) --(1,1) -- (2,1) -- (3,1) --(4,1) -- (5,1);

\end{tikzpicture}
\end{center}

\subsection{Main Theorems}
 Fix an interval $I\subset \mathbb{R}$ and  define the eigenfunction correlator through 
\begin{equation} Q_{I}(m,n):=\sup_{|\varphi|\leq 1}\left(|\mel{m}{\varphi(H_{\uparrow})}{n}|+|\mel{m}{\varphi(H_{\downarrow})}{n}|\right).\end{equation}
The operators $H_{\uparrow}$ and $ H_{\downarrow}$ and defined as in (\ref{Hubbarddefmain}) and the supremum being taken over Borel measurable functions bounded by one and supported on the interval $I$. In case $I=\mathbb{R}$ we simply write $Q(m,n)$.
Our first result is the following:
\begin{thm}\label{1dloc} In dimension $d=1$, let $H_0=-\Delta$ and assume that the conditions $(A_1)-(A_5)$ hold. For any 
$\lambda>0$ and any closed interval $I\subset \mathbb{R}$ , there is a constant $g_1>0$ such that whenever $|g|<g_1$ we have 
\begin{equation}\label{dynloc1d}
 \mathbb{E}\left(Q_I(m,n)\right)\leq Ce^{-\mu_{1}|m-n|}.
 \end{equation} for any  $m,n\in \mathbb{Z}^d$ and positive constants $\mu_1=\mu_1(\lambda,\nu,\eta,I)$, $C(\eta,g,\lambda,\|F\|_{\infty},I)$.

\end{thm}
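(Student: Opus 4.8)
The plan is to deduce the eigenfunction correlator estimate (\ref{dynloc1d}) from an exponentially decaying bound on a fractional moment of the finite-volume resolvents of each spin sector, and to establish the latter by the one-dimensional transfer-matrix method of \cite[Chapter 12]{A-W-B}, modified so as to absorb the self-consistent, non-local effective potential. First one must know the operators exist: in section \ref{existHubb} the coupled equations (\ref{eff}) are solved by a contraction argument for $|g|$ small, producing $V_\uparrow(\omega),V_\downarrow(\omega)$ --- hence $H_\uparrow(\omega),H_\downarrow(\omega)$ --- uniquely, the same construction going through in any finite volume $\Lambda\subset\mathbb{Z}$ with all bounds independent of $\Lambda$. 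By the standard passage from fractional moments of the Green function to eigenfunction correlators (see \cite[Chapter 7]{A-W-B}), it then suffices to produce $s\in(0,1)$, $\mu>0$ and $C<\infty$, independent of $\Lambda$, with
\begin{equation}
\sup_{E\in I}\ \sup_{\epsilon\neq 0}\ \mathbb{E}\!\left(|G^{\Lambda}_{\sharp}(m,n;E+i\epsilon)|^{s}\right)\le C\,e^{-\mu|m-n|},\qquad \sharp\in\{\uparrow,\downarrow\},
\end{equation}
and then to let $\Lambda\nearrow\mathbb{Z}$.

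Two ingredients drive the one-dimensional bound. The first is a uniform positive lower bound on the Lyapunov exponent: theorem \ref{uniformposh} provides $\gamma_0>0$ such that the Lyapunov exponent $\gamma_\sharp(E)$ associated with $H_\sharp$ satisfies $\gamma_\sharp(E)\ge\gamma_0$ for all $E$ in a neighbourhood of $I$ and all $|g|<g_1$, obtained by combining the positivity of the Anderson Lyapunov exponent (uniform on compact energy sets, via Furstenberg's theorem) with the explicit estimate, also part of theorem \ref{uniformposh}, of how the transfer matrices --- and hence $\gamma_\sharp$ --- move under the perturbation $gV_\sharp(\omega)$. The second ingredient compensates for the loss of independence: the transfer matrices are now only a stationary sequence with long-range dependence, and one needs this dependence to decay. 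It does, because by the Combes--Thomas bound of remark \ref{remark1} the effective potential value $V_\sharp(\omega)(n)$ is, up to an exponentially small error in $R$, a function of $\{\omega_k : |k-n|\le R\}$ alone; quantifying this, using in an essential way the analyticity of $F$ on the strip $\mathcal{S}$ together with the boundedness of $tF(t)$ as $t\to\infty$ and of $t(1-F(t))$ as $t\to-\infty$, gives the strong mixing statement of lemma \ref{mixinglem}.

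With these in hand one runs the argument of \cite[Theorem 12.8]{A-W-B}. Solving $H_\sharp\psi=(E+i\epsilon)\psi$ by the transfer-matrix recursion expresses $|G^{\Lambda}_\sharp(m,n;E+i\epsilon)|$ through norms of products of the $SL(2,\mathbb{C})$ transfer matrices along the segment between $m$ and $n$; positivity of $\gamma_\sharp$ gives $\mathbb{E}\log\|A_{n}\cdots A_{m}\|\ge \gamma_0|m-n|$ up to boundary terms, and lemma \ref{mixinglem} --- fed into a block decomposition and a large-deviation estimate for sums of weakly dependent random variables --- upgrades this to $\|A_{n}\cdots A_{m}\|\ge e^{(\gamma_0/2)|m-n|}$ off an event whose probability decays in $|m-n|$. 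To convert this into control of the fractional moment one also needs the a priori bound $\sup_\Lambda\sup_{z}\mathbb{E}(|G^{\Lambda}_\sharp(m,n;z)|^{s})<\infty$; for $|g|$ small the map $\omega_n\mapsto\lambda\omega_n+gV_\sharp(\omega)(n)$ is a bi-Lipschitz change of variables with derivative near $\lambda$, so conditionally on the remaining randomness the potential at site $n$ still has a bounded density with controlled tails, and here the hypotheses $(A_3)$--$(A_5)$ on $\rho$ enter. A H\"older interpolation between the large-deviation estimate and the a priori bound gives the exponential decay in finite volume; a standard limiting argument then passes it to $\mathbb{Z}$, and running it for both $\sharp=\uparrow,\downarrow$ yields (\ref{dynloc1d}).

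The routine content is the one-dimensional fractional-moment machinery imported from \cite[Chapter 12]{A-W-B}. The real difficulty --- and the main obstacle --- is what is genuinely new: (i) establishing the decorrelation estimate of lemma \ref{mixinglem} for the self-consistent potentials $V_\uparrow,V_\downarrow$, which amounts to propagating the Combes--Thomas bound through the fixed-point equation (\ref{eff}); (ii) securing the single-site conditional density estimate behind the a priori fractional moment bound, the delicate point being that $V_\sharp(\omega)(n)$ itself depends on $\omega_n$; and (iii) showing that the positivity of the Anderson Lyapunov exponent is stable under the perturbation $gV_\sharp(\omega)$, uniformly over the energies relevant to $I$ and uniformly for $|g|<g_1$.
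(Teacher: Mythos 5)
Your high-level frame (a priori fractional moment bound from a conditional-density estimate, uniform positivity of the Lyapunov exponent by comparison with the Anderson model, a decorrelation statement to replace independence, then Green's function decay and the standard passage to eigenfunction correlators) matches the paper's architecture, but the engine you propose for the one-dimensional step is different from the paper's and has a genuine gap exactly at the point the paper identifies as the main obstacle. You claim that positivity of $\gamma_\sharp$ plus lemma \ref{mixinglem}, ``fed into a block decomposition and a large-deviation estimate for sums of weakly dependent random variables,'' upgrades $\mathbb{E}\log\|A_n\cdots A_m\|\ge \gamma_0|m-n|$ to a pointwise lower bound $\|A_n\cdots A_m\|\ge e^{(\gamma_0/2)|m-n|}$ off a small event. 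But $\log\|A_n\cdots A_m\|$ is not a sum of the single-site quantities, only a subadditive functional of them, so mixing plus a blocking argument for sums does not apply; a large deviation theorem for norms of products of \emph{correlated} transfer matrices is precisely what the paper says is unavailable and deliberately avoids (``a large deviation theory for the Green's function is a further obstacle\ldots''). Moreover lemma \ref{mixinglem} is a decoupling statement for fractional moments of finite-volume Green's functions, not a mixing statement about the transfer-matrix (or potential) sequence, so it cannot be fed into such a blocking argument as stated; and even granting the norm lower bound, passing from largeness of $\|A_n\cdots A_m\|$ to smallness of the specific matrix element $G^\Lambda_\sharp(m,n;z)$, and then to exponential decay of $\mathbb{E}|G|^s$ via H\"older interpolation, requires the bad-event probability to decay exponentially (you only assert ``decays in $|m-n|$''), which is again the missing large-deviation input. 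The paper instead never touches transfer-matrix norms: it works with the moment generating function $\varphi(s,z)$ of the Green's function, proves an approximate sub/superadditivity via lemma \ref{mixinglem} and the new Fekete-type lemma \ref{Fekete} (with the Riccati/Jensen argument of section \ref{detailslowerbound}, where assumption $(A_5)$ actually enters), obtains two-sided bounds $\underline{C}e^{\varphi(s,z)n}\le \mathbb{E}|\hat G(0,n;z)|^s\le C(s,z)e^{\varphi(s,z)n}$, and concludes with $\varphi(s,E)\le-\tfrac{s}{2}\mathcal{L}_0$.

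A secondary inaccuracy: theorem \ref{uniformposh} is not an estimate of ``how the transfer matrices move'' under the perturbation $gV_\sharp$. It compares $\mathcal{L}$ and $\mathcal{L}_{\mathrm{And}}$ directly through the resolvent identity, the inequality $\ln(1+x)\le x^s/s$, Feenberg's expansion and the a priori fractional moment bound, yielding $|\mathcal{L}-\mathcal{L}_{\mathrm{And}}|\le C_{\mathrm{Lyap}}|g|^s$. A naive ``each transfer matrix is perturbed by $O(g)$'' argument would not by itself control the difference of Lyapunov exponents, since the perturbed sequence is no longer i.i.d.\ and Lyapunov exponents are not controlled merely by sup-norm perturbations of the cocycle; so this step, as you describe it, would also need to be replaced by the paper's Green's-function comparison (or an equivalent quantitative argument). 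Your treatment of the a priori bound and of existence/regularity of the effective potentials is consistent with the paper's lemmas \ref{bdddensity}, \ref{decay1}, \ref{resamp} and \ref{improvedct}.
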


\begin{thm}\label{main}
Suppose that the conditions $(A_1)-(A_4)$ hold. For any dimension $d\geq 1$, there exists a constant $g_d=g(d,\eta,\|F\|_{\infty},\nu)$ such that, whenever $|g|<g_d$, there is a positive constant $\lambda_0(g)$ for which
\begin{equation}\label{dynloc}
 \mathbb{E}\left(Q(m,n)\right)\leq Ce^{-\mu_{d}|m-n|}.
 \end{equation}
holds for $\lambda >\lambda_0(g)$, any $m,n\in \mathbb{Z}^d$ and some positive constants $\mu_{d}=\mu(d,\lambda,g,\nu,\eta)$, $C(\eta,\nu,d,g,\lambda,\|F\|_{\infty})$.
\end{thm}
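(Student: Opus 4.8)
The plan is to run the fractional moment method of Aizenman and Molchanov, adapted to the self--consistent, nonlocal field $gV_\sigma(\omega)$; the two new features are that the on--site potential at $m$ is the \emph{dressed} matrix element $\widetilde u_{m,\sigma}(\omega):=\mel{m}{H_\sigma}{m}=H_0(m,m)+\lambda\omega_m+gV_\sigma(\omega)(m)$, a nonlinear function of $\omega_m$, and that $H_\sigma$ depends on \emph{every} $\omega_k$ through $V_\sigma$. First I would reduce the theorem to a volume--independent Green's function estimate: for some $s\in(0,1)$, $\mu_d>0$ and $C<\infty$,
\begin{equation}\label{fmbplan}
\mathbb{E}\big(|G^\Lambda_\sigma(m,n;z)|^{s}\big)\le C\,e^{-\mu_d|m-n|}\qquad(\sigma\in\{\uparrow,\downarrow\}),
\end{equation}
uniformly in the finite volume $\Lambda\subset\mathbb{Z}^d$, in $z\in\mathbb{C}\setminus\mathbb{R}$ and in $m,n\in\mathbb{Z}^d$, where $G^\Lambda_\sigma(m,n;z):=\mel{m}{(P_\Lambda H_\sigma P_\Lambda-z)^{-1}}{n}$. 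Granting \eqref{fmbplan} uniformly in $z$, the bound on $\mathbb{E}(Q(m,n))$ follows by the standard eigenfunction--correlator estimate (see \cite[Ch.~7]{A-W-B}) and letting $\Lambda\uparrow\mathbb{Z}^d$; the inputs are $\|gV_\sigma\|_\infty\le|g|\,\|F\|_\infty$ from $(A_1)$, the contour representation $F(H_\sigma)=\frac{1}{2\pi i}\oint_{|\mathrm{Im}\,w|=\eta/2}F(w)(w-H_\sigma)^{-1}\,dw$, and the Combes--Thomas bound of Remark~\ref{remark1}, which also make $V_\sigma$ well defined and, by the fixed--point argument of section~\ref{existHubb}, unique for $|g|<g_d$.

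For the \emph{a priori bound} $\sup\mathbb{E}(|G^\Lambda_\sigma(m,n;z)|^{s})<\infty$ I would spectrally average in $\omega_m$, conditionally on $\{\omega_k\}_{k\ne m}$. Differentiating the contour representation under the integral sign expresses $\partial_{\omega_m}V_\sigma(\omega)(k)$ as a series of products of Green's function matrix elements which, by Combes--Thomas, closes in a Neumann iteration once $|g|<g_d$; this is the content of the fluctuation estimates of section~\ref{improvesec}, and it shows that $\omega_m\mapsto\widetilde u_{m,\sigma}(\omega)$ is a $C^1$--diffeomorphism of $\mathbb{R}$ with derivative $\ge\lambda(1-C_1|g|)$, and that the residual $\omega_m$--dependence of $H_\sigma$ (through $V_\sigma$ at sites $k\ne m$) is Lipschitz with an $O(|g|)$ constant that decays exponentially in $|k-m|$. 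Expressing $G^\Lambda_\sigma$ as the resolvent of a rank--one perturbation in $\widetilde u_{m,\sigma}$ plus this small, exponentially localized residual, changing variables $\omega_m\mapsto\widetilde u_{m,\sigma}$, and using the regularity of $\rho$ guaranteed by $(A_2)$--$(A_4)$ reduces the estimate to the non--interacting rank--one bound up to an $O(|g|)$ correction, with a constant depending only on $s,\eta,\|F\|_\infty,\nu$ and $g$ --- not on $\Lambda$, $z$, $m$ or $\lambda$.

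The \emph{exponential decay} is obtained by iterating the resolvent identity: for $m\ne n$, subadditivity of $t\mapsto t^{s}$ gives
\begin{equation}\label{iterplan}
|G^\Lambda_\sigma(m,n;z)|^{s}\le\sum_{k\ne m}\frac{|H_0(m,k)|^{s}}{|\widetilde u_{m,\sigma}-z|^{s}}\,|G^\Lambda_\sigma(k,n;z)|^{s},
\end{equation}
and the crucial ingredient is a \emph{decoupling lemma}: conditionally on $\{\omega_j\}_{j\ne m}$,
\begin{equation}\label{decplan}
\mathbb{E}_{\omega_m}\!\Big(\frac{|G^\Lambda_\sigma(k,n;z)|^{s}}{|\widetilde u_{m,\sigma}-z|^{s}}\Big)\le\frac{C_s(g)}{\lambda^{s}}\,\mathbb{E}_{\omega_m}\big(|G^\Lambda_\sigma(k,n;z)|^{s}\big).
\end{equation}
In the Anderson model this is the Möbius--function estimate of \cite[Ch.~8]{A-W-B}; here $G^\Lambda_\sigma(k,n;z)$ is no longer fractional linear in $\omega_m$ because $V_\sigma$ moves at every site, and re--establishing \eqref{decplan} is what I expect to be the main obstacle. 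It should be handled by $(A_3)$--$(A_4)$: the ratio bound (\ref{flucassump}) lets one compare $\mathbb{E}_{\omega_m}[\,f/|\widetilde u_{m,\sigma}-z|^{s}\,]$ with $\mathbb{E}_{\omega_m}[f]$ for a general positive $f$ by a small shift of $\omega_m$, whose cost is absorbed by (\ref{flucassump}) together with the boundedness of $\overline{\rho}$; the gain $\lambda^{-s}$ is the Jacobian $|d\omega_m/d\widetilde u_{m,\sigma}|\lesssim\lambda^{-1}$, and the $O(|g|)$, bounded, decaying--in--$|v|$ deviation of $G^\Lambda_\sigma$ from its rank--one approximant only affects the constant $C_s(g)$.

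Inserting \eqref{decplan} into \eqref{iterplan} and summing over $k$ using $(A_1)$ (which makes $\sum_k|H_0(m,k)|^{s}e^{\nu s|m-k|}$ finite) yields $\mathbb{E}(|G^\Lambda_\sigma(m,n;z)|^{s})\le\lambda^{-s}C_s(g)\sum_{k\ne m}|H_0(m,k)|^{s}\,\mathbb{E}(|G^\Lambda_\sigma(k,n;z)|^{s})$. Iterating this inequality along paths from $m$ to $n$ converges geometrically as soon as $\lambda>\lambda_0(g)$, chosen so that the kernel $\lambda^{-s}C_s(g)|H_0(m,k)|^{s}$ defines a bounded operator on $\ell^1(\mathbb{Z}^d)$ of norm below one (and remains so after multiplication by an exponential weight); unfolding it from the a priori bound produces \eqref{fmbplan} with a rate $\mu_d$ of order $\log\lambda$, all constants volume independent, and then $\mathbb{E}(Q(m,n))$ follows as in the first step. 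Finally, $g_d$ may be taken independent of $\lambda$ because $|g|<g_d$ serves only to guarantee the existence, uniqueness and $O(|g|)$ fluctuations of $V_\sigma$ --- quantities governed by $\eta$, $\|F\|_\infty$ and $\nu$ --- whereas the disorder threshold $\lambda_0(g)$ is forced to grow with the decoupling constant $C_s(g)$.
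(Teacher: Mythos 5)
Your overall skeleton (reduce to a volume-uniform fractional moment bound on the Green's function, prove an a priori bound by conditioning plus a change of variables, iterate the resolvent identity at large disorder, then pass to eigenfunction correlators via \cite[Theorem A.1]{A-S-F-H}) coincides with the paper's. The gap is at the central step, where you diverge from the paper: you condition on $\{\omega_k\}_{k\ne m}$ and change only the single variable $\omega_m\mapsto\widetilde u_{m,\sigma}$. After that change the operator still depends on the integration variable through $V_\sigma(k)$ at \emph{every} site $k\ne m$, so the dependence on the conditioned variable is not rank one and $G^\Lambda_\sigma$ is not a M\"obius function of it. Both your a priori bound (``reduces to the non-interacting rank-one bound up to an $O(|g|)$ correction'') and your decoupling inequality rest on the assertion that this residual, being $O(|g|)$ in norm and exponentially localized, ``only affects the constant.'' That is precisely what needs proof and it does not follow from any soft perturbative argument: the residual moves with the integration variable, resolvents are not perturbative near the spectrum (the deviation of $G^\Lambda_\sigma$ from its rank-one approximant is not uniformly small even if the operators are close in norm), and the weak-$L^1$/spectral-averaging machinery you invoke uses the M\"obius structure essentially. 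You yourself flag the decoupling step as ``the main obstacle'' and then dispose of it only by appeal to $(A_3)$--$(A_4)$, without an argument.

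The paper sidesteps exactly this difficulty by a different conditioning: it changes variables globally to the dressed potentials $U(n)=\omega(n)+\frac{g}{\lambda}V_{\mathrm{eff}}(n)$ and conditions on $\{U(k)\}_{k\ne n_0}$. In these coordinates the Hamiltonian is exactly $H_0+\lambda U$, i.e.\ linear (rank one) in the conditioned variable, so no new decoupling lemma is needed for the large-disorder regime; the whole difficulty is transferred to showing that the conditional density of $U(n_0)$ given the other $U$'s is bounded uniformly in $\Lambda$, which is Lemma \ref{bdddensity}. Proving that bound is the real work of sections \ref{decaysection}--\ref{proofoflemma}: it uses the first- and second-derivative decay of $V_{\mathrm{eff}}$ (Lemma \ref{decay1}), the resampling estimate (Lemma \ref{resamp}), ratio-of-Jacobian-determinant bounds (Lemma \ref{detabsbd}, Corollary \ref{ratiolbound}, Lemma \ref{bddtrace}), and --- because the density-ratio cost in (\ref{flucassump}) is weighted by $\max\{|x|,|x'|\}$ --- the improved $|\omega(n)|$-weighted bounds of section \ref{improvesec}, which exploit specific properties of the Fermi--Dirac function. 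Once Lemma \ref{bdddensity} is in place, Green's function decay at large disorder follows from \cite[Theorem 10.2]{A-W-B}, and the correlator bound follows as you indicate. To make your route rigorous you would either have to supply a genuine proof of your perturbative claim (which would in effect reproduce the paper's determinant and density-ratio analysis) or switch to the paper's conditioning on the $U$ variables.
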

\begin{Rem} It will follow from the proof that the constant $g_d$ in theorem \ref{main} can be taken proportional to $\frac{\eta\left(1-e^{-\nu}\right)^d}{\|F\|_{\infty}}$.
\end{Rem}
\begin{Rem} The constant $g_1$ in theorem \ref{1dloc} can be taken equal to be the minimum among a factor proportional to $\frac{\eta\left(1-e^{-\nu}\right)}{\|F\|_{\infty}}$ and the upper bound obtained in corollary \ref{uniformpos}, which also depends on the lower bound for the Lyapunov exponent of the Anderson model on $\ell^2\left(\mathbb{Z}\right)$.
\end{Rem}

Recall the definition of the integrated density of states for an ergodic operator $H$ :
\begin{equation}N_H(E)=\lim_{|\Lambda|\to \infty}\frac{\mathrm{Tr}P_{(-\infty,E)}(\mathds{1}_{\Lambda}H\mathds{1}_{\Lambda})}{|\Lambda|}.\end{equation}
 For the definition of ergodic operator one may consult \cite[Definition 3.4]{A-W-B}.
In what follows, we denote by $N_0(E)$ the corresponding quantity for the free operator $H_0$ defined above, which is assumed to be ergodic for the result below, where we shall be concerned with the small disorder regime and aim for bounds which do not depend upon $\lambda$ as $\lambda \to 0$.

\begin{thm}\label{thmids}
Assume that $(A_1)-(A_2)$ hold with $x^2\rho(x)$ bounded and that $g^2<\lambda$ . Fix a interval $I$ where $E\mapsto N_0(E)$ is $\alpha_0$-H\"older continuous and a bounded interval $J\subset \mathbb{R}$.
The integrated density of states $N_{\lambda,g}(E)$ of $H_{\mathrm{Hub}}$ is H\"older continuous with respect to $E$ and with respect to the pair $(\lambda,g)$. More precisely:

\begin{enumerate}[label=(\subscript{IDS}{{\arabic*}})]
\item{
For $E,E' \in I$ 
\begin{equation}\label{IDSenergy}
 |N_{\lambda,g}(E)-N_{\lambda,g}(E')|\leq C(\alpha,I,g)|E-E'|^{\alpha}
\end{equation} for $\alpha\in [0,\frac{\alpha_0}{2+\alpha_0}]$ and $C(\alpha,I,g)$ independent of $\lambda$.
}
\item{\label{IDSdisorder} If $\lambda,\lambda'\in J$, we have that, for any $E\in I$, $\alpha\in [0,\frac{\alpha_0}{2+\alpha_0}]$ and $\beta\in [0,\frac{2}{\alpha+3d+4}]$,
\begin{equation}|N_{\lambda,g}(E)-N_{\lambda',g'}(E)|\leq C(\alpha_0.d,I)\left(|\lambda-\lambda'|^{\beta}+|g-g'|^{\beta}\right).
\end{equation}}

\end{enumerate}
\end{thm}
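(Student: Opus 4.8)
The plan, following \cite{H-K-S}, is to sandwich $N_{\lambda,g}$ between two estimates — a Wegner-type bound for $H_{\mathrm{Hub}}$ and a quantitative comparison with the free integrated density of states $N_0$ — and then to recover the stated $\lambda$-uniform H\"older exponents by interpolating these against the $\alpha_0$-H\"older continuity of $N_0$. Three structural facts about the effective potential, taken from Sections~\ref{existHubb} and \ref{improvesec}, will be used throughout: (i) $\|gV_{\uparrow/\downarrow}(\omega)\|_\infty\le|g|\,\|F\|_\infty$, so that under $g^2<\lambda$ the interaction is a perturbation of operator norm at most $\sqrt{\lambda}\,\|F\|_\infty$; (ii) the dependence of $V_{\uparrow/\downarrow}$ on a single variable $\omega_n$ is, via the DuHamel/contour expansion of $F(H_{\downarrow/\uparrow})$ together with the Combes--Thomas bound of Remark~\ref{remark1}, a diagonal correction of size $O(g\lambda\,e^{-c|m-n|})$, hence $o(\lambda)$ once $g^2<\lambda$; (iii) $V^{\lambda,g}_{\uparrow/\downarrow}$ is Lipschitz in $(\lambda,g)$, with a random Lipschitz constant dominated by a smeared potential $1+\sum_m e^{-c|m-n|}|\omega_m|$. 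Ergodicity of $H_{\mathrm{Hub}}$, needed for $N_{\lambda,g}$ to be defined, also comes from Section~\ref{existHubb}.

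For the energy regularity $(\mathrm{IDS}_1)$: by (i)--(ii) one has $\sum_n\partial_{\omega_n}H_{\mathrm{Hub}}\ge\lambda(1-C|g|)\,\mathds{1}\ge\tfrac{\lambda}{2}\,\mathds{1}$ on a neighborhood of $I$, so the standard spectral-averaging proof of the Wegner estimate applies and gives $|N_{\lambda,g}(E)-N_{\lambda,g}(E')|\le C\lambda^{-1}\|\rho\|_\infty\,|E-E'|$. Next, writing $V_\omega=V^{\le T}_\omega+V^{>T}_\omega$: the sites with $|\omega_n|>T$ have density $\lesssim T^{-1}$ (since $x^2\rho(x)$ is bounded) and hence, by eigenvalue interlacing, displace the integrated density of states by at most $O(T^{-1})$, while $H_0+\lambda V^{\le T}_\omega+gV_{\uparrow/\downarrow}$ is a norm-$(\lambda T+\sqrt{\lambda}\,\|F\|_\infty)$ perturbation of $H_0$; the $\alpha_0$-H\"older continuity of $N_0$ then yields $|N_{\lambda,g}(E)-N_0(E)|\le C\big((\lambda T)^{\alpha_0}+\lambda^{\alpha_0/2}+T^{-1}\big)$, which after optimizing $T$ becomes $|N_{\lambda,g}(E)-N_0(E)|\le C\lambda^{\alpha_0/2}$. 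Combining the two bounds: for $|E-E'|\le\lambda^{1/(1-\alpha)}$ the Wegner estimate is already $\le C|E-E'|^{\alpha}$, and for $|E-E'|>\lambda^{1/(1-\alpha)}$ the triangle inequality through $N_0$ gives $C(|E-E'|^{\alpha_0}+\lambda^{\alpha_0/2})\le C|E-E'|^{\alpha}$; the two regimes are compatible exactly when $\alpha\le\frac{\alpha_0}{2+\alpha_0}$, and the ``$2$'' there is precisely the reciprocal of the exponent of the $\sqrt{\lambda}$ produced by $g^2<\lambda$ in the comparison step.

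For the parameter regularity $(\mathrm{IDS}_2)$: write $H^{\lambda,g}_{\mathrm{Hub}}-H^{\lambda',g'}_{\mathrm{Hub}}=(\lambda-\lambda')V_\omega+\big(gV^{\lambda,g}_{\uparrow/\downarrow}-g'V^{\lambda',g'}_{\uparrow/\downarrow}\big)$, a diagonal operator whose $n$-th entry is $\lesssim \Delta\big(1+|\omega_n|+\sum_m e^{-c|m-n|}|\omega_m|\big)$ with $\Delta=|\lambda-\lambda'|+|g-g'|$, using (iii), boundedness of $J$, and $g^2<\lambda$. Truncating this smeared potential at level $T$ (above which its density is $\lesssim T^{-(1-\varepsilon)}$ for every $\varepsilon>0$, again from $x^2\rho(x)$ bounded), feeding the bounded remainder, of norm $\lesssim \Delta(T+1)$, into the energy-H\"older bound of $(\mathrm{IDS}_1)$ for $N_{\lambda',g'}$, and absorbing the rare-site part by interlacing, gives $|N_{\lambda,g}(E)-N_{\lambda',g'}(E)|\lesssim(\Delta T)^{\alpha}+T^{-(1-\varepsilon)}$; optimizing $T$ yields $\lesssim\Delta^{\beta}$. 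Keeping track of the $d$-dimensional lattice sums entering the comparison of the two self-consistent fields $V^{\lambda,g}_{\uparrow/\downarrow}$ and $V^{\lambda',g'}_{\uparrow/\downarrow}$ — the same sums responsible for the factor $(1-e^{-\nu})^{d}$ in the smallness thresholds of Theorems~\ref{1dloc} and \ref{main} — degrades the admissible exponent to $\beta\in[0,\frac{2}{\alpha+3d+4}]$.

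The main obstacle I anticipate is the Wegner estimate for the interacting Hamiltonian. Since the effective field is a non-local and self-consistent function of the disorder, the monotonicity driving the classical proof is not apparent, and one must show that the correction $g\,\partial_{\omega_n}V_{\uparrow/\downarrow}$ to the rank-one term $\lambda\,|\delta_n\rangle\langle\delta_n|$ is genuinely of order $g\lambda$ — the extra $\lambda$ coming from the explicit $\lambda V_\omega$ inside $\partial_{\omega_n}H_{\mathrm{Hub}}$ — so that under $g^2<\lambda$ it remains below the signal; this is exactly where the fluctuation bounds of Section~\ref{improvesec} are essential. A second, pervasive difficulty is that the disorder has unbounded support with only second-moment-type control, so every operator-norm comparison must be preceded by a truncation, and it is the cost of those truncations — not the Wegner estimate itself — that caps the H\"older exponents strictly below $1$.
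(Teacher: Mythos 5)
Your route is genuinely different from the paper's, and it splits into two parts of unequal health. For the energy regularity (\ref{IDSenergy}) your plan---a Wegner bound of size $C\lambda^{-1}|E-E'|$, a comparison $|N_{\lambda,g}-N_0|\lesssim \lambda^{\alpha_0/2}$ obtained by truncating $V_\omega$ at $T\sim\lambda^{-1/2}$ (rank/interlacing for the rare large sites, operator-norm perturbation plus $\alpha_0$-H\"older continuity of $N_0$ for the rest), and an interpolation in the two regimes $|E-E'|\lessgtr\lambda^{1/(1-\alpha)}$---is a legitimate alternative to the paper's argument (which instead splits $\mathrm{Tr}\,P_{\Lambda}(I)$ against the free spectral projections onto $J$ and $J^{c}$ with $|J|=\varepsilon^{\delta}$, uses the resolvent identity and the moment bounds $M_1,M_2$, and optimizes $\delta=\tfrac{1}{2+\alpha_0}$), and your arithmetic does reproduce the exponent $\tfrac{\alpha_0}{2+\alpha_0}$, the factor $2$ coming from $|g|<\sqrt{\lambda}$ in both treatments. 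The weak point is your Wegner input. The derivative $\partial_{\omega_n}H_{\mathrm{Hub}}=\lambda P_n+g\,\mathrm{diag}\big(\partial_{\omega_n}V_{\uparrow/\downarrow}(m)\big)_m$ is \emph{not} sign-definite: the correction sits at sites $m\neq n$ where the rank-one term vanishes, so single-site spectral averaging does not ``standardly apply,'' and lower-boundedness of the \emph{summed} derivative is not by itself the hypothesis of any standard Wegner argument (a vector-field version would need further structure of the unbounded density). The mechanism the paper actually uses is the change of variables $\omega\mapsto U$, $U(n)=\omega(n)+\tfrac{g}{\lambda}V_{\mathrm{eff}}(n)$, and the uniform bound on the conditional density $\rho^{\Lambda}_{n_0}$ of lemma \ref{bdddensity} (fed by lemmas \ref{decay1} and \ref{improvedct}), after which the estimate (\ref{wegner}) follows from \cite[Theorem 4.1]{A-W-B}. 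You should invoke that; as stated, your Wegner step is not a proof.

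For \ref{IDSdisorder} there is a genuine gap. Your truncation of the smeared random Lipschitz constant at level $T$ (your item (iii) is indeed what lemma \ref{comparepotentials} provides) gives, after optimizing $(\Delta T)^{\alpha}+T^{-(1-\varepsilon)}$, an exponent of order $\tfrac{\alpha}{1+\alpha}$ with no dimension dependence; this does not reach the claimed range $\beta\in[0,\tfrac{2}{\alpha+3d+4}]$, which stays bounded away from $0$ as $\alpha\to0$ while your exponent vanishes, and the closing assertion that ``$d$-dimensional lattice sums degrade the exponent to $\tfrac{2}{\alpha+3d+4}$'' is not a mechanism: those geometric sums only enter constants such as $(1-e^{-\nu})^{-d}$, never exponents. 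In the paper the $3d+4$ has a completely different origin: after reducing, via (\ref{IDSenergy}), trace identities and ergodicity, to $\mathbb{E}\,\mathrm{Tr}\,P_0\varphi(H_{\lambda,g})\big(\varphi(H_{\lambda,g})-\varphi(H_{\lambda',g'})\big)P_0$ with a switch function $\varphi$ whose first $3d+4$ derivatives scale like inverse powers of the transition width $|\lambda-\lambda'|^{\delta}+|g-g'|^{\delta}$, one applies the Helffer--Sj\"ostrand formula with an almost-analytic extension of order $3d+3$---the high order being forced precisely by the unboundedness of $V_{\omega}$---together with resolvent expansions, Combes--Thomas bounds, the moment hypothesis on $\rho$, and lemma \ref{comparepotentials}; the exponent then comes from optimizing $\delta=\tfrac{2}{\alpha+3d+4}$. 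So for the second part you must either carry out that Helffer--Sj\"ostrand argument as in \cite{H-K-S}, or settle for the (different, and for small $\alpha_0$ weaker) exponent your elementary scheme actually yields; as written, the stated $\beta$-range is not established.
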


\section{Motivation}\label{motivation}
We shall explain the motivation for the above choice of the effective potential. We are only going to outline the derivation of the self-consistent equations as this is a standard topic, see, for instance, \cite[Chapter 3]{Kurig}.\par Let $\Lambda \subset \mathbb{Z}^d$ be a finite set. Following the notation of 
\cite{B-Lieb-S}, we use $\Gamma$ to denote a one particle density matrix, i.e, a $2\times 2$ matrix whose entries are operators on $\ell^2\left(\Lambda\right)$ and which satisfies $0\leq \Gamma \leq \mathds{1}$. We then write
$$\Gamma
:=
\begin{pmatrix}
\,\Gamma_{\uparrow} & \Gamma_{\uparrow \downarrow} \\
\Gamma_{\downarrow \uparrow} &\Gamma_{\uparrow}\,\\
\end{pmatrix}
$$
where $\Gamma_{\downarrow \uparrow}=\Gamma^{\dag}_{\uparrow \downarrow}$.

  As in \cite[Equation 3a.8]{B-Lieb-S}, the pressure functional $\mathcal{P}(\Gamma)$ is defined as 
 \begin{equation}-\mathcal{P}(\Gamma)=\mathcal{E}(\Gamma)-\beta^{-1}\mathcal{S}(\Gamma).
 \end{equation}
 The energy functional is
 \begin{equation}\label{energy}
 \mathcal{E}(\Gamma)=
 \mathrm{Tr}\left( H_0-\kappa+\lambda V_{\omega}\right)\Gamma
 +g\sum_{n}\mel{n}{\Gamma_{\uparrow}}{n}\mel{n}{\Gamma_{\downarrow}}{n},
\end{equation}
where we have identified $H_0-\kappa+\lambda V_{\omega}$ with
$ \begin{pmatrix}
\,H_0-\kappa+\lambda V_{\omega} &0 \\
0 &H_0-\kappa+\lambda V_{\omega}\,\\
\end{pmatrix}$.\par The entropy is given by
\begin{equation}\mathcal{S}(\Gamma)=-\mathrm{Tr}\left(\Gamma \log \Gamma+(1-\Gamma)\log (1-\Gamma)\right)
.\end{equation}
Generally, the choice of energy functional (\ref{energy}) is referred to as Hartree approximation as exchange terms are neglected. However, in the case of a repulsive interaction among the particles, it is easy to prove that such exchange terms do not affect the choice of minimizer for $-\mathcal{P}(\Gamma)$ and the process may be referred to as the Hartree-Fock approximation. Indeed, the Hartree-Fock energy for the repulsive interaction would incorporate the term $-g|\mel{n}{\Gamma _{\uparrow \downarrow}}{n}|^2$, which is non-negative when $g<0$. Thus, for repusive interactions, off-diagonal terms can be disregarded for minimization purposes, see the analogue discussion in \cite[Section 4a]{B-Lieb-S}.
The minimizer $\Gamma$ of $-\mathcal{P}(\Gamma)$ exists since $\Lambda$ is a finite set. Moreover, it satisfies
\begin{equation}\label{densityHubbard}
\mel{n}{\Gamma_{\uparrow}}{n}=\mel{n}{\frac{1}{1+e^{\beta(H_0-\kappa+\lambda V_{\omega}+\mathrm{Diag}\left(\Gamma_{\downarrow})\right)}}}{n}.
\end{equation}
\begin{equation}
\mel{n}{\Gamma_{\downarrow}}{n}=\mel{n}{\frac{1}{1+e^{\beta(H_0-\kappa+\lambda V_{\omega}+\mathrm{Diag}\left(\Gamma_{\uparrow})\right)}}}{n}.
\end{equation}

Thus, the effective Hamiltonian on $\ell^2\left(\Lambda\right)\oplus \ell^2\left(\Lambda\right)$ is determined by

$$H^{\Lambda}_{\omega}
:=
\begin{pmatrix}
\,H_0+\lambda \omega(n)+gV^{\Lambda}_{\uparrow}(n) & 0 \\
0 & H_0+\lambda \omega(n)+gV^{\Lambda}_{\downarrow}(n)\,\\
\end{pmatrix}
$$

\begin{equation}
\label{potentialHubbard1}
V^{\Lambda}_{\uparrow}(\omega)(n):=\mel{n}{\frac{1}{1+e^{\beta(H_0-\kappa+\lambda \omega+gV_{\downarrow})}}}{n}
\end{equation}
\begin{equation}
\label{potentialHubbard2}
V^{\Lambda}_{\downarrow}(\omega)(n):=\mel{n}{\frac{1}{1+e^{\beta(H_0-\kappa+\lambda \omega+gV_{\uparrow})}}}{n}.
\end{equation}

It will follow from arguments given below that if $\Lambda_R$ is an increasing sequence with 
$\cup_{R\in \mathbb{N}}\Lambda_R=\mathbb{Z}^d$ then, for fixed $m \in \mathbb{Z}^d$, \begin{equation}
\lim_{R\to \infty} V^{\Lambda_R}_{\mathrm{eff}}(m)=V_{\mathrm{eff}}(m)
\end{equation} and this fact ensures that, for localization purposes in the Hubbard model, it suffices to study $H_{\mathrm{Hub}}$ and its finite volume restrictions.

\section{ Outline of the Proof of theorem \ref{main}}
We now want to outline the proof of the theorem \ref{main} in the related model where $H_{\mathrm{Hub}}$ is replaced by the operator
\begin{equation}\label{toymodel}
H=H_0+\lambda\omega(n)+gV_{\mathrm{eff}}(n)
\end{equation} acting on $\ell^2\left(\mathbb{Z}^d\right)$ with
\begin{equation}
\label{potentialHubbard2}
V_{\mathrm{eff}}(n)=\mel{n}{\frac{1}{1+e^{\beta(H_0+\lambda \omega+gV_{\mathrm{eff}})}}}{n}.
\end{equation}
In this case, the correlator is defined as
\begin{equation} Q_{I}(m,n):=\sup_{|\varphi|\leq 1}|\mel{m}{\varphi(H)}{n}|.\end{equation}
where $\varphi$ is Borel measurable and supported on $I$.\par
The above operator exhibits the main mathematical features of the Hubbard model, namely: the effective potential is defined self-consistently as a non-local and non-linear function of $H$. Thus, it is natural to first illustrate our methods here. For now let's assume the existence and uniqueness of $V_{\mathrm{eff}}$ are proven as well as its regularity with respect to $\{\omega(n)\}_{n\in \mathbb{Z}^d}$. Combined with estimates on the derivatives of $V_{\mathrm{eff}}$, the above facts form a significant portion of the proof which is developed in sections \ref{existencesection} and \ref{regularitysection}. The, somewhat straightforward, extension of the proof to $H_{\mathrm{Hub}}$ will be explained in section \ref{hubbardext}. 
A feature which theorem \ref{1dloc} and theorem \ref{main} have in common is that the eigenfunction correlator decay will be achieved via the Green's function of $H^{\Lambda}=\mathds{1}_{\Lambda}H\mathds{1}_{\Lambda}$, which is $H$ restricted to finite sets $\Lambda \subset \mathbb{Z}^d$. Let  \begin{equation}\label{greensdef}
G^{\Lambda}(m,n,z)=\mel{m}{(H^{\Lambda}-z)^{-1}}{n}.
\end{equation}
Using the basics of the fractional moment method, which dates back to \cite{A-M} and \cite{Aiz}, we aim at showing that, for some $s\in(0,1)$,
\begin{equation}\label{Greendecay}\mathbb{E}\left(\Big|G^{\Lambda}(m,n;z)\Big|^s\right)\leq Ce^{-\mu_d|m-n|}
\end{equation}
holds uniformly in $z\in \mathbb{C}^{+}$, with positive constants $C=C(d,s,g,\lambda,\nu,\eta,\|F\|_{\infty})$ and $\mu(d,s,g,\lambda,\nu,\eta,\|F\|_{\infty})$ independent of the volume  $|\Lambda|$. In this context, the Green's function decay expressed by equation (\ref{Greendecay}) implies
\begin{equation}\label{eigencorr}
\mathbb{E}\left(Q(m,n)\right)\leq C'e^{-\mu_{d}'|m-n|}
\end{equation}
 for some exponent $\mu'_d=\mu'(d,s,g,\lambda,\nu,\eta,\|F\|_{\infty})>0$ and $C'=C'(\eta,\nu,d,g,\lambda,s,\|F\|_{\infty})$. This is well known and explained in great generality, for instance, in \cite[Theorem A.1]{A-S-F-H}.\par
 Another aspect which is shared by the proofs of theorems \ref{1dloc} and \ref{main} is that the starting point  to obtain (\ref{Greendecay}) will be the following \emph{a-priori} bound.
 \begin{lem}\label{apriori} Given a finite set $\Lambda \subset \mathbb{Z}^d$, there exist a constant $C_{\mathrm{AP}}=C_{\mathrm{AP}}(\eta,\nu,d,g,\lambda,s,\|F\|_{\infty})$, independent of $\Lambda$, such that
 \begin{equation}\mathbb{E}\left(\Big|G^{\Lambda}(m,n;z)\Big|^s\right)\leq C_{\mathrm{AP}}
\end{equation}
holds for any $m,n\in \Lambda$.
\end{lem}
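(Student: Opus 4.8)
The plan is to prove the a-priori bound $\mathbb{E}(|G^{\Lambda}(m,n;z)|^s) \leq C_{\mathrm{AP}}$ by reducing it to the corresponding (known) bound for the Anderson model $H_{\mathrm{And}}^{\Lambda} = H_0 + \lambda V_{\omega}$ restricted to $\Lambda$, and then absorbing the effective potential $gV_{\mathrm{eff}}$ into the disorder. The key observation is that, conditionally on all the randomness, $V_{\mathrm{eff}}(n)$ is a \emph{bounded} function of $\omega$: indeed $|V_{\mathrm{eff}}(n)| = |\mel{n}{F(H)}{n}| \leq \|F\|_{\infty}$, since $F$ is analytic on the strip $\mathcal{S}$ and bounded there, and $H$ is self-adjoint. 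So the full potential at site $n$ is $\lambda \omega(n) + g V_{\mathrm{eff}}(n)$, where the second summand lies in $[-|g|\|F\|_{\infty}, |g|\|F\|_{\infty}]$. The first step is therefore to record this uniform bound on $V_{\mathrm{eff}}$ (it will be re-derived carefully in the existence/regularity sections, but the crude bound suffices here).

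Next I would invoke the standard a-priori fractional moment bound for Anderson-type operators with an absolutely continuous single-site distribution. Recall (e.g.\ \cite[Chapter 8]{A-W-B}) that if the conditional distribution of the diagonal entry $H(n,n)$ given all other entries has a density $\tilde\rho$ with $\|\tilde\rho\|_{\infty} \leq K$, then for every $s \in (0,1)$ one has $\mathbb{E}(|G^{\Lambda}(m,n;z)|^s) \leq C_s K^s$ uniformly in $z \in \mathbb{C}^+$, in $m,n \in \Lambda$, and in $\Lambda$; this is a rank-one perturbation / spectral-averaging estimate (the Aizenman--Graf--Molchanov--type bound) that does not see the correlations between the potential at different sites. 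The point is that it applies to our operator $H^{\Lambda}$ verbatim once we understand the conditional law of $\lambda\omega(n) + gV_{\mathrm{eff}}(n)$ given $\{\omega(k)\}_{k\neq n}$. The main technical step is thus to show that this conditional law has a density bounded by a volume-independent constant: write $u = \omega(n)$ and view $V_{\mathrm{eff}}(n) = \Phi(u)$ as a function of $u$ with the other variables frozen. If $\Phi$ were Lipschitz in $u$ with constant $L < \lambda/|g|$ — which will follow from the derivative estimates on $V_{\mathrm{eff}}$ established in section \ref{regularitysection}, provided $|g|$ is small (as in the hypotheses of Theorems \ref{main} and \ref{1dloc}) — then the map $u \mapsto \lambda u + g\Phi(u)$ is a bi-Lipschitz change of variables with derivative bounded below by $\lambda - |g|L > 0$, so the pushforward of $\rho(u)\,du$ under it has density at most $\frac{\|\rho\|_{\infty}}{\lambda - |g|L}$. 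Since $(A_3)$ (with $(A_4)$, or directly $x^2\rho(x)$ bounded in the IDS case) forces $\rho$ to be bounded, this gives the desired uniform bound $K = \frac{\|\rho\|_{\infty}}{\lambda - |g|L}$.

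The step I expect to be the main obstacle is the Lipschitz (or $C^1$) control of $u \mapsto V_{\mathrm{eff}}(n)$ with a \emph{volume-independent} constant strictly less than $\lambda/|g|$. This is genuinely nontrivial because $V_{\mathrm{eff}}$ is defined only implicitly as a fixed point of a nonlinear, non-local map, so one must differentiate through the self-consistency equation $V_{\mathrm{eff}}(n) = \mel{n}{F(H_0 + \lambda\omega + gV_{\mathrm{eff}})}{n}$, control the resulting Neumann-type series in $g$ using the Combes--Thomas decay (Remark \ref{remark1}) and assumption $(A_1)$ on $\zeta(\nu)$, and check that the implicit-function iteration converges uniformly in $\Lambda$. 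I would carry this out by representing $F(H)$ via the Helffer--Sjöstrand formula or a contour integral over the strip $\mathcal{S}$, differentiating the resolvent $(H-z)^{-1}$ in $\omega(n)$ (a rank-one perturbation, giving a $\mel{n}{(H-z)^{-1}}{\cdot}\mel{\cdot}{(H-z)^{-1}}{n}$ factor that is summable by Combes--Thomas), and bounding the feedback term $g\,\partial_{\omega(n)}V_{\mathrm{eff}}$ by a geometric series whose ratio is $O(g\|F\|_{\infty}/\eta)$, hence $<1$ for $|g|$ below the threshold in the theorems. Once that estimate is in hand — and it is exactly the content of the derivative bounds promised in sections \ref{existencesection}--\ref{regularitysection}, which we are permitted to assume — the a-priori bound follows by combining it with the spectral-averaging estimate as above, with $C_{\mathrm{AP}} = C_s \big(\tfrac{\|\rho\|_{\infty}}{\lambda - |g|L}\big)^s$, manifestly independent of $\Lambda$.
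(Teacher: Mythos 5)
There is a genuine gap at the step where you ``invoke the standard a-priori fractional moment bound \ldots verbatim once we understand the conditional law of $\lambda\omega(n)+gV_{\mathrm{eff}}(n)$ given $\{\omega(k)\}_{k\neq n}$.'' The spectral-averaging estimate you quote requires that, once you condition, the only thing left to vary is the single diagonal entry at the site in question, with the \emph{rest of the operator frozen}; equivalently, for correlated potentials it requires a bound on the conditional density of the potential value at $n_0$ given the potential values $\{U(n)\}_{n\neq n_0}$ \emph{at all other sites}. Conditioning on $\{\omega(k)\}_{k\neq n}$ does not achieve this: when $\omega(n)$ varies with the other $\omega$'s fixed, every diagonal entry $\lambda\omega(m)+gV_{\mathrm{eff}}(m)$, $m\neq n$, moves as well, because $V_{\mathrm{eff}}$ is a non-local function of the whole configuration. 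So the perturbation is not rank one (or rank two for off-diagonal elements), the Green's function is not a M\"obius/Herglotz function of your single variable $u\mapsto\lambda u+g\Phi(u)$ with everything else fixed, and your one-dimensional bi-Lipschitz pushforward computes the density of the wrong conditional law (given the other $\omega$'s, not given the other entries of the potential). This is precisely the difficulty the self-consistent field creates, and it is why the bounded Lipschitz constant for $u\mapsto V_{\mathrm{eff}}(n)$, which you correctly anticipate from the derivative estimates, is not by itself enough.

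The paper's route is different in exactly this respect: it performs the \emph{global} change of variables $\mathcal{T}:\omega\mapsto U$, $U(n)=\omega(n)+\tfrac{g}{\lambda}V_{\mathrm{eff}}(n,\omega)$, on all of $\Lambda$, so that $H^{\Lambda}=H_0+\lambda U$ is an Anderson-type operator in the new variables, and then proves (lemma \ref{bdddensity}) that the conditional density of $U(n_0)$ given $\{U(n)\}_{n\neq n_0}$ is bounded uniformly in $\Lambda$. With that conditioning the resampling of $U(n_0)$ really is a rank-one perturbation, and the standard a-priori fractional moment bound applies. The content you are missing is the proof of that conditional-density bound: it requires the Jacobian of $\mathcal{T}$, a control of the ratio of determinants $\big|\det\big(I+\tfrac{g}{\lambda}\partial V_{\mathrm{eff}}/\partial U\big)\big|$ under resampling at $n_0$ (lemmas \ref{detabsbd}, \ref{bddtrace}, using the first and second derivative decay of lemma \ref{decay1} and the improved bounds of lemma \ref{improvedct}), and a control of the ratio of products of single-site densities via the fluctuation assumptions $(A_3)$--$(A_4)$; none of these ingredients appears in your argument, and without them the uniform-in-$\Lambda$ bound on the relevant conditional density is not established.
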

The proof of lemma \ref{apriori} will follow from lemma \ref{bdddensity} below. Let
\begin{equation}\label{U}
 U_{\omega}(n)=\omega(n)+\frac{g}{\lambda}V_{\mathrm{eff}}(n,\omega).
 \end{equation}
 be the ``full'' potential at site $n$.
 From now on, to keep the notation simple, we drop the dependence on $\omega$ in the new variables $\{U(n)\}_{n\in \Lambda}$. Note that $U(n)$ and $U(m)$ are correlated for all values of $m$ and $n$. The strategy is to show that, for $g$ sufficiently small, they still behave as if they were independent in the following sense:

\begin{lem}\label{bdddensity}
Fix $\Lambda \subset \mathbb{Z}^d$ finite and $n_0\in \Lambda$. The conditional distribution of $U(n_0)=u$ at specified values of $\{U(n)\}_{n\in {\Lambda \setminus\{n_0\}}}$  has density ${\rho}^{\Lambda}_{n_0}$. Moreover, under assumptions $(A_1)-(A_4)$ we have that
\begin{equation}\sup_{\Lambda}\sup_{n_0\in \Lambda} \sup_{u\in \mathbb{R}}{\rho}^{\Lambda}_{n_0}(u)<\infty.
\end{equation}
If, additionally, assumption \ref{momentassumption} holds then ${\rho}^{\Lambda}_{n_0}(u)\in L^{1}\left(\mathbb{R},|x|^{\varepsilon}dx\right)$.
 \end{lem}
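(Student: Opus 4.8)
The plan is to reduce everything to a change of variables between the bare potential $\{\omega(n)\}$ and the full potential $\{U(n)\}$, fixing all but one coordinate. First I would fix $\Lambda$ finite and $n_0\in\Lambda$, and view $U(n_0)$ as a function of $\omega(n_0)$ with $\{\omega(n)\}_{n\neq n_0}$ held fixed. From the definition \eqref{U} we have $U(n_0)=\omega(n_0)+\tfrac{g}{\lambda}V_{\mathrm{eff}}(n_0,\omega)$, so the key quantity is the derivative $\partial U(n_0)/\partial\omega(n_0)=1+\tfrac{g}{\lambda}\,\partial_{\omega(n_0)}V_{\mathrm{eff}}(n_0,\omega)$. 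For $g$ small (and here $|g|/\lambda$ bounded by the smallness hypotheses on $g_d$) the regularity estimates on $V_{\mathrm{eff}}$ developed in sections \ref{existencesection}--\ref{regularitysection} should give $|\partial_{\omega(n_0)}V_{\mathrm{eff}}(n_0,\omega)|\leq$ (a constant depending on $\eta,\|F\|_\infty,\nu$) uniformly in $\omega$ and $\Lambda$; hence the map $\omega(n_0)\mapsto U(n_0)$ is a strictly monotone $C^1$ bijection of $\mathbb{R}$ with derivative bounded below by $1-|g|/\lambda\cdot(\text{const})>\tfrac12$, say. Then the conditional density of $U(n_0)$ is
\begin{equation}
{\rho}^{\Lambda}_{n_0}(u)=\rho\bigl(\omega(n_0)(u)\bigr)\cdot\Bigl|\frac{\partial\omega(n_0)}{\partial U(n_0)}\Bigr|
=\frac{\rho(\omega(n_0)(u))}{\bigl|1+\tfrac{g}{\lambda}\partial_{\omega(n_0)}V_{\mathrm{eff}}\bigr|},
\end{equation}
where $\omega(n_0)(u)$ is the inverse function. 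The lower bound on the Jacobian immediately gives ${\rho}^{\Lambda}_{n_0}(u)\leq 2\sup_x\rho(x)$, but $\rho$ need not be bounded under $(A_1)$--$(A_4)$, so this crude bound is not enough; this is where assumptions $(A_3)$--$(A_4)$ enter.

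The main obstacle, and the reason for the nonstandard hypotheses, is controlling $\rho(\omega(n_0)(u))$ uniformly in $u$ when $\rho$ itself is unbounded. Here I would exploit the fact that $\omega(n_0)(u)$ and $u$ differ by the bounded-in-a-weak-sense quantity $\tfrac{g}{\lambda}V_{\mathrm{eff}}$, which is itself an average of $F$ evaluated at the Hamiltonian. The fluctuation assumption \eqref{flucassump} says $\rho(x)/\rho(x')\geq e^{-c_1|x-x'|(1+c_2\max\{|x|,|x'|\})}$, i.e. $\rho$ cannot drop off too fast; dually it cannot be too concentrated. The definition \eqref{upperdensity}--\eqref{fluctintegral3} of $\overline{\rho}$ is precisely engineered so that $\rho(x)\leq \overline{\rho}(x)\int\rho(\alpha)h(x-\alpha)\,d\alpha$, and then $(A_4)$ (boundedness of $\overline{\rho}$) converts pointwise control of $\rho$ at the shifted argument into control by a convolution of $\rho$ against the integrable kernel $h$, which is uniformly bounded. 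The argument is: after conditioning, ${\rho}^{\Lambda}_{n_0}(u)$ is obtained from $\rho$ by the monotone change of variables above, and the extra $\tfrac{g}{\lambda}V_{\mathrm{eff}}$ term — whose dependence on the remaining $\{\omega(n)\}_{n\neq n_0}$ we must also track when we take conditional expectations — gets absorbed into the smoothing window of width governed by $h$. One shows $\sup_u {\rho}^{\Lambda}_{n_0}(u)\leq 2\,\|\overline{\rho}\|_\infty\cdot\sup_x\int\rho(\alpha)h(x-\alpha)\,d\alpha / (\text{something})$, and the supremum over $x$ of the convolution is finite and $\Lambda$-independent. I expect the delicate point to be making the conditioning rigorous: the density ${\rho}^{\Lambda}_{n_0}$ is a conditional density, so one integrates the joint density against the fixed values of the other $U(n)$'s, and one must check that the Jacobian of the full map $\{\omega(n)\}\mapsto\{U(n)\}$ factors appropriately — this should follow because, as a function of $\omega(n_0)$ alone, only the $n_0$-row matters, and the off-diagonal dependence is handled by the multivariate regularity bounds on $\nabla V_{\mathrm{eff}}$.

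For the last assertion, assuming $(A_5)$: I would show $\int|u|^\varepsilon{\rho}^{\Lambda}_{n_0}(u)\,du<\infty$ uniformly in $\Lambda,n_0$. Since $u=\omega(n_0)+\tfrac{g}{\lambda}V_{\mathrm{eff}}(n_0)$ and $|V_{\mathrm{eff}}|\leq\|F\|_\infty$ is bounded, $|u|^\varepsilon\leq 2^\varepsilon(|\omega(n_0)|^\varepsilon+(\tfrac{|g|}{\lambda}\|F\|_\infty)^\varepsilon)$, so the $\varepsilon$-moment of ${\rho}^{\Lambda}_{n_0}$ is controlled by the $\varepsilon$-moment of the density of $\omega(n_0)$ under this conditioning, plus a constant. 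Running the same change-of-variables and convolution-domination argument as above, but now integrated against $|u|^\varepsilon\,du$, reduces this to $\int|x|^\varepsilon\overline{\rho}(x)\,dx<\infty$ together with a finite $\varepsilon$-moment of the smoothing kernel $h$ (Gaussian or exponential, hence all moments finite); this gives the claimed $L^1(\mathbb{R},|x|^\varepsilon dx)$ membership with an $\Lambda$-independent bound. Lemma \ref{apriori} then follows by the standard argument: condition on $\{U(n)\}_{n\neq n_0}$, use rank-one resolvent identities to express $G^{\Lambda}(m,n;z)$ in terms of $U(n_0)$ through a Cauchy-type (Herglotz) structure, and integrate $|G|^s$ against the bounded conditional density ${\rho}^{\Lambda}_{n_0}$ using the elementary bound $\int |a+t|^{-s}\mu(t)\,dt\leq C_s\|\mu\|_\infty^{s}\cdot(\text{const})$ valid for $s\in(0,1)$.
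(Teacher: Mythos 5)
There is a genuine gap, and it sits at the very first step: the lemma asserts a bound on the conditional density of $U(n_0)$ at specified values of $\{U(n)\}_{n\in\Lambda\setminus\{n_0\}}$, but your change of variables computes a different object, namely the law of $U(n_0)$ when the \emph{bare} variables $\{\omega(n)\}_{n\neq n_0}$ are frozen. Because $\mathcal{T}$ is not coordinate-wise --- every $U(n)$, $n\neq n_0$, depends on $\omega(n_0)$ through $V_{\mathrm{eff}}(n,\omega)$ --- conditioning on the other $U$'s is not the same as conditioning on the other $\omega$'s, and the one-dimensional formula $\rho^{\Lambda}_{n_0}(u)=\rho(\omega(n_0)(u))\,\bigl|1+\tfrac{g}{\lambda}\partial_{\omega(n_0)}V_{\mathrm{eff}}\bigr|^{-1}$ is not the conditional density in the lemma. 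The correct expression (equation (\ref{cdensU}) in the paper) is a ratio: the full $|\Lambda|$-dimensional joint density, including the determinant $\det\bigl(I+\tfrac{g}{\lambda}\partial V_{\mathrm{eff}}/\partial U\bigr)$ and the product of $\rho$ over \emph{all} sites, divided by its integral over the $n_0$-th coordinate. When $U(n_0)$ is resampled to $\alpha$ with the other $U$'s held fixed, \emph{all} the $\omega(n)$'s move (this is precisely what lemma \ref{resamp} quantifies), so the factors $\rho(\omega(n))$ for $n\neq n_0$ do not cancel between numerator and denominator, and the Jacobian does not ``factor through the $n_0$-row'' as you suggest. Your remark that the off-diagonal dependence ``is handled by the multivariate regularity bounds'' is exactly the part that needs a proof and is where the real work lies.

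Concretely, two mechanisms are missing from your proposal. First, one must bound the ratio of Jacobian determinants at the original and resampled configurations; the paper does this with the trace-norm estimate of lemma \ref{detabsbd} and corollary \ref{ratiolbound}, fed by lemma \ref{bddtrace}, producing the factor $D(\alpha)=e^{-|g|^2 C_3(|\alpha-U(n_0)|+C_4)}$. Second, one must bound from below the product $\prod_{n\neq n_0}\rho(\omega_{\alpha}(n))/\rho(\omega(n))$; this is where the fluctuation assumption (\ref{flucassump}) enters, combined with the exponential-decay resampling estimates of lemmas \ref{resamp} and \ref{improvedct} (the latter is essential when $c_2(\rho)>0$, since the bound involves $|\omega(n)|\,|\omega_{\alpha}(n)-\omega(n)|$ and hence terms quadratic in $|\alpha-U(n_0)|$). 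Only after these two lower bounds does the denominator dominate a constant times $(\rho\ast h)(u)$, so that $\rho^{\Lambda}_{n_0}(u)\lesssim\overline{\rho}$ at a shifted argument and $(A_4)$ gives uniform boundedness; the role you assign to $(A_3)$/$(A_4)$ (taming an unbounded $\rho$ via convolution smoothing of the shift $\tfrac{g}{\lambda}V_{\mathrm{eff}}$, which is anyway bounded by $\|F\|_{\infty}$) misidentifies what the assumptions are for. The same misidentification undermines your moment argument: the $(A_5)$ claim does follow once one has $\rho^{\Lambda}_{n_0}(u)\leq\rho\bigl(u-\tfrac{g}{\lambda}V_{\mathrm{eff}}(n_0)\bigr)/C_{\mathrm{fluct}}(u)$ with $C_{\mathrm{fluct}}$ comparable to $(\rho\ast h)(u)$, but not from the one-dimensional change of variables you propose.
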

 
The proof of the above result is detailed in section \ref{proofoflemma}; it requires exponential decay of $|\frac{\partial V_{\mathrm{eff}}(n)}{\partial \omega(m)}|$ and $|\frac{\partial^2 V_{\mathrm{eff}}(n)}{\partial \omega(m)\omega(l)}|$ with respect to $|m-n|$ and $|m-n|+|l-n|$, respectively. The need for this decay is the main reason to require $\beta>0$ or, in other words, to require analiticity of $F$ on a strip. The intuitive explanation for lemma \ref{bdddensity} is that the random variables $U(n)$ and $U(n_0)$ decorrelate in a strong fashion as $|n-n_0|$ becomes large. 
Lemma \ref{bdddensity} implies (\ref{Greendecay}) for any $0<s<1$ as long as  $\lambda$ is taken sufficiently large,
see \cite[Theorem 10.2]{A-W-B}. 

The proof of theorem \ref{1dloc} will require additional efforts involving tools which are specific to one dimension, which we shall comment on below.
\section{One dimensional aspects: strategy of the proof of theorem \ref{1dloc}}\label{1dideas}

The argument for proving theorem \ref{1dloc} follows closely the approach in the proof of theorem 12.11 in \cite{A-W-B}, which we now recall.
\subsection{Main ideas in the i.i.d case}
In the reference \cite[Chapter 12]{A-W-B}  Green's function decay is described in terms of the moment generating function, defined by
  \begin{equation}\label{momentgen}
 \varphi(s,z) =\lim_{|n|\to \infty}\frac{\ln \mathbb{E}{\left(|G(0,n;z)|^s\right)}}{|n|}.
 \end{equation}
 The existence of the above quantity for all $z\in \mathbb{C}^{+}$ and $s\in (0,1)$ and its relationship to the Lyapunov exponent are a consequence of Fekete's lemma:
 \begin{lem}[Fekete]{\label{Feketeclass}}
Let $\{a_n\}_{n\in \mathbb{N}}$ be a sequence of real numbers such that, for every pair $(m,n)$ of natural numbers,
\begin{equation}\label{subadc}
a_{n+m}\leq a_n+a_m
\end{equation}
Then, $\alpha=\lim_{n\to \infty} \frac{a_n}{n}$
exists and equals $\inf_{n\in \mathbb{N}}\frac{a_n}{n}$.
\end{lem}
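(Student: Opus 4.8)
The statement to prove is the classical subadditivity (Fekete) lemma, so the plan is to run the standard argument, paying attention only to the mild bookkeeping needed to make it self-contained. Set $\alpha := \inf_{n\in\mathbb{N}} a_n/n$, which a priori lies in $[-\infty,+\infty)$; I will show directly that $a_n/n \to \alpha$, which simultaneously establishes existence of the limit and identifies its value. Since $a_n/n \geq \alpha$ holds for every $n$ by definition of the infimum, we have $\liminf_{n\to\infty} a_n/n \geq \alpha$ for free, and it suffices to prove the matching bound $\limsup_{n\to\infty} a_n/n \leq \alpha$.

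The first step is to iterate the hypothesis (\ref{subadc}): by induction on $k$ one gets $a_{km} \leq k\,a_m$ for all $k,m\in\mathbb{N}$, the inductive step being $a_{(k+1)m} = a_{km+m} \leq a_{km} + a_m \leq (k+1)a_m$. Now fix $m\in\mathbb{N}$. For $n > m$ perform Euclidean division writing $n = k_n m + r_n$ with $k_n \geq 1$ and remainder $r_n \in \{1,\dots,m\}$ (taking the remainder in this range rather than $\{0,\dots,m-1\}$ avoids ever referring to $a_0$). Applying (\ref{subadc}) once more to the pair $(k_n m, r_n)$ and then the iterated bound,
\begin{equation*}
a_n \;\leq\; a_{k_n m} + a_{r_n} \;\leq\; k_n\, a_m + a_{r_n}.
\end{equation*}
Dividing by $n$ and using that $k_n m/n \to 1$ as $n\to\infty$ (since $n - m \leq k_n m \leq n-1$), while $a_{r_n}$ takes only the finitely many values $a_1,\dots,a_m$ and hence $a_{r_n}/n \to 0$, we obtain $\limsup_{n\to\infty} a_n/n \leq a_m/m$.

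Since $m$ was arbitrary, $\limsup_{n\to\infty} a_n/n \leq \inf_{m\in\mathbb{N}} a_m/m = \alpha$, and combining with the lower bound from the first paragraph gives $\lim_{n\to\infty} a_n/n = \alpha = \inf_{n\in\mathbb{N}} a_n/n$, as claimed. If $\alpha$ happens to be $-\infty$, the very same chain of inequalities (take $m$ with $a_m/m$ arbitrarily negative) shows $a_n/n \to -\infty$, so the conclusion persists with the natural convention for the infimum. There is essentially no obstacle in this proof; the only points requiring a line of care are the choice of remainder range, the edge case $n \leq m$ (finitely many terms, irrelevant to the $\limsup$), and the degenerate case $\alpha = -\infty$. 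In the paper this lemma is applied to $a_n = \ln \mathbb{E}\!\left(|G(0,n;z)|^s\right)$ taken along a fixed lattice direction, where (\ref{subadc}) follows — up to an additive constant that can be absorbed — from the resolvent identity and independence, thereby guaranteeing that the moment generating function $\varphi(s,z)$ in (\ref{momentgen}) is well defined for every $z\in\mathbb{C}^{+}$ and $s\in(0,1)$.
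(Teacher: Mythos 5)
Your argument is the standard and correct proof of Fekete's subadditivity lemma (Euclidean division $n=k_nm+r_n$, iterate subadditivity, take the $\limsup$ and then the infimum over $m$), with the edge cases (remainder range avoiding $a_0$, and $\alpha=-\infty$) handled properly. The paper states this classical lemma without proof, so there is no alternative argument to compare against; your proof fills that in correctly, and your closing remark about the application is consistent with the paper's observation that in the i.i.d.\ case subadditivity holds only up to an additive constant, which can be absorbed as you say.
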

It is an elementary observation that if, instead, the sequence $\{a_n\}_{n\in \mathbb{N}}$ satisfies $a_{n+m}\leq a_n+a_m+C$ then the above result applies to $b_n:=a_n+C$ and that an analogous statement holds for superadditive sequences, which satisfy (\ref{subadc}) with the inequality reversed.
In the i.i.d. context, the sequence
$a_n=\ln \mathbb{E}\left(|G(0,n;z)|^s\right)$ is shown to be both subbaditive and superadditive, meaning that there exist constants $C_{-}(s,z)$ and $C_{+}(s,z)$ for which
\begin{equation}\label{upperandloweradd}
a_n+a_m+C_{-} \leq a_{n+m}\leq a_n+a_m+C_{+}.
\end{equation}
holds for all $m,n \in \mathbb{N}$, see \cite[Lemma 12.10]{A-W-B}.
A consequence of this fact, together with a precise control of the arising constants, is stated below.
\begin{lem}\label{greenmoment} \cite[Theorem 12.8]{A-W-B}
 For any $z\in \mathbb{C}^{+}$, there are 
 $c_s(z),C_s(z)\in (0,\infty)$ such that for all $n\in \mathbb{Z}$
 \begin{equation}
c^{-1}_{s}(z)e^{\varphi(s,z)|n|}\leq \mathbb{E}\left(|G_{\mathrm{And}}(0,n;z)|^s\right)\leq C_s(z)e^{\varphi(s,z)|n|}.
\end{equation} Moreover, for any compact set $K\subset \mathbb{R}$ and $S\subset [-1,1)$, we have the local uniform bound \begin{equation}
\sup_{s\in S}\sup_{z\in K+i(0,1]}\max\{c_s(z),C_s(z)\}<\infty
\end{equation} 
and the same result holds with $z$ replaced by its boundary value $E+i0$ for Lebesgue almost every $E$.
\end{lem}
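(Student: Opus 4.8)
The plan is to prove Lemma~\ref{greenmoment} by establishing the two-sided subadditivity/superadditivity relation \eqref{upperandloweradd} for the sequence $a_n = \ln \mathbb{E}\left(|G_{\mathrm{And}}(0,n;z)|^s\right)$, and then extracting the sharp constants $c_s(z), C_s(z)$ from it. First I would fix $z = E + i\varepsilon$ with $\varepsilon > 0$ and use the resolvent (Schur complement / geometric) expansion to decouple the Green's function at an intermediate site. Concretely, for $0 < k < n$ one writes $G_{\mathrm{And}}(0,n;z)$ in terms of $G$ restricted to $\{0,\dots,k\}$ and $G$ restricted to $\{k,\dots,n\}$ via a finite-rank perturbation formula, using the one-dimensional structure (nearest-neighbor hopping means the "boundary" between the two blocks is a single bond). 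This produces an identity of the schematic form $G(0,n;z) = G_{[0,k]}(0,k;z)\,T\,G_{[k,n]}(k,n;z)\cdot(1 + \text{correction})$, where $T$ is the hopping matrix element. Taking $s$-th moments, using $|a+b|^s \le |a|^s + |b|^s$ for the upper bound and a reverse inequality plus independence of the two blocks (the potential values in $[0,k]$ and $[k,n]$ are independent) for the lower bound, yields \eqref{upperandloweradd} with explicit $C_\pm(s,z)$. This is essentially \cite[Lemma 12.10]{A-W-B} and I would cite it, but I would reproduce the decoupling step because the explicit form of $C_\pm(s,z)$ is what drives the uniformity claims.

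Next, I would apply Fekete's lemma (Lemma~\ref{Feketeclass}) in the form noted in the remark after it: since $a_{n+m} \le a_n + a_m + C_+$, the shifted sequence $a_n + C_+$ is genuinely subadditive, so $\varphi(s,z) = \lim_n a_n/n$ exists and equals $\inf_n (a_n + C_+)/n$; the superadditive bound $a_{n+m} \ge a_n + a_m + C_-$ gives the matching lower control. Iterating the two-sided relation along a telescoping decomposition of $n$ into unit steps (or into blocks) and summing the geometric-type error contributions gives, for every $n$,
\begin{equation}
\varphi(s,z)\,n + C_-' \le a_n \le \varphi(s,z)\,n + C_+',
\end{equation}
which upon exponentiating is exactly the assertion with $c_s(z) = e^{-C_-'}$, $C_s(z) = e^{C_+'}$. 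The key point is to track that $C_\pm'$ depend only on $C_\pm(s,z)$, the a-priori single-site fractional-moment bound $\mathbb{E}(|G(0,0;z)|^s)$, and the Combes--Thomas one-step bound.

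For the local uniformity over $s \in S$ and $z \in K + i(0,1]$, I would show that $C_\pm(s,z)$ and the a-priori bounds are themselves locally bounded in these parameters. The a-priori fractional moment bound $\sup_z \mathbb{E}(|G(m,n;z)|^s) < \infty$ is uniform in $z \in \mathbb{C}^+$ by the standard argument (here one uses assumption $(A_2)$: bounded density $\rho$, via the rank-one / spectral-averaging estimate $\mathbb{E}(|G(n,n;z)|^s) \le \tfrac{C_s}{\pi}\|\rho\|_\infty^s$), and these estimates are continuous in $s$ on compact subsets of $[0,1)$; the decoupling constants $C_\pm(s,z)$ are built from the same ingredients plus the hopping amplitude, hence also locally bounded. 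Passing to the boundary value $E + i0$ for a.e.\ $E$ follows because $G(m,n;E+i\varepsilon) \to G(m,n;E+i0)$ for a.e.\ $E$ (limiting absorption / the fact that the spectral measures are a.e.\ absolutely continuous along the imaginary direction in expectation), and the $s$-th moments, being uniformly bounded, converge by dominated convergence; Fatou handles the lower bound. I expect the main obstacle to be the \emph{lower} (superadditive) bound: getting a reverse $|a+b|^s \gtrsim |a|^s + |b|^s$-type estimate requires controlling the "correction" term in the decoupling identity from below, i.e.\ ruling out destructive cancellation between the leading term and the correction. In one dimension with a single connecting bond this is manageable — one conditions on the block to the right, uses that its contribution is a fixed (conditionally deterministic) multiplier, and then the remaining expectation over the left block is a genuine fractional moment of a Green's function that is bounded below by the a-priori estimates — but it is where the argument is most delicate and where the explicit dependence of $c_s(z)$ on the disorder distribution enters.
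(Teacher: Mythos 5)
Your proposal follows essentially the same route the paper relies on: the paper does not prove this lemma itself but quotes it from \cite[Theorem 12.8]{A-W-B}, presenting it precisely as a consequence of the two-sided (sub/superadditive) estimate (\ref{upperandloweradd}) from \cite[Lemma 12.10]{A-W-B} ``together with a precise control of the arising constants,'' which is exactly the decoupling--Fekete--constant-tracking argument you reconstruct, including the boundary-value passage by Fatou/dominated convergence. Your outline is sound (modulo minor points: it is the subadditive relation that yields the lower bound on $a_n$ and the superadditive one the upper bound, and the stated uniformity for $s\in[-1,0)$ would additionally need a fractional moment bound on the potential), so no substantive divergence from the paper's (cited) proof.
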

On the other hand, for fixed $z\in \mathbb{C}^{+}$,  $\varphi(s,z)$ is shown to be convex function of $s$ and non-increasing in $[-1,+\infty)$, with its derivative at $s=0$ satisfying $\frac{\partial \varphi(0,z)}{\partial s}=-\mathcal{L}(z)$. It is a consequence of these facts that for almost every $E\in \mathbb{R}$ there exists a value $s=s(E)\in (0,1)$ such that
 \begin{equation}\label{boundmomentgeniid}
 \varphi(s,E)\leq -\frac{s}{2}\mathcal{L}(E).
 \end{equation}
 The above is the content of \cite[Equation (12.86)]{A-W-B}. Dynamical localization is shown to hold locally as a consequence of the inequality (\ref{upperandloweradd}) along with  lemma \ref{greenmoment}, the inequality (\ref{boundmomentgeniid}) and Kotani theory, which establishes that $\mathcal{L}(E)$ is positive for almost every $E\in \mathbb{R}$.
\subsection{Modifications}
In this section we will outline the proof theorem \ref{1dloc} with $H_{\mathrm{Hub}}$ again replaced by the operator $H$ on $\ell^2\left(\mathbb{Z}^d\right)$ defined in (\ref{toymodel}). For simplicity we set $\lambda=1$ since the disorder strenght does not play an important role in theorem \ref{1dloc}. Let $H_{+}=H_{[0,\infty)\cap\mathbb{Z}}$ be the restriction of $H$ to $\ell^2\left(\mathbb{Z}^{+}\right)$ and denote by $G^{+}(m,n;z)$ the Green's function of $H^{+}$. Recall the definition of the Lyapunov exponent: initially, for $z\in \mathbb{C}^{+}$, we let \begin{equation}
\mathcal{L}(z)=-\mathbb{E}\left(\ln|G^{+}(0,0;z)|\right).
\end{equation} By Herglotz theory (see, for instance, \cite[Appendix B]{A-W-B} and references therein) it is seen that, for Lebesgue almost every $E\in \mathbb{R}$, $\mathcal{L}(E)$ is well defined as $\lim_{\delta \to 0^{+}} \mathcal{L}(E+i\delta)$. Finally, recall the uniform positivity of the Lyapunov exponent for the Anderson model on $\ell^2\left( \mathbb{Z}\right)$:
 \begin{equation}
 \mathrm{ess}\inf_{E\in \mathbb{R}}\mathcal{L}_{\mathrm{And}}(E)>\mathcal{L}_{\mathrm{And}}
 \end{equation}
 for some $\mathcal{L}_{\mathrm{And}}>0$. The first step towards Green's function decay (\ref{Greendecay}) will be showing uniform positivity of $\mathcal{L}(E)$, which is accomplished by the following.
\begin{thm}\label{uniformposh} There exists a constant $C_{\mathrm{Lyap}}(s,\eta,g,\|F\|_{\infty})>0$ such that
\begin{equation}
|\mathcal{L}(z)-\mathcal{L}_{\mathrm{And}}(z)|\leq C_{\mathrm{Lyap}}|g|^s
\end{equation} for all $z\in \mathbb{C}^{+}$.

\end{thm}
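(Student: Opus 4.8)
The plan is to exploit the definition $\mathcal{L}(z) = -\mathbb{E}\left(\ln|G^{+}(0,0;z)|\right)$ together with the resolvent identity and the smallness of the perturbation $gV_{\mathrm{eff}}$. Write $H^+ = H^+_{\mathrm{And}} + gV_{\mathrm{eff}}$ on $\ell^2(\mathbb{Z}^+)$, where $H^+_{\mathrm{And}}$ is the half-line Anderson operator and $V_{\mathrm{eff}}$ is the bounded effective potential (recall $\|V_{\mathrm{eff}}\|_\infty \leq \|F\|_\infty$). The difference $\mathcal{L}(z) - \mathcal{L}_{\mathrm{And}}(z)$ is controlled by $\mathbb{E}\left(\big|\ln|G^{+}(0,0;z)| - \ln|G^{+}_{\mathrm{And}}(0,0;z)|\big|\right)$, so the task reduces to an averaged Lipschitz-type bound on $\ln|G^+(0,0;z)|$ under the change $H^+_{\mathrm{And}} \rightsquigarrow H^+$. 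The natural route is to use the elementary inequality $\big|\ln a - \ln b\big| \leq |a-b|/\min\{a,b\} \le \tfrac{|a^s-b^s|}{s\,\min\{a,b\}^s}$ — or, more cleanly, to pass directly through fractional moments: first establish a bound on $\mathbb{E}\big(\big| |G^+(0,0;z)|^{-s} - |G^+_{\mathrm{And}}(0,0;z)|^{-s}\big|\big)$ and a matching bound for the $+s$ power, and then convert these into a bound on the expected difference of logarithms via the fact (already available in the paper's toolkit) that both $|G^+(0,0;z)|^{\pm s}$ have bounded expectation uniformly in $z\in\mathbb{C}^+$, which is where assumptions $(A_1)$–$(A_4)$ and the a priori bound of Lemma \ref{apriori} enter.

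The key steps, in order. First, the resolvent identity $(H^+-z)^{-1} - (H^+_{\mathrm{And}}-z)^{-1} = -g\,(H^+-z)^{-1} V_{\mathrm{eff}} (H^+_{\mathrm{And}}-z)^{-1}$ gives
\begin{equation}
G^+(0,0;z) - G^+_{\mathrm{And}}(0,0;z) = -g \sum_{k} G^+(0,k;z)\, V_{\mathrm{eff}}(k)\, G^+_{\mathrm{And}}(k,0;z),
\end{equation}
and one bounds the right side in $s$-th moment using $|a+b|^s \le |a|^s + |b|^s$, Hölder in the sum against the Combes–Thomas exponential weight from Remark \ref{remark1} (valid since $\zeta(\nu)<\eta/2$), and the uniform fractional-moment bounds on $|G^+(0,k;z)|^s$ and $|G^+_{\mathrm{And}}(k,0;z)|^s$. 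This yields $\mathbb{E}\big(|G^+(0,0;z) - G^+_{\mathrm{And}}(0,0;z)|^{s}\big) \le C|g|^s$. Second, to turn a difference of the Green's functions into a difference of their logarithms one needs lower bounds: here I would use that $\ln|G^+(0,0;z)|$ is integrable and that the map $t \mapsto \ln t$ has the property $|\ln t - \ln t'| \le \tfrac{1}{s}\big(|t^s - t'^s|\,t^{-s} + |t^s-t'^s|\,t'^{-s}\big)$ up to constants, so that
\begin{equation}
\mathbb{E}\big(|\ln|G^+(0,0;z)| - \ln|G^+_{\mathrm{And}}(0,0;z)||\big) \le \tfrac{C}{s}\,\mathbb{E}\big(|G^+(0,0;z)^s - G^+_{\mathrm{And}}(0,0;z)^s|\,(|G^+(0,0;z)|^{-s}+|G^+_{\mathrm{And}}(0,0;z)|^{-s})\big).
\end{equation}
Third, apply Cauchy–Schwarz (or Hölder) to split this into the $s$-moment bound on the difference just proved and the uniform boundedness of negative $s$-moments of $|G^+(0,0;z)|$ and $|G^+_{\mathrm{And}}(0,0;z)|$; one may need to work with a slightly smaller exponent $s' < s$ to have room for Hölder, which is harmless since the statement is for an unspecified $s$. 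Taking expectation and then the boundary value $\delta\to 0^+$ in $z = E + i\delta$ (using Herglotz theory as invoked in the text) gives the claimed bound with $C_{\mathrm{Lyap}}$ depending only on $s,\eta,g,\|F\|_\infty$.

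The main obstacle is controlling the \emph{negative} fractional moments $\mathbb{E}\big(|G^+(0,0;z)|^{-s}\big)$ uniformly in $z \in \mathbb{C}^+$ — that is, the lower bound on $|G^+(0,0;z)|$ in expectation, which does not follow from the a priori upper bounds and is not a standard off-diagonal fractional-moment estimate. This is exactly where the structure of the effective potential matters: one must show that conditioning on $\{U(n)\}_{n\neq 0}$ still leaves $U(0)$ with a density that is not too concentrated near the singular set where $G^+(0,0;z)$ vanishes, which is precisely the content of Lemma \ref{bdddensity} (boundedness of $\rho^\Lambda_{n_0}$), combined with the rank-one / Krein formula relating $G^+(0,0;z)$ to the value of $U(0)$. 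Making this conditional argument quantitative — in particular, showing the implied constant degrades only polynomially and can be absorbed into $C_{\mathrm{Lyap}}$ independently of $z$ — is the technical heart of the proof; everything else is resolvent-identity bookkeeping.
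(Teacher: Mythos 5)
There is a genuine gap, and it sits exactly where you flagged it: the negative fractional moments. Your decomposition forces you to control $\mathbb{E}\big(|G^{+}(0,0;z)|^{-2s}\big)$ and $\mathbb{E}\big(|G^{+}_{\mathrm{And}}(0,0;z)|^{-2s}\big)$ uniformly in $z\in\mathbb{C}^{+}$, but this quantity cannot be bounded uniformly: by the Schur complement (Riccati) relation, $|G^{+}(0,0;z)|^{-1}=|\lambda U(0)-z-\mel{0}{H_0(H^{+}_{[1,\infty)}-z)^{-1}H_0}{0}|$, which grows linearly in $|z|$, so any bound obtained from the conditional density of Lemma \ref{bdddensity} (which moreover needs the moment assumption \ref{momentassumption}, i.e.\ $(A_5)$) is at best locally uniform in $z$ and would not deliver the constant $C_{\mathrm{Lyap}}(s,\eta,g,\|F\|_{\infty})$ claimed for all $z\in\mathbb{C}^{+}$. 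The paper's proof never meets this obstacle because it works with the ratio of the two diagonal Green's functions inside the logarithm: the resolvent identity gives $|G^{+}|/|G^{+}_{\mathrm{And}}|\leq 1+|g|\|F\|_{\infty}\sum_n |G^{+}(0,n;z)|\,|G^{+}_{\mathrm{And}}(n,0;z)|/|G^{+}_{\mathrm{And}}(0,0;z)|$, the elementary bound $\ln(1+x)\leq x^{s}/s$ is applied, and then Feenberg's expansion (\ref{feenberg}) identifies $|G^{+}_{\mathrm{And}}(n,0;z)|/|G^{+}_{\mathrm{And}}(0,0;z)|$ with $|G^{+}_{\mathrm{And}}(1,n;z)|$, the Green's function of the depleted half-line operator. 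The denominator cancels exactly, so no lower bound on any Green's function is ever needed; Cauchy--Schwarz, the a priori bound of Lemma \ref{apriori} for the interacting operator, and the exponential decay of fractional moments for the one-dimensional Anderson model then give the result. Your proposed resolution of the negative-moment problem is thus not only unproven but an artifact of the chosen decomposition.

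A secondary but real problem: to sum over $k$ in your resolvent-identity step you invoke the Combes--Thomas bound of Remark \ref{remark1}, which is only available at imaginary part of order $\eta$ (distance $\eta$ from the spectrum). The theorem must hold for all $z\in\mathbb{C}^{+}$, including $z$ arbitrarily close to the real axis, where no deterministic exponential decay exists; there the summability has to come from the probabilistic exponential decay of fractional moments of the one-dimensional Anderson Green's function (valid at any disorder), which is what the paper uses after the Feenberg factorization. With that substitution your first step $\mathbb{E}\big(|G^{+}(0,0;z)-G^{+}_{\mathrm{And}}(0,0;z)|^{s}\big)\leq C|g|^{s}$ can be salvaged, but the conversion from a difference of Green's functions to a difference of logarithms still founders on the negative moments, so the overall strategy needs to be replaced by (or reduced to) the ratio-plus-Feenberg argument.
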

\begin{proof}

From the resolvent identity we obtain
\begin{equation}
\frac{|G^{+}(0,0;z)|}{|G^{+}_{\mathrm{And}}(0,0;z)|}\leq 1+|g|\|F\|_{\infty}\sum_{n}|G^{+}(0,n;z)|\frac{|G^{+}_{\mathrm{And}}(n,0;z)|}{|G^{+}_{\mathrm{And}}(0,0;z)|}
\end{equation}

\begin{equation}
\frac{|G^{+}_{\mathrm{And}}(0,0;z)|}{|G^{+}(0,0;z)|}\leq 1+|g|\|F\|_{\infty}\sum_{n}|G^{+}_{\mathrm{And}}(0,n;z)|\frac{|G^{+}(n,0;z)|}{|G^{+}(0,0;z)|}
\end{equation} 
Using the bound $\ln(1+x)\leq \frac{x^{s}}{s}$ for $0<s<1$ and $x>0$ we reach, for $0<s<1/2$,
\begin{equation}\ln\left(\frac{|G^{+}(0,0;z)|}{|G^{+}_{\mathrm{And}}(0,0;z)|}\right)\leq \frac{|g|^{s}}{s}\|F\|^{s}_{\infty}\sum_{n}|G^{+}(0,n;z)|^s\frac{|G^{+}_{\mathrm{And}}(n,0;z)|^s}{|G^{+}_{\mathrm{And}}(0,0;z)|^s}.
\end{equation}
Taking expectations, using the definition of the Lyapunov exponents and the Cauchy-Schwarz inequality 
\begin{equation}\mathcal{L}_{\mathrm{And}}(z)-\mathcal{L}(z)\leq \frac{|g|^{s}}{s}\|F\|^{s}_{\infty} \sup_n \mathbb{E}\left(|G^{+}(0,n;z)|^{2s}\right)^{1/2}\sum_n\mathbb{E}\left(\frac{|G^{+}_{\mathrm{And}}(n,0;z)|^{2s}}{|G^{+}_{\mathrm{And}}(0,0;z)|^{2s}}\right)^{1/2}:=C_{\mathrm{Lyap}}(s,\eta,\nu,\|F\|_{\infty})|g|^{s}.
\end{equation}
The fact that $C_{\mathrm{Lyap}}$ is a finite quantity follows from a couple of remarks. Firstly, by Feenberg's expansion \cite[Theorem 6.2]{A-W-B} we have the identity
\begin{equation}\label{feenberg}
|G^{+}_{\mathrm{And}}(n,0;z)|=|G^{+}_{\mathrm{And}}(0,0;z)||G^{+}_{\mathrm{And}}(1,n;z)|
\end{equation}
where $G^{+}_{\mathrm{And}}(1,n;z)$ denotes the Green's function of $H_{\mathrm{And}}$ restricted to $\ell^{2}\left(\mathbb{Z} \right)\cap[1,\infty)$. From the 
\emph{a-priori} fractional moment bound on lemma  \ref{apriori} combined with the Green's function decay for dimensional Anderson model
\begin{equation}\mathbb{E}\left(|G^{+}_{\mathrm{And}}(1,n;z)|^{2s}\right)<C(s)e^{-\mu_{\mathrm{And}}|n|}
\end{equation}
we conclude that that $C_{\mathrm{Lyap}}<\infty$.
The estimate for $\mathcal{L}(z)-\mathcal{L}_{\mathrm{And}}(z)$ is similar.
\end{proof}

In principle one might worry that the pre-factor $C_{\mathrm{Lyap}}$ on the above bound will depend on $g$. However, it is easy to see from the arguments in the proof of lemma \ref{apriori}, that $C_{\mathrm{AP}}$ converges to a finite quantity as $g\to 0$, thus we shall disregard its dependence on $g$.

\begin{cor}\label{uniformpos} Whenever $|g|<\left(\frac{\mathcal{L}_{\mathrm{And}}}{C_{\mathrm{Lyap}}}\right)^{1/s}$ holds for some $s\in (0,1/2)$, we have 
\begin{equation}
\mathcal{L}_{0}:=\mathrm{ess}\inf_{E\in \mathbb{R}}\mathcal{L}(E)>0.
\end{equation}
\end{cor}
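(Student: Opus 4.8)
The plan is straightforward and amounts to making quantitative the statement that a small perturbation cannot destroy uniform positivity of the Lyapunov exponent. By Theorem \ref{uniformposh} we have, for any $s\in(0,1/2)$ and every $z\in\mathbb{C}^{+}$,
\[
\mathcal{L}(z)\geq \mathcal{L}_{\mathrm{And}}(z)-C_{\mathrm{Lyap}}|g|^{s}.
\]
First I would pass to the boundary values: for Lebesgue almost every $E\in\mathbb{R}$ both $\mathcal{L}(E)=\lim_{\delta\to 0^{+}}\mathcal{L}(E+i\delta)$ and $\mathcal{L}_{\mathrm{And}}(E)=\lim_{\delta\to 0^{+}}\mathcal{L}_{\mathrm{And}}(E+i\delta)$ exist by the Herglotz theory already invoked in the text, so the inequality above survives in the limit and gives
\[
\mathcal{L}(E)\geq \mathcal{L}_{\mathrm{And}}(E)-C_{\mathrm{Lyap}}|g|^{s}
\qquad\text{for a.e. }E\in\mathbb{R}.
\]

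Next I would invoke the uniform positivity of the Anderson Lyapunov exponent recalled just above the corollary, namely $\operatorname{ess\,inf}_{E\in\mathbb{R}}\mathcal{L}_{\mathrm{And}}(E)\geq \mathcal{L}_{\mathrm{And}}>0$. Combining this with the displayed a.e. bound yields
\[
\mathcal{L}(E)\geq \mathcal{L}_{\mathrm{And}}-C_{\mathrm{Lyap}}|g|^{s}
\qquad\text{for a.e. }E\in\mathbb{R},
\]
and hence $\mathcal{L}_0=\operatorname{ess\,inf}_{E\in\mathbb{R}}\mathcal{L}(E)\geq \mathcal{L}_{\mathrm{And}}-C_{\mathrm{Lyap}}|g|^{s}$. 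The hypothesis $|g|<(\mathcal{L}_{\mathrm{And}}/C_{\mathrm{Lyap}})^{1/s}$ is exactly what makes the right-hand side strictly positive, so $\mathcal{L}_0>0$, which is the claim.

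Honestly there is not much of an obstacle here; the content was all pushed into Theorem \ref{uniformposh}. The only point requiring a word of care is the interchange of the essential infimum over $E$ with the $\delta\to 0^{+}$ limit — i.e. making sure the null set on which the boundary values fail to exist, or on which $\mathcal{L}_{\mathrm{And}}(E)<\mathcal{L}_{\mathrm{And}}$, does not interfere. This is handled by noting that both exceptional sets have Lebesgue measure zero, so the bound $\mathcal{L}(E)\geq \mathcal{L}_{\mathrm{And}}-C_{\mathrm{Lyap}}|g|^{s}$ holds off a single null set, which is precisely an essential-infimum statement. One should also remark, as the paragraph preceding the corollary already does, that $C_{\mathrm{Lyap}}$ stays bounded as $g\to 0$ (since $C_{\mathrm{AP}}$ does), so the constant $(\mathcal{L}_{\mathrm{And}}/C_{\mathrm{Lyap}})^{1/s}$ is a genuine positive threshold and not something degenerating with $g$. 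With that the proof is complete.
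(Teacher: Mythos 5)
Your argument is correct and is exactly the route the paper intends: the corollary is an immediate consequence of Theorem \ref{uniformposh} combined with the uniform positivity $\mathrm{ess}\inf_E \mathcal{L}_{\mathrm{And}}(E) > \mathcal{L}_{\mathrm{And}}$ recalled just before it, with the boundary-value passage handled by the Herglotz theory already cited. Your extra remarks on the null sets and on $C_{\mathrm{Lyap}}$ remaining bounded as $g\to 0$ match the paper's own comments and add no divergence from its approach.
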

We can now proceed to the second step of the proof of theorem \ref{1dloc}, which consists of establishing Green's function decay from corollary \ref{uniformpos}. For that purpose, an important detail to keep in mind is that, in the correlated context, if we choose $a_n=\log\mathbb{E}\left(|G(0,n;z)|^s\right)$, the condition (\ref{upperandloweradd}) will not be fulfilled for all pairs $(m,n)$ due to the lack of independence between the potentials. This means that  Fekete's lemma is not applicable. Moreover, its well-studied modifications (for instance by P. Erd\"os and N. G. de Bruijn  \cite{Erdos}) do not seem to suffice either.

 To the best of our knowledge the result given below is new. Its formulation takes into account the strong decorrelation between the potentials in the Hubbard model and introduces a notion of approximate subbaditivity.
\begin{lem}[Fekete-type lemma for approximately subbaditive sequences]{\label{Fekete}}

Let $\delta>0$ be given and $\{a_n\}_{n\in \mathbb{N}}$ be a sequence of real numbers such that, for every triplet $m, n, r$ of natural numbers with $r\geq \delta\max\{\log m,\log n\}$, the inequality
\begin{equation}\label{subad}
a_{n+m+r}\leq a_n+a_m+C
\end{equation}
holds with a constant $C$ independent of $m,n$ and $r$.
 Then,
\begin{equation}
\alpha=\lim_{n\to \infty} \frac{a_n}{n} 
\end{equation} 
exists and equals $\inf_{n\in \mathbb{N}}\frac{a_n+C}{n}$ . Moreover, $\alpha\in [-\infty,0]$.
\end{lem}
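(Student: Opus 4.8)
The plan is to mimic the standard proof of Fekete's subadditive lemma, inserting the tolerance $r \ge \delta \max\{\log m, \log n\}$ at each application and controlling the accumulated errors. First I would fix $k \in \mathbb{N}$ and write an arbitrary large $n$ via repeated subtraction of blocks of size roughly $k$. Concretely, given $n$, choose the largest integer $q$ with $qk \le n$ (so $q \sim n/k$), and split off $q$ copies of the block of length $k$; the bookkeeping must account for a ``gap'' of size $r_j$ inserted at each of the $q$ steps. The constraint $r_j \ge \delta \max\{\log(\text{block}), \log(\text{remainder})\}$ is satisfied by taking each $r_j$ of size $\lceil \delta \log n \rceil$ (which dominates $\delta \log k$ and $\delta \log(\text{any partial sum} \le n)$), since at every stage both arguments of the max are $\le n$. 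Iterating \eqref{subad} then gives, schematically,
\begin{equation}
a_n \le q\, a_k + (q-1) C + a_{\text{rem}},
\end{equation}
where the leftover length satisfies $\text{rem} = n - qk - (q-1)\lceil \delta \log n\rceil$; one must choose $q$ slightly smaller than $n/k$ so that this remainder is a genuine nonnegative integer in a bounded range $\{0, 1, \dots, O(k + \log n)\}$. Here I would need a mild preliminary observation that the $a_j$ are finite and, over the finitely many residual values that can occur, bounded --- this follows because \eqref{subad} with, say, $n=m=1$ (and $r = \lceil \delta \log 1\rceil$ or the smallest admissible value, noting $\log 1 = 0$) bootstraps every $a_j$ to a value $\le$ (linear in $j$) plus constants, so $a_{\text{rem}}/n \to 0$.

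Next I would divide by $n$ and take $\limsup$. Since $q/n \to 1/k$, $q C / n \to C/k$, and $q \lceil \delta \log n \rceil / n \to 0$ (because $\log n / n \to 0$, which is exactly why the logarithmic gap is harmless), and $a_{\text{rem}}/n \to 0$, we get
\begin{equation}
\limsup_{n\to\infty} \frac{a_n}{n} \le \frac{a_k + C}{k}.
\end{equation}
Taking the infimum over $k$ yields $\limsup_n a_n/n \le \inf_k (a_k+C)/k$. For the matching lower bound, I note that $\inf_k (a_k + C)/k \le (a_n + C)/n$ trivially for each $n$, hence $\inf_k (a_k+C)/k \le \liminf_n (a_n+C)/n = \liminf_n a_n/n$. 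Combining the two inequalities shows the limit exists and equals $\inf_{k} (a_k + C)/k$, which is the claimed formula. Finally, $\alpha \le 0$: applying \eqref{subad} with $m = n$ and an admissible $r$ gives $a_{2n+r} \le 2 a_n + C$; feeding the sequence $n, 2n+r, \dots$ and dividing by the (roughly doubling) index shows $\alpha \le \alpha/1$ trivially, but more directly, from $\alpha = \inf_k (a_k+C)/k$ and the self-improving bound $a_{2n} \lesssim 2a_n$ one sees the normalized quantity cannot be bounded below by a positive number; alternatively, a direct argument: if $\alpha > 0$ then $a_n \ge \alpha n/2$ for large $n$, but $a_{2n+r} \le 2a_n + C$ forces $\alpha(2n+r)/2 \le 2a_n + C$, i.e. roughly $\alpha n \le 2a_n + C$, which is consistent, so instead I would argue that $a_n/n$ has a further subsequential bound from $a_{kn + O(\log)} \le k a_n + (k-1)C$, giving $\alpha \le a_n/n + o(1) $ for all $n$ along a dilation subsequence and hence, combined with superadditivity-free reasoning, $\alpha \le 0$; the cleanest route is simply: $k\alpha = \lim_m a_{km}/m \le \lim_m (k a_m + (k-1)C)/m \cdot$ wait --- more carefully, $\alpha = \lim a_{kn+(k-1)r_n}/(kn + (k-1)r_n)$ and the numerator is $\le k a_n + (k-1)C$, so $\alpha \le \lim (k a_n + (k-1)C)/(kn) = \alpha/1$, giving no info; I will therefore establish $\alpha \le 0$ by the argument that $a_n \le n \cdot a_1 + \text{const} \cdot n$ is \emph{not} what we want --- rather, iterating from a \emph{single} block shows $a_{qk} \lesssim q(a_k + C)$, so $\alpha \le (a_k+C)/k$ for every $k$; were this infimum positive, choosing $k$ large and using the bound again on blocks of size $k$ recursively would contradict it, forcing $\alpha \le 0$.

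The main obstacle --- and the reason this lemma is genuinely new rather than a cosmetic variant of Fekete --- is ensuring that the inserted gaps $r_j$ can be chosen \emph{uniformly} of size $O(\log n)$ while simultaneously (i) satisfying the admissibility constraint at \emph{every} intermediate step (where one argument of the $\max$ is a growing partial sum up to $n$, not just $k$), and (ii) contributing a total displacement $q \cdot O(\log n) = O(n \log n / k)$ that, after dividing by $n$, still vanishes --- it does not vanish! This forces the more careful decomposition where I take $q \approx n / (k + \delta \log n)$ rather than $n/k$, so that the true period of each block is $k + r_j \approx k + \delta \log n$; then $q(k + r_j) \approx n$, the remainder stays bounded, and $\limsup a_n/n \le (a_k + C)/(k)$ still holds in the limit $n \to \infty$ because $r_j / k \to 0$ is \emph{not} needed --- instead one lets $n \to \infty$ first (so $q \to \infty$ with $k$ fixed, and $q r_j / n \to 0$ since $q r_j \approx q \delta \log n \le n \delta \log n / k \cdot$ --- no). The genuinely correct fix, which I would implement, is a two-stage limit: for fixed $k$, $\limsup_n a_n/n \le (a_k+C)/k + \delta(\log k + C')/k \cdot (\text{error})$ --- one accepts an error term that is itself $O((\log k)/k) \to 0$ as $k \to \infty$, so after taking $k \to \infty$ the clean formula survives. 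Getting this double limit and the admissibility bookkeeping exactly right is where the real work lies; everything else is the classical Fekete argument.
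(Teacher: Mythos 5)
There is a genuine gap, and it sits exactly at the point you yourself flag as the crux. In your flat decomposition (peel off blocks of length $k$ from a remainder of length comparable to $n$), the admissibility constraint $r\ge\delta\max\{\log m,\log n\}$ is always driven by the \emph{larger} piece: every gluing joins a block of size $k$ to a remainder of size $\asymp n$, so each gap must be of size $\gtrsim\delta\log n$, not $\delta\log k$. With $q\approx n/(k+\delta\log n)$ blocks this means that, for fixed $k$, the gaps occupy a fraction $\frac{\delta\log n}{k+\delta\log n}\to 1$ of the interval, and the resulting estimate is
\begin{equation*}
\frac{a_n}{n}\;\lesssim\;\frac{a_k+C}{k+\delta\log n}+\frac{a_{\mathrm{rem}}}{n}\;\longrightarrow\;0 ,
\end{equation*}
which only yields $\limsup_n a_n/n\le 0$ and never the sharp bound $(a_k+C)/k$ (note that in the intended application $a_n\approx -\mu n$ with $\mu>0$, so the bound by $0$ is useless). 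Your proposed ``two-stage limit'' inequality, $\limsup_n a_n/n\le (a_k+C)/k+O((\log k)/k)$ for fixed $k$, is indeed the statement one needs, but it is asserted rather than derived: your own decomposition cannot produce gaps of size $O(\log k)$, because the max in the admissibility condition is taken over the huge remainder. So the proof as written does not close; the $\alpha\le 0$ discussion is also circular as presented, although that weaker claim \emph{is} recoverable from the flat decomposition via the display above.

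What is missing is a hierarchical decomposition in which the two pieces being glued are always of comparable size, so that each inserted gap is only logarithmic in (hence negligible relative to) the \emph{smaller} piece. This is how the paper proceeds: it iterates the bound $\frac{a_L}{L}\le\frac{\ell(k+2)}{L}\frac{a_\ell}{\ell}$ along a sequence of scales $\ell_1<\ell_2<\cdots\le L$ with $p\log\ell_j\le\log\ell_{j+1}<p^2\log\ell_j$ and $\sum_j\ell_j/\ell_{j+1}<\infty$, so that at each transition the loss factor is $1-O\!\left(\ell_j/\ell_{j+1}+\frac{\log\ell_{j+1}}{\ell_j}\right)$, the product of all losses converges, and choosing $\ell_1$ large (and with $a_{\ell_1}/\ell_1$ within $\varepsilon$ of the infimum) makes the cumulative error at most $e^{\varepsilon}$. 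A balanced dyadic splitting (halve, insert a gap $\sim\delta\log(n/2)$, recurse down to scale $k$, with total gap length $O(\delta\, n\log k/k)$) would serve the same purpose and would legitimately deliver your $O((\log k)/k)$ error term; but some such multiscale scheme must be supplied explicitly--- without it the key inequality $\limsup_n a_n/n\le\inf_k (a_k+C)/k$ is unproven.
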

Note that, as a consequence, we have
\begin{equation}\label{relatemomgren}
a_n\geq n\alpha -C
\end{equation}
for all $n\in \mathbb{N}$, where $C$ is the same constant as in (\ref{subad}).
The following decoupling estimate guarantees the applicability of the above lemma with the choice $a_n=\log\mathbb{E}_{[0,n]}\left(|\hat G(0,n;z)|^s \right)$, where $\hat G(0,n;z)=\mel{0}{(H_{[0,n]}-z)^{-1}}{n}$ is the Green's function of the operator $H$ restricted to $\ell^2\left([0,n]\cap \mathbb{Z}\right)$ and $\mathbb{E}_{[0,n]}$ denotes the expectation with respect to $U(0),...U(n)$.
\begin{lem}\label{mixinglem}[Strong mixing decoupling]
There exist positive numbers $C_{\mathrm{Dec}}(s,\nu,\eta,g,\|F\|_{\infty})$ and $\delta=\delta(\eta,\nu,g,\|F\|_{\infty})$ such that the inequality
\begin{equation}
\mathbb{E}_{[0,n+m+r]}\left(|\hat G(0,n+m+r;z)|^s \right)\leq  C_{\mathrm{Dec}}
\mathbb{E}_{[0,n]}\left(|\hat G(0,n;z)|^s \right) \mathbb{E}_{[0,m]}\left(|\hat G(0,m;z)|^s \right)
\end{equation}
holds whenever $r\geq \delta \log\max\{m,n\}$.
\end{lem}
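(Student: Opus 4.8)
The goal is a decoupling inequality for the fractional moment of the finite-volume Green's function $\hat G(0,n+m+r;z)$ on the interval $[0,n+m+r]$, expressed as an (almost) product of the corresponding quantities on $[0,n]$ and on $[0,m]$, valid once the buffer $r$ exceeds $\delta\log\max\{m,n\}$. The natural first move is a geometric resolvent (Feenberg-type) decomposition that cuts the interval $[0,n+m+r]$ at the two ends of the middle block of length $r$. Writing $\Lambda=[0,n+m+r]\cap\mathbb{Z}$ and decomposing it into the left block $\Lambda_L=[0,n]$, the buffer $B=[n+1,n+r]$, and the right block $\Lambda_R=[n+r+1,n+m+r]$, one has the standard identity expressing $\hat G(0,n+m+r;z)$ as a sum over paths that hop across the two cut edges; because $H_0=-\Delta$ is nearest-neighbor in $d=1$, this reduces to
\begin{equation}
\hat G(0,n+m+r;z)=\hat G^{\Lambda_L}(0,n;z)\,\hat G^{B}(n+1,n+r;z)\,\hat G^{\Lambda_R}(n+r+1,n+m+r;z),
\end{equation}
up to bounded combinatorial factors from re-summing the geometric series (this is exactly the one-dimensional ``factorization'' used in \cite[Chapter 12]{A-W-B}, and in \eqref{feenberg}). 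Thus $|\hat G(0,n+m+r;z)|^s$ factors into three pieces, and the whole problem is to take expectations of this product.

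\textbf{Key steps.} First I would make the factorization above rigorous for the self-consistent operator $H$, being careful that the restrictions $H^{\Lambda_L}$, $H^{B}$, $H^{\Lambda_R}$ are the restrictions of $H$ (with \emph{its} effective potential $V_{\mathrm{eff}}$ frozen), not operators built from a self-consistent problem solved on the subintervals --- this is purely a linear-algebra identity once $H$ is fixed. Second, I would control the buffer factor $\hat G^{B}(n+1,n+r;z)$: by the \emph{a-priori} bound Lemma \ref{apriori} (applied on the box $B$) together with the Green's function decay for the related Anderson-type model, or more simply by a Combes--Thomas estimate (Remark \ref{remark1}) for the fixed operator, $\mathbb{E}(|\hat G^{B}(n+1,n+r;z)|^{2s})\le C e^{-\mu r}$; choosing $\delta$ large enough, $e^{-\mu r}\le \max\{m,n\}^{-\mu\delta}$ absorbs any polynomial losses incurred below. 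Third --- and this is the crux --- I would remove the correlations between the left factor (a function of $U(0),\dots,U(n)$) and the right factor (a function of $U(n),\dots,U(n+m+r)$): these are \emph{not} independent because $V_{\mathrm{eff}}$ is globally self-consistent, so $U(j)$ depends on all the $\omega$'s. The mechanism for decoupling is the exponential decay of $\partial V_{\mathrm{eff}}(j)/\partial\omega(k)$ in $|j-k|$ established in Sections \ref{existencesection}--\ref{regularitysection}: conditioning on the buffer variables and on one side, one shows that the conditional law of the other side's $U$-variables is close (in a way that controls $s$-moments of $G$, e.g.\ via the bounded-density Lemma \ref{bdddensity} and a change-of-variables $\omega\mapsto U$) to its unconditioned law, with error exponentially small in the buffer width $r$. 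Concretely, I expect to bound $\mathbb{E}(|\hat G^{\Lambda_L}(0,n)|^s|\hat G^{\Lambda_R}(\cdot)|^s)$ by first integrating out the buffer and right-block $\omega$'s at fixed left-block $\omega$'s using the decoupling/Jensen estimate on the Jacobian of $\omega\mapsto U$, then integrating the left-block $\omega$'s, reassembling $\mathbb{E}_{[0,n]}(|\hat G(0,n;z)|^s)\,\mathbb{E}_{[0,m]}(|\hat G(0,m;z)|^s)$ up to the constant $C_{\mathrm{Dec}}$ and the absorbed buffer factor.

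\textbf{Main obstacle.} The hard part is Step three: quantifying that the self-consistent potential on one side is \emph{almost independent} of the random variables on the other side once separated by a buffer of logarithmic width, and doing so at the level of fractional moments of Green's functions rather than, say, total variation of distributions (which would be too weak because $G^s$ is not bounded). The tool must be the exponential off-diagonal decay of the derivatives of $V_{\mathrm{eff}}$ in $\omega$, which forces a tradeoff: a buffer of width $r$ decouples with error $\sim e^{-c r}$, while the number of ``boundary'' terms coupling the two sides that one must sum over grows only polynomially in $m,n$ (through the volume of the relevant boxes and the prefactors in Lemma \ref{apriori} / the Green's function bounds), so $r\ge\delta\log\max\{m,n\}$ with $\delta=\delta(\eta,\nu,g,\|F\|_{\infty})$ chosen large is exactly what is needed to beat those polynomial factors. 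A secondary technical point is uniformity in $z\in\mathbb{C}^+$ and in the volume, which follows since all constants above (from Combes--Thomas, Lemma \ref{apriori}, and the derivative bounds) are volume- and $z$-independent by construction.
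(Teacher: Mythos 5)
Your overall route is the same as the paper's: cut $[0,n+m+r]$ at the two bonds bounding the buffer, control the middle contribution by the a-priori fractional moment bound, and then decouple the left and right blocks by exploiting the exponential decay of $\partial V_{\mathrm{eff}}(j)/\partial\omega(k)$ through the change of variables $\omega\mapsto U$, with the logarithmic buffer chosen so that the exponentially small cross-correlations beat the polynomially many coupling terms. Your ``main obstacle'' paragraph states exactly the mechanism the paper uses: in the actual proof the Jacobian of the change of variables (with the buffer $\omega$'s held fixed) is analyzed by a Schur complement, and setting the $U$-variables of the opposite block to zero perturbs $V_{\mathrm{eff}}$ and its derivatives by $O(e^{-\nu(r+|j-k|)})$ (Lemmas \ref{decay1}, \ref{improvedct}, \ref{localbeh}); the determinant-ratio bound of Lemma \ref{detabsbd} then converts this into the multiplicative error $e^{C e^{-\nu r}(m^2+n^2)}$, which is $O(1)$ precisely when $r\geq\delta\log\max\{m,n\}$, and translation invariance turns $\mathbb{E}_{[n+r+1,n+m+r]}$ into $\mathbb{E}_{[0,m]}$.

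The genuine problem is your Step 2, the claim $\mathbb{E}\left(|\hat G^{B}(n+1,n+r;z)|^{2s}\right)\leq Ce^{-\mu r}$. This is not available: the Combes--Thomas bound of Remark \ref{remark1} gives decay only at $\mathrm{Im}\,z=\eta$ (a fixed positive constant tied to the strip of analyticity of $F$), whereas the lemma must hold uniformly for $z\in\mathbb{C}^{+}$, in particular for $z$ approaching the spectrum, which is where it is used; and exponential decay of fractional moments of the Green's function of the \emph{interacting} one-dimensional model at arbitrary disorder is exactly the conclusion that Theorem \ref{1dloc} is building toward via this lemma, so invoking it here would be circular (the Anderson-model decay does not transfer, since the buffer operator still carries the self-consistent potential). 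Fortunately this step is also unnecessary: the paper uses only the a-priori bound $C_{\mathrm{AP}}$ of Lemma \ref{apriori} for the middle factor, after integrating out the two coupling variables $\omega(n+1),\omega(n+r)$, because the polynomial-in-$(m,n)$ losses never appear as multiplicative prefactors --- they occur only inside the exponent multiplied by $e^{-\nu r}$, so no smallness from the buffer Green's function is needed. If you drop the decay claim and keep only boundedness of the middle factor, your plan aligns with the paper's proof; as a minor point, your three-factor ``identity'' is not exact as written (the standard one-dimensional factorization keeps one full-interval Green's function, as in \eqref{feenberg}), though this does not affect the structure of the argument.
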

A combination of lemmas \ref{mixinglem}, \ref{Fekete} and equation (\ref{relatemomgren}) yields the lower bound
\begin{equation}\label{lowerbound}
C^{-1}_{\mathrm{Dec}}e^{\varphi(s,z)n}\leq \mathbb{E}_{[0,n]}\left(|\hat G(0,n;z)|^s \right)\,\,\,\,\,\mathrm{for\,\, all}\,\,\,n\in \mathbb{N}.
\end{equation}
As we shall see in section \ref{detailslowerbound} below,
an application of the lower bound (\ref{lowerbound}) in combination with the superadditive version of lemma \ref{Fekete} applied to the sequence $b_n=-\varphi(s,E)n+\log\mathbb{E}_{[0,n]}\left(|\hat G(0,n;z)|^s \right)$ is enough to establish an upper bound
\begin{equation}\label{upperbound}
\mathbb{E}\left(|\hat G(0,n;z)|^s \right)\leq C(s,z)e^{\varphi(s,z)n}\,\,\,\,\,\mathrm{for\,\, all}\,\,\,n\in \mathbb{N}.
\end{equation}
where the constant $C(s)$ is locally uniform in $(s,z)\in (0,1)\times \mathbb{C}^{+}$.

After obtaining an analogue of lemma \ref{greenmoment} , the final step will be to relate the moment-generating function to the Lyapunov exponent through an inequality of the type
 \begin{equation}\label{boundmomentgen}
 \varphi(s,E)\leq -\frac{s}{2}\mathcal{L}_{0}.
 \end{equation} In reference \cite{A-W-B}, the bound (\ref{boundmomentgen}) is stated with $\mathcal{L}_{0}$ replaced by $\mathcal{L}(E)$ and with $s$ depending on $E$. However, it is easy to see from the arguments given there that $s$ can be chosen locally uniformly in $E$, see \cite[Equations (12.79) and (12.80)]{A-W-B}. Moreover, by making use uniform positivity of the Lyapunov exponent obtained in corollary \ref{uniformpos} we reach the inequality (\ref{boundmomentgen}).
The Green's function decay follows from the bounds (\ref{upperbound}) and (\ref{boundmomentgen}).\par The remainder of the paper is organized as follows. Sections \ref{existencesection}, \ref{regularitysection} and \ref{decaysection} establish existence, regularity and decay properties of the effective potential in the model (\ref{toymodel}). Section \ref{proofoflemma} contains the proof of lemma \ref{bdddensity}. Section \ref{hubbardext} explains the required modifications for the Hubbard model. The proof of theorem \ref{main} is finished combining sections \ref{proofoflemma} and \ref{hubbardext}.  Section \ref{details1d} covers the remaining details of the proof of theorem \ref{1dloc} and section \ref{idssection} contains the proof of theorem \ref{thmids}.

\section{Existence of the effective potential}\label{existencesection}
To justify the definition of the effective potential in (\ref{toymodel}), let $\Phi(V):\ell^{\infty}(\mathbb{Z}^d)\rightarrow \ell^{\infty}(\mathbb{Z}^d)$ be given by
 $\Phi(V)(n):=\mel{n}{F(T+\lambda V_{\omega}+gV)}{n}.$
 Recall that $F$ is analytic, bounded on the strip $S=\{|\mathrm{Im} z|<\eta\}$ and continuous up to the boundary of $S$. 
 Our goal is to check that $\Phi$ is a contraction in $\ell^{\infty}\left( \mathbb{Z}^d\right)$, meaning that 
 \begin{equation}\label{contractiondef}\|\Phi(V)-\Phi(W)\|_{\ell^{\infty}(\mathbb{Z}^d)}<c\|V-W\|_{\ell^{\infty}(\mathbb{Z}^d)}
 \end{equation}
  holds for some $c<1$ and all $V,W \in \ell^{\infty}(\mathbb{Z}^d)$.
  Using the analiticity of $F$ we have the following representation   \cite[Equation (D.2)]{A-Graf}
  
  \begin{equation} F(T+\lambda V_{\omega}+gV)=\frac{1}{2\pi i}\int^{\infty}_{-\infty}\left(\frac{1}{T+\lambda V_{\omega}+gV-i\eta+t}-\frac{1}{T+\lambda V_{\omega}+gV+i\eta+t}\right) f(t)\,dt
  \end{equation}
   for all $V\in  \ell^{\infty}(\mathbb{Z}^d)$, where $ f=F_{+} + F_{-} + D\ast F $ for $F_{\pm}(u)=F(u\pm i\eta)$ and $D(u)=\frac{\eta}{\pi\left( \eta^2+u^2\right)}$ is the Poisson kernel. It follows immediately that $\|f\|_{\infty}\leq 3\|F\|_{\infty}$. This is a prelude for the following fixed point argument, where the operator $T$ will be assumed to satisfy
   \begin{equation}\label{offdecayssump}
   \sup_{n}\sum_m|T(m,n)|\left(e^{\nu|m-n|}-1\right)<\frac{\eta}{2}.
   \end{equation}

\begin{prop}\label{Contraction} 
\begin{enumerate}[label=(\subscript{C}{\arabic*})]
\item{  For any self-adjoint operator $T$ on $\ell^2\left(\mathbb{Z}^d\right)$ satisfying  (\ref{offdecayssump}) and bounded potentials $V,W$, we have, for any $\nu'\in (0,\nu)$, that
 \begin{equation}\label{contraction1}  \Big|\mel{m}{\left(F(T+V)-F(T+W)\right)}{n}\Big|\leq \frac{72\sqrt{2}e^{-\nu'|m-n|}}{\eta\left(1-e^{\nu'-\nu}\right)^d}\|F\|_{\infty}\|V-W\|_{\infty}.
\end{equation}}
\item{ For any self-adjoint operator $T$ on $\ell^2\left(\mathbb{Z}^d\right)\oplus \ell^2\left(\mathbb{Z}^d\right)$ satisfying (\ref{offdecayssump}) and bounded potentials $V,W$ on $\ell^2\left(\mathbb{Z}^d\right)\oplus \ell^2\left(\mathbb{Z}^d\right)$ we have, for any $\nu'\in (0,\nu)$, that \begin{equation}\label{contraction2}  \Big|\mel{m}{\left(F(T+V)-F(T+W)\right)}{n}\Big|\leq \frac{144\sqrt{2}e^{-\nu'|m-n|}}{\eta\left(1-e^{\nu'-\nu}\right)^{d}}\|F\|_{\infty}\|V-W\|_{\infty}
\end{equation}
}

\item{ For any $m,n,j \in\mathbb{Z}^d$, the matrix elements $\mel{m}{F(T+gV)}{n}$ are differentiable with respect to $V(j)$ and \begin{equation}\label{contraction3}\Big|\frac{\partial\mel{m}{F(T+gV)}{n}}{\partial V(j)}\Big|\leq |g|\frac{72\sqrt{2}e^{-\nu\left(|m-j|+|n-j|\right)}}{\eta}\|F\|_{\infty}\|V\|_{\infty}.
\end{equation}}

\end{enumerate}
\end{prop}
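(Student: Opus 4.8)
The plan is to reduce everything to the integral representation of $F$ displayed just before the proposition statement and to the Combes–Thomas bound, which Remark \ref{remark1} already guarantees is available under $(A_1)$ (and under the hypothesis (\ref{offdecayssump}) for a general $T$). For part (C$_1$), I would start from
\[
F(T+V)-F(T+W)=\frac{1}{2\pi i}\int_{-\infty}^{\infty}\!\!\Big(R_{V}(t-i\eta)-R_{V}(t+i\eta)-R_{W}(t-i\eta)+R_{W}(t+i\eta)\Big)f(t)\,dt,
\]
where $R_{V}(z)=(T+V-z)^{-1}$, and apply the first resolvent identity $R_{V}(z)-R_{W}(z)=-R_{V}(z)(V-W)R_{W}(z)$. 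Taking the $(m,n)$ matrix element produces a sum over $k\in\mathbb{Z}^d$ of the form $\sum_k G_{V}(m,k;z)\,(V(k)-W(k))\,G_{W}(k,n;z)$, which is bounded in absolute value by $\|V-W\|_\infty\sum_k |G_V(m,k;z)|\,|G_W(k,n;z)|$. Using the Combes–Thomas bound $|G_\bullet(a,b;z)|\le \tfrac{2}{\eta}e^{-\nu|a-b|}$ for $z=t\pm i\eta$ and the elementary convolution estimate $\sum_{k}e^{-\nu|m-k|}e^{-\nu|k-n|}\le \tfrac{e^{-\nu'|m-n|}}{(1-e^{\nu'-\nu})^d}$ (the loss from $\nu$ to any $\nu'<\nu$ absorbs the polynomial volume growth of lattice spheres) one controls the integrand pointwise in $t$ by $\tfrac{4}{\eta^2}\cdot\tfrac{e^{-\nu'|m-n|}}{(1-e^{\nu'-\nu})^d}\|V-W\|_\infty$. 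Pulling out $\|f\|_\infty\le 3\|F\|_\infty$ and the $\tfrac{1}{2\pi}$, together with the $\sqrt{2}$ and numerical factors coming from splitting the four resolvent terms and from a slightly wasteful bookkeeping of the $t$-integral, yields the stated constant $\tfrac{72\sqrt 2}{\eta(1-e^{\nu'-\nu})^d}$. The one subtlety is the $t$-integral: the Combes–Thomas bound at spectral parameter $t\pm i\eta$ is uniform in $t$, so one cannot simply integrate the pointwise bound in $t$ to infinity. Instead I would split $\int_{-\infty}^\infty$ into $|t|\le M$ and $|t|>M$; on the compact part use the uniform Combes–Thomas estimate together with $\|f\|_\infty$, and on the tail exploit the decay of $f$ — here the hypotheses on $F$ recalled in the Notation section ($tF(t)$ bounded as $t\to+\infty$, $t(1-F(t))$ bounded as $t\to-\infty$, which makes $f\in L^1$), so that the tail contribution is finite and the final constant is $t$-independent. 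This is the step I expect to require the most care.

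Part (C$_2$) is immediate from (C$_1$): on $\ell^2(\mathbb{Z}^d)\oplus\ell^2(\mathbb{Z}^d)$ the operator $T+V$ is block-diagonal (or at worst can be handled block-by-block after the same resolvent expansion), so each matrix element $\mel{m}{(F(T+V)-F(T+W))}{n}$ is bounded by the right-hand side of (\ref{contraction1}) applied to the relevant block; summing the contribution of the two blocks replaces $72\sqrt2$ by $144\sqrt2$ and gives (\ref{contraction2}).

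For part (C$_3$), differentiability in $V(j)$ follows by the same representation together with dominated convergence: the difference quotient $\tfrac{1}{h}\big(F(T+g(V+h e_j))-F(T+gV)\big)$ converges, by the resolvent identity, to $-\tfrac{g}{2\pi i}\int \big(R(t-i\eta)\,|j\rangle\langle j|\,R(t-i\eta)-R(t+i\eta)\,|j\rangle\langle j|\,R(t+i\eta)\big)f(t)\,dt$ with $R(z)=(T+gV-z)^{-1}$, uniformly controlled by the Combes–Thomas and $L^1$ bounds above so that the interchange of limit and integral is justified. Taking the $(m,n)$ matrix element of the derivative gives exactly $-\tfrac{g}{2\pi i}\int (G(m,j;z)G(j,n;z)\big|_{z=t-i\eta}-\ \cdots\big|_{z=t+i\eta})f(t)\,dt$, and bounding each factor by Combes–Thomas at $\nu$ (no convolution sum is needed now, since $j$ is fixed) produces the $e^{-\nu(|m-j|+|n-j|)}$ decay. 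Collecting the numerical constants ($\tfrac{1}{2\pi}$, the four-term split, the $\sqrt2$, $\|f\|_\infty\le 3\|F\|_\infty$, and the $\tfrac{4}{\eta^2}$ from the two resolvents, where one factor of $\tfrac1\eta$ is reabsorbed after the $t$-integration against the normalized weight) gives the claimed $|g|\,\tfrac{72\sqrt2}{\eta}\|F\|_\infty\|V\|_\infty$. Again the only genuine obstacle is making the $t$-integration rigorous with a $t$-uniform constant, which is handled exactly as in part (C$_1$).
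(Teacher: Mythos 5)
Your overall framework (the contour representation of $F$ with $\|f\|_\infty\le 3\|F\|_\infty$, the resolvent identity, Combes--Thomas, and the geometric convolution sum giving the $(1-e^{\nu'-\nu})^{-d}$ factor) matches the paper, but the step you yourself flagged as delicate --- the $t$-integration --- is a genuine gap, and your proposed fix does not work. You claim that the hypotheses on $F$ make $f\in L^1$, so that the tail $|t|>M$ can be controlled by decay of $f$. This is false: for the Fermi--Dirac function $F(t)\to 1$ as $t\to-\infty$, so $f=F_++F_-+D\ast F$ tends to a nonzero constant there and is not integrable; moreover Proposition \ref{Contraction} is stated for a general $F$ bounded and analytic on the strip, and its constant involves only $\|F\|_\infty$, so no argument resting on decay of $f$ (or on an $L^1$ norm of $f$, or on a cutoff scale $M$) can produce the stated bound. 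Applying $R_V(z)-R_W(z)=R_V(z)(W-V)R_W(z)$ separately at $z=t\pm i\eta$ and bounding each Green function by Combes--Thomas gives an estimate that is uniform in $t$ but not integrable in $t$, and there is nothing left to integrate against.

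The paper's proof avoids this by an algebraic regrouping before estimating: writing $z=t+i\eta$,
\begin{equation*}
\bigl[R_V(\bar z)-R_W(\bar z)\bigr]-\bigl[R_V(z)-R_W(z)\bigr]
=\bigl[R_V(\bar z)-R_V(z)\bigr](W-V)R_W(\bar z)+R_V(z)(W-V)\bigl[R_W(\bar z)-R_W(z)\bigr],
\end{equation*}
so that every term contains a \emph{same-operator} resolvent difference $R_\bullet(t-i\eta)-R_\bullet(t+i\eta)$. That difference is estimated by \cite[Lemma 3]{A-Graf},
$|G^{V}(m,l;t+i\eta)-G^{V}(m,l;t-i\eta)|\le 12\eta\, e^{-\nu|m-l|}\mel{m}{((T+V-t)^2+\eta^2/2)^{-1}}{m}^{1/2}\mel{l}{((T+V-t)^2+\eta^2/2)^{-1}}{l}^{1/2}$,
and after Cauchy--Schwarz the $t$-integral of these diagonal matrix elements is $O(1/\eta)$ by the spectral theorem; the remaining Green function is bounded by Combes--Thomas uniformly in $t$. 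This is what makes the $t$-integral finite with a constant depending only on $\eta,\nu,\|F\|_\infty$, and it is the missing idea in your argument. The same remark applies to your treatment of (C$_3$): your candidate derivative $-\frac{g}{2\pi i}\int\bigl(R(t-i\eta)P_jR(t-i\eta)-R(t+i\eta)P_jR(t+i\eta)\bigr)f(t)\,dt$ must again be regrouped into same-operator differences before integrating, exactly as in the paper's treatment of (\ref{resolventdiff}) with $W=g(V+hP_j)$; your block-diagonal reduction for (C$_2$) is fine once (C$_1$) is proved correctly.
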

\begin{proof} The resolvent identity gives
\begin{align*}
& \mel{m}{\frac{1}{T+V-t-i\eta}-\frac{1}{T+W-t-i\eta}}{n}+\mel{m} {\frac{1}{T+W-t+i\eta}-\frac{1}{T+V-t+i\eta}}{n}\\
&=\mel{m}{(\frac{1}{T+V-t-i\eta}-\frac{1}{T+V-t+i\eta})(W-V)\frac{1}{T+W-t-i\eta}}{n}-\\
&\mel{m}{ \left(\frac{1}{T+W-t+i\eta}-\frac{1}{T+W-t-i\eta}\right)(W-V)\frac{1}{T+V-t+i\eta}}{n}.\\
\end{align*}
 Taking absolute values in the first term on the right-hand side we obtain

\begin{align*}&\Big|\mel{m}{\left(\frac{1}{T+V-t-i\eta}-\frac{1}{T+V-t+i\eta}\right)(W-V)\frac{1}{T+W-t-i\eta}}{n}\Big| \\
&\leq \sum_{l\in\mathbb{Z}^d}|G^{V}(m,l;t+i\eta)-G^{V}(m,l;t-i\eta)||(W-V)(l)|G^{W}(l,n;t+i\eta)|\\
&\leq 24\sum_{l}|(V-W)(l)|e^{-\nu\left(|l-n|+|l-m|\right)} \mel{m}{\frac{1}{(T+V-t)^2+{\eta^2}/2}}{m}^{1/2}
\mel{l}{\frac{1}{(T+V-E)^2+{\eta^2}{/2}}}{l}^{1/2}.\\
\end{align*}
In the last step we made use of the Combes-Thomas bound $|G^{W}(m,n;t+i\eta)|\leq \frac{2}{\eta}e^{-\nu|m-n|}$ as well as lemma 3 in \cite[appendix D]{A-Graf} to estimate the difference between the Green functions  as
\begin{eqnarray}
|G^{V}(m,l;t+i\eta)-G^{V}(m,l;t-i\eta)|\leq 12\eta e^{-\nu|m-l|} \mel{m}{\frac{1}{(T+V-t)^2+{\eta^2}/2}}{m}^{1/2}
\mel{l}{\frac{1}{(T+V-E)^2+{\eta^2}{/2}}}{l}^{1/2}.
\end{eqnarray}
Integrating over $t$ we conclude, using Cauchy-Schwarz and the spectral measure representation, that
\begin{eqnarray}\label{resolventdiff}&\int^{\infty}_{-\infty}\left|\mel{m}{ \left(\frac{1}{T+V-t-i\eta}-\frac{1}{T+V-t+i\eta}\right)(W-V)\frac{1}{T+W-E-i\eta}}{n}\right|\,dt\\
&\leq
\frac{24\sqrt{2}\pi}{\eta}\sum_{l}|(V-W)(l)|e^{-\nu\left(|l-n|+|l-m|\right)}.
\end{eqnarray}
The above implies that
\begin{align*}
 &\frac{1}{2\pi}\int^{\infty}_{-\infty}\Big| \mel{m}{ \left(\frac{1}{T+V-t-i\eta}-\frac{1}{T+V-t+i\eta}\right)(W-V)\frac{1}{T+W-t-i\eta}}{n}\Big|\,dt\\
 &\leq \frac{12\sqrt{2}}{\eta}\|V-W\|_{\infty}e^{-\nu'|m-n|}\sum_{l\in \mathbb{Z}^d}e^{(\nu'-\nu)|l-n|}\\
 &=\frac{12\sqrt{2}}{\eta}\|V-W\|_{\infty}e^{-\nu'|m-n|}\frac{1}{\left(1-e^{\nu'-\nu}\right)^d}.\\
 \end{align*}
As a similar bound holds for
$\frac{1}{2\pi}\int^{\infty}_{-\infty}\Big|\mel{m}{\left(\frac{1}{H_0+W-t+i\eta}-\frac{1}{H_0+W-t-i\eta}\right)(V-W)\frac{1}{H_0+V-t+i\eta}}{n}\Big|\,dt$
, we conclude the proof of the inequality (\ref{contraction1}) by recalling that $\|f\|_{\infty}\leq 3 \|F\|_{\infty}$. The inequality (\ref{contraction2}) in the statement of proposition \ref{Contraction} follows from the same argument with the only difference that one has to sum two geometric series, hence the modification on the upper bound. The bound (\ref{contraction3}) is proven similarly: note that
 $h\frac{12\sqrt{2}\pi}{\eta}e^{-\nu|j-n|} e^{-\nu|m-j|}$ is an upper bound for the left-hand side of equation (\ref{resolventdiff}) with $V$ replaced by $gV$ and $W=g(V+hP_{j})$, where $P_j$ denotes the projection onto $\mathrm{Span}\{\delta_j\}$. We also observe that this time there will be no summation over $l$, hence the introduction of the $\nu'$ is unnecessary. We then conclude that
 \begin{equation}\label{differencequotient}
 \Big|\frac{\mel{m}{F(T+gV+hP_j)}{n}-\mel{m}{F(T+gV)}{n}}{h}\Big|\leq \frac{72\sqrt{2}\pi}{\eta}e^{-\nu|j-n|} e^{-\nu|m-j|}.
  \end{equation}
Letting $h\to 0$ finishes the proof.
\end{proof}
Taking $m=n$, as a consequence of the above proposition, (\ref{contractiondef}) holds whenever \begin{equation}\label{smallg}
|g|<\frac{\eta\left(1-e^{-\nu}\right)^d }{72\sqrt{2}\|F\|_{\infty}}.
\end{equation}
This observation yields the following.

\begin{prop}\label{existencepot}
Let $g_d=\frac{\eta\left(1-e^{-\nu}\right)^d}{72\sqrt{2}\|F\|_{\infty}}$. Then, for $|g|<g_d$, there is a unique effective potential $V_{\mathrm{eff}}\in \ell^{\infty}\left(\mathbb{Z}^d\right)$ satisfying 
\begin{equation}\label{toy}
V_{\mathrm{eff}}(n)=\mel{n}{F(H_0+\lambda \omega+gV_{\mathrm{eff}}}{n}.
\end{equation}
Moreover, for $\Lambda \subset \mathbb{Z}^d$, there is a unique 
$V^{\Lambda}_{\mathrm{eff}}$ in $ \ell^2\left(\Lambda\right)$ satisfying (\ref{toy}) with $H$ replaced by $H^{\Lambda}=\mathds{1}_{\Lambda}H\mathds{1}_{\Lambda}$.
\end{prop}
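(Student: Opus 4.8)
The strategy is a straightforward application of the Banach fixed point theorem, with the contraction estimate supplied by Proposition \ref{Contraction}. First I would fix $T = H_0 + \lambda V_\omega$, which is self-adjoint on $\ell^2(\mathbb{Z}^d)$ and satisfies the hypothesis \eqref{offdecayssump} by assumption $(A_1)$ (namely $\zeta(\nu) < \eta/2$), so that the representation \cite[Equation (D.2)]{A-Graf} and all bounds in Proposition \ref{Contraction} are available. Define the map $\Phi \colon \ell^\infty(\mathbb{Z}^d) \to \ell^\infty(\mathbb{Z}^d)$ by $\Phi(V)(n) = \mel{n}{F(T + gV)}{n}$. The first point to check is that $\Phi$ genuinely maps $\ell^\infty$ into $\ell^\infty$: since $\|f\|_\infty \le 3\|F\|_\infty$ and the diagonal matrix elements of the resolvents $(T + gV - t \mp i\eta)^{-1}$ are bounded in absolute value by $1/\eta$ uniformly in $n$, the integral representation gives $\|\Phi(V)\|_\infty \le 3\|F\|_\infty / \eta \cdot (\text{something finite})$; more simply, $|\Phi(V)(n)| = |\mel{n}{F(T+gV)}{n}| \le \|F(T+gV)\| \le \|F\|_\infty$ by the spectral theorem, since $T + gV$ is self-adjoint and $F$ is bounded by $\|F\|_\infty$ on $\mathbb{R} \subset \mathcal{S}$. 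Thus $\Phi$ maps the closed ball of radius $\|F\|_\infty$ in $\ell^\infty(\mathbb{Z}^d)$ to itself.

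Next, taking $m = n$ in inequality \eqref{contraction1} of Proposition \ref{Contraction} (applied with the potentials $gV$ and $gW$, so that the right side carries a factor $|g|$ relative to the $V,W$ difference, and with $\nu' \to \nu$ in the limiting case, or simply fixing some $\nu' \in (0,\nu)$ close to $\nu$), one obtains
\begin{equation*}
\|\Phi(V) - \Phi(W)\|_{\ell^\infty(\mathbb{Z}^d)} \le \frac{72\sqrt{2}\,\|F\|_\infty}{\eta\left(1 - e^{-\nu}\right)^d}\,|g|\,\|V - W\|_{\ell^\infty(\mathbb{Z}^d)},
\end{equation*}
where I am using that $\sum_{l \in \mathbb{Z}^d} e^{(\nu'-\nu)|l-n|} \to (1 - e^{-\nu})^{-d}$; more carefully one runs the argument with fixed $\nu'$ and then notes the bound is uniform as $\nu' \uparrow \nu$, or absorbs the discrepancy into the definition of $g_d$. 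The contraction condition \eqref{contractiondef} therefore holds with $c = |g|/g_d < 1$ precisely when $|g| < g_d = \eta(1 - e^{-\nu})^d / (72\sqrt{2}\|F\|_\infty)$, which is \eqref{smallg}. Since $(\ell^\infty(\mathbb{Z}^d), \|\cdot\|_\infty)$ is complete and the ball of radius $\|F\|_\infty$ is a closed (hence complete) invariant subset, the Banach fixed point theorem yields a unique $V_{\mathrm{eff}} \in \ell^\infty(\mathbb{Z}^d)$ with $\Phi(V_{\mathrm{eff}}) = V_{\mathrm{eff}}$, i.e.\ \eqref{toy}. (Uniqueness in all of $\ell^\infty$, not merely in the ball, also follows since any fixed point automatically satisfies $\|V_{\mathrm{eff}}\|_\infty \le \|F\|_\infty$ and the contraction estimate holds on the whole space.)

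For the finite-volume statement, the identical argument applies verbatim with $T^\Lambda = \mathds{1}_\Lambda(H_0 + \lambda V_\omega)\mathds{1}_\Lambda$ in place of $T$ and $\ell^\infty(\Lambda)$ (equivalently $\ell^2(\Lambda)$, as the norms are equivalent on a finite set, though one should use the $\ell^\infty$ norm to keep constants volume-independent) in place of $\ell^\infty(\mathbb{Z}^d)$; one needs only that $T^\Lambda$ still satisfies \eqref{offdecayssump}, which is immediate since restricting to $\Lambda$ only removes off-diagonal terms and hence does not increase the relevant sum. This gives the unique $V^\Lambda_{\mathrm{eff}} \in \ell^2(\Lambda)$. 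I do not anticipate a genuine obstacle here: the entire content is packaged into Proposition \ref{Contraction}, and the only mild subtlety is the bookkeeping of the $\nu' \to \nu$ passage and confirming that the constant in the contraction factor is exactly $g_d$ as claimed rather than something slightly worse — this is resolved by the remark preceding the proposition that \eqref{contractiondef} holds under \eqref{smallg}.
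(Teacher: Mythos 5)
Your proposal is correct and takes essentially the same route as the paper: the paper's proof consists precisely of taking $m=n$ in Proposition \ref{Contraction} to get the contraction bound (\ref{contractiondef}) whenever (\ref{smallg}) holds, and then applying the Banach fixed point theorem in $\ell^{\infty}$, with the finite-volume case handled verbatim since restriction to $\Lambda$ only improves the off-diagonal condition (\ref{offdecayssump}). One small correction to your bookkeeping: to recover the constant $\left(1-e^{-\nu}\right)^{-d}$ you should send $\nu'\downarrow 0$ (for $m=n$ the factor $e^{-\nu'|m-n|}$ equals $1$ anyway), not $\nu'\to\nu$, since $\sum_{l}e^{(\nu'-\nu)|l-n|}$ diverges as $\nu'\uparrow\nu$.
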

\begin{Rem} Replacing $H_0$ by $H_0-\kappa I$ we can incorporate a chemical potential in our results. For simplicity, we shall make no further reference to $\kappa$ during the proofs and assume it was already incorporated to $H_0$.
\end{Rem}
\section{Regularity of the effective potential}
\label{regularitysection}
Our goal in this section is to conclude that, for a fixed finite subset $\Lambda \subset \mathbb{Z}^d$ with $|\Lambda|=n$, the effective potential $V_{\mathrm{eff}}$ is a smooth function of $\{\omega(j)\}_{j\in \Lambda}$. This will be of relevance for several resampling arguments later in the note. For that purpose, define a map $\xi:\ell^{\infty}\left(\mathbb{Z}^d \right)\times \mathbb{R}^{n}\rightarrow \ell^{\infty}\left(\mathbb{Z}^d \right)$ by 
\begin{equation}\xi(V,\omega)(j)=V(j)-\mel{j}{F(H_0+\lambda\omega+gV)}{j}
\end{equation}
Then, $V_{\mathrm{eff}}$ is the unique solution of $\xi(V,\omega)=0$.
 Thus, its regularity can inferred via the implicit function theorem once we check that the derivative $D\xi(\cdot,\omega)$ is non-singular. Note that
\begin{equation}
\frac{\partial \xi(V,\omega)(j)}{\partial V(l)}=
\delta_{jl}-\frac{\partial \mel{j}{F(H_0+\lambda\omega+gV)}{j}}{\partial V(l)}.
\end{equation}
Using lemma \ref{Contraction}, we have that
\begin{equation}
\Big|\frac{\partial \mel{j}{F(H_0+\lambda\omega+gV)}{j}}{\partial V(l)}\Big|\leq 
|g|\frac{72\sqrt{2}e^{-2\nu|j-l|}}{\eta}\|F\|_{\infty}.
\end{equation}

In particular, whenever $|g|\frac{72\sqrt{2}\|F\|_{\infty}}{\eta(1-e^{-2\nu})^{d}}<1$  we have that the operator $D\xi(\omega,.):\ell^{\infty}\left(\Lambda\right)\rightarrow \ell^{\infty}\left(\Lambda \right)$ is invertible since it has the form $I+gM$ where $gM$ has operator norm less than one. Note the smallness condition on $g$ is independent of $\Lambda\subset \mathbb{Z}^d$. It is a consequence of the implicit function theorem that $V$ is a smooth function of $(\omega(1),...,\omega(n))$.

\section{Decay estimates for the effective potential}
\label{decaysection}

We start this section with the following lemma, which will be useful to formulate the decay of correlations between $U(n)$ and $U(m)$ as $|m-n|\to \infty$.

\begin{lem}\label{decay1} Whenever $\frac{72\sqrt{2}|g|\|F\|_{\infty}}{\eta\left(1-e^{-\nu}\right)^d}<1$, there exist constants $C_1(d,\lambda,g,\eta,\|F\|_{\infty},\nu)$ and $C_2(d,\lambda,g,\eta,\|F\|_{\infty},\nu)$ such that
\begin{equation}\label{lemmapotential1}\max\Big\{\sum_{m}e^{\nu|n-m|}\Big|\frac{\partial V_{\mathrm{eff}}(n)}{\partial \omega(m)}\Big|,\sum_{n}e^{\nu|n-m|}\Big|\frac{\partial V_{\mathrm{eff}}(n)}{\partial \omega(m)}\Big|\,\Big\}\leq C_1
\end{equation}

\begin{equation}\label{lemmapotential2}
\sum_{l,m,n}e^{\nu\left(|l-n|+|n-m|+|l-m|\right)}\Big|\frac{\partial^2V_{\mathrm{eff}}(n)}{\partial \omega(m)\partial \omega(l)}\Big|\leq C_2.
\end{equation}
Moreover $C_1$ and $C_2$ can be bounded from above by a constant of the form $\frac{\lambda D}{1-g\theta}$ with $D$ and $\theta$ independent of $g$ and these constants are explicit in the proof.

\end{lem}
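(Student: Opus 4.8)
The plan is to differentiate the fixed-point equation (\ref{toy}) with respect to the $\omega$-variables and turn the resulting identities into linear equations in the arrays $\partial V_{\mathrm{eff}}/\partial\omega$ and $\partial^2 V_{\mathrm{eff}}/\partial\omega\partial\omega$, which can then be solved by a Neumann series using the smallness of $g$. Concretely, write $V_{\mathrm{eff}}(n)=\mel{n}{F(H_0+\lambda\omega+gV_{\mathrm{eff}})}{n}$ and apply $\partial/\partial\omega(m)$. By the chain rule and the differentiability of matrix elements of $F$ established in (\ref{contraction3}) of Proposition \ref{Contraction}, one gets
\begin{equation}
\frac{\partial V_{\mathrm{eff}}(n)}{\partial\omega(m)}=\lambda\,\Phi'_n(\delta_m)+g\sum_{l}\Phi'_n(\delta_l)\,\frac{\partial V_{\mathrm{eff}}(l)}{\partial\omega(m)},
\end{equation}
where $\Phi'_n(\delta_j)$ denotes the derivative of $n\mapsto\mel{n}{F(H_0+\lambda\omega+gV_{\mathrm{eff}})}{n}$ in the direction $\delta_j$, and where (\ref{contraction3}) with $m=n$ gives $|\Phi'_n(\delta_j)|\leq \frac{72\sqrt{2}}{\eta}\|F\|_\infty e^{-2\nu|n-j|}$ (note this also bounds the $\lambda$-derivative term, which is of the same form). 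I would then introduce the weighted norm $\|A\|:=\sup_{n}\sum_{m}e^{\nu|n-m|}|A(n,m)|$ on arrays and its transpose-symmetric counterpart, and observe that multiplication by the kernel $K(n,l)=g\,\Phi'_n(\delta_l)$ is a bounded operator on this space with norm at most $|g|\frac{72\sqrt2\|F\|_\infty}{\eta}\sum_{l}e^{\nu|n-l|}e^{-2\nu|n-l|}=|g|\frac{72\sqrt2\|F\|_\infty}{\eta(1-e^{-\nu})^d}=:g\theta<1$ under the hypothesis of the lemma. Hence $I-K$ is invertible with $\|(I-K)^{-1}\|\leq (1-g\theta)^{-1}$, the inhomogeneous term has weighted norm at most $\lambda D$ with $D=\frac{72\sqrt2\|F\|_\infty}{\eta(1-e^{-\nu})^d}$, and (\ref{lemmapotential1}) follows with $C_1\leq \frac{\lambda D}{1-g\theta}$; the bound with the roles of $m$ and $n$ swapped comes from running the same argument with the transposed weighted norm, using symmetry of the kernel estimate in $n\leftrightarrow l$.

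For the second-derivative bound (\ref{lemmapotential2}) I would differentiate the first-order identity once more with respect to $\omega(l)$. This produces a linear equation of the same structure,
\begin{equation}
\frac{\partial^2 V_{\mathrm{eff}}(n)}{\partial\omega(m)\partial\omega(l)}=(\text{source})+g\sum_{k}\Phi'_n(\delta_k)\,\frac{\partial^2 V_{\mathrm{eff}}(k)}{\partial\omega(m)\partial\omega(l)},
\end{equation}
where the source term is a sum of: (i) the second directional derivative of $\mel{n}{F(\cdot)}{n}$ evaluated on $\lambda\delta_m+g\,\partial_m V_{\mathrm{eff}}$ and $\lambda\delta_l+g\,\partial_l V_{\mathrm{eff}}$, and (ii) a term $g\sum_k (\text{second derivative of }\Phi_n)(\delta_k,\cdot)\,\partial_m V_{\mathrm{eff}}(k)$-type contributions. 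To control (i) I need an estimate on the \emph{second} directional derivative of matrix elements of $F$, of the form $|\partial^2\mel{n}{F}{n}(\delta_i,\delta_j)|\lesssim \frac{\|F\|_\infty}{\eta}e^{-\nu(|n-i|+|n-j|)}$ (possibly with a worse constant), which is obtained by differentiating the resolvent representation twice and applying the resolvent identity two more times, exactly as in the proof of Proposition \ref{Contraction} — this is routine but is the one ingredient not already written down. Granting it, and using the weighted-$\ell^1$ bound on $\partial V_{\mathrm{eff}}$ from the first part to absorb the convolutions, the source term is bounded in the triple-weighted norm $\|B\|_2:=\sup$ over appropriate marginals of $\sum e^{\nu(|l-n|+|n-m|+|l-m|)}|B(n,m,l)|$ by a constant of the form $\frac{\lambda D'}{(1-g\theta)^2}$, and inverting $I-K$ once more (on this larger space, with the same norm bound $(1-g\theta)^{-1}$ since the kernel acts only in the $n$-slot) gives (\ref{lemmapotential2}) with $C_2$ of the asserted form $\frac{\lambda D}{1-g\theta}$ after relabeling constants.

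The main obstacle, such as it is, is bookkeeping: one must verify that the weighted norms are submultiplicative against the exponentially decaying kernels (a triangle-inequality computation $e^{\nu|n-m|}\leq e^{\nu|n-l|}e^{\nu|l-m|}$ turning each convolution into a product of sums, each of which converges to $(1-e^{-\nu})^{-d}$ after combining with the $e^{-2\nu}$ from the kernel), and that the second-derivative source really does have all its factors landing in spaces where the first-order estimate applies — in particular that the term involving two first-order derivatives is handled by Cauchy–Schwarz-free direct summation because the kernel decay $e^{-\nu(|n-i|+|n-j|)}$ beats the $e^{+\nu}$ weights with room to spare. No genuinely hard estimate is needed beyond the second-derivative analogue of (\ref{contraction3}), which follows the template of Proposition \ref{Contraction} verbatim. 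The explicit dependence $C_i\leq \frac{\lambda D}{1-g\theta}$ with $D,\theta$ independent of $g$ is then read off directly from the Neumann series, and in particular $C_i\to \lambda D$ as $g\to 0$, as needed for the remark following Theorem \ref{uniformposh}.
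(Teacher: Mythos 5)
Your proposal follows essentially the same route as the paper: differentiate the self-consistency relation to obtain a linear equation for the derivative arrays whose kernel is the $g$-small quantity $\theta$ bounded via Proposition \ref{Contraction} (the paper does this at the level of the resolvent representation and the Aizenman--Graf cancellation lemma, then invokes \cite[Theorem 9.2]{A-W-B}, which is precisely your weighted Neumann series), yielding $C_1\leq \lambda D/(1-g\theta)$, and the second derivative is treated by differentiating once more and reusing the first-order bound.

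One quantitative point in your second step needs repair. The second directional derivative bound you postulate, $|\partial^2\mel{n}{F}{n}(\delta_i,\delta_j)|\lesssim \frac{\|F\|_\infty}{\eta}e^{-\nu(|n-i|+|n-j|)}$, is \emph{not} strong enough to beat the triple weight $e^{\nu(|l-n|+|n-m|+|l-m|)}$: taking (say, in $d=1$) $l=-R$, $m=R$, $n=0$, the weight is $e^{4\nu R}$ while your source bound only gives $e^{-2\nu R}$, so the weighted sum blows up; the claim that the decay ``beats the weights with room to spare'' is false in this form. What the twice-differentiated resolvent representation actually produces are diagonal elements of $R P_i R P_j R$, i.e.\ products $G(n,i)G(i,j)G(j,n)$, which decay in the \emph{full triangle perimeter} $|n-i|+|i-j|+|j-n|$; this is the decay you must record (it is exactly what the paper's terse ``repeating the previous argument'' exploits, with the extra $|i-j|$ factor doing the work against the $e^{\nu|l-m|}$ part of the weight), and with it your Neumann-series closing of the argument goes through at the same level of precision as the paper's. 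Aside from this, your bookkeeping (weighted norms, $g\theta<1$, $D$ and $\theta$ independent of $g$, $C_i\to\lambda D$ as $g\to 0$) matches the paper's proof.
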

\begin{proof}

For convenience we denote $V_{\mathrm{eff}}=V$. As in section \ref{existencesection} we write $F(H)=\frac{1}{2\pi i}\int^{\infty}_{-\infty}\left(\frac{1}{H+t-i\eta}-\frac{1}{H+t+i\eta}\right) f(t)\,dt$ where $f$ is bounded by $3\|F\|_{\infty}$. Thus
\begin{equation}\label{intrep}
V(n,\omega)=\frac{1}{2\pi i}\int^{\infty}_{-\infty} K(n,t,\omega)f(t)\,dt
\end{equation}
where $K(n,t,\omega)=G(n,n;t-i\eta)-G(n,n;t+i\eta)$. Denote by $P_{m}$ the projection mapping $\ell^2\left(\mathbb{Z}^d\right)$ onto $\ell^2\left(\mathrm{Span}\{\delta_m\}\right)$. Using difference quotients, it is easy to check
 \begin{equation}\label{ResDeriv}
\frac{\partial}{\partial \omega(m)}\frac{1}{H-z}+ g \frac{1}{H-z}\frac{\partial V}{\partial \omega(m)}\frac{1}{H-z}=- \lambda\frac{1}{H-z}P_{m}\frac{1}{H-z}.
\end{equation}
 Taking matrix elements we obtain

$$\frac{\partial K(n,t,\omega)}{\partial \omega(m)}=-g\sum_{l}\tilde{G}(l,n)\frac{\partial V(l)}{\partial \omega(m)}+
\lambda r(m,n). $$
$$\tilde{G}(l,n):=G(l,n;t+i\eta)G(n,l;t+i\eta)-G(l,n;t-i\eta)G(n,l;t-i\eta).$$
 $$r(m,n):=G(n,m;t+i\eta)G(m,n;t+i\eta)-G(n,m;t-i\eta)G(m,n;t-i\eta).$$
 Note 
 \begin{equation}\label{rewriteG}\tilde{G}(l,n)=\left(G(l,n;t+i\eta)-G(l,n;t-i\eta)\right)G(n,l;t+i\eta)+\left(G(n,l;t+i\eta)-G(n,l;t-i\eta)\right)G(l,n;t-i\eta).
 \end{equation}
 We now make use of \cite[Lemma 3]{A-Graf}:
 \begin{equation}\label{cancellemma}|G(l,n;t+i\eta)-G(l,n;t-i\eta)|\leq 12\eta e^{-\nu|l-n|}\mel{n}{\frac{1}{(H-t)^2+\eta^2/2}}{n}^{1/2}\mel{l}{\frac{1}{(H-t)^2+\eta^2/2}}{l}^{1/2}.
  \end{equation}
  This, together with the Combes-Thomas bound $|G(l,n,t\pm i\eta)|\leq \frac{2}{\eta}e^{-\nu|l-n|}$ and (\ref{rewriteG}) implies
  $$|\tilde{G}(l,n)|\leq 48e^{-2\nu|l-n|}\mel{n}{\frac{1}{(H-t)^2+\eta^2/2}}{n}^{1/2}\mel{l}{\frac{1}{(H-t)^2+\eta^2/2}}{l}^{1/2}.$$
  $$|r(m,n)|\leq 48e^{-2\nu|m-n|}\mel{m}{\frac{1}{(H-t)^2+\eta^2/2}}{m}^{1/2}\mel{n}{\frac{1}{(H-t)^2+\eta^2/2}}{n}^{1/2}.$$
Thus
$$\tilde{K}(l,n):=\int^{\infty}_{-\infty}|\tilde{G}(l,n)|\,dt\leq\frac{48\sqrt2 \pi}{\eta}e^{-2\nu|l-n|}$$
$$\tilde{r}(m,n):=\int^{\infty}_{-\infty} |r(m,n)|\,dt\leq\frac{48\sqrt2 \pi}{\eta}e^{-2\nu|m-n|}.$$
To summarize, we have shown the following inequality
$$\Big|\frac{\partial V(n)}{\partial \omega(m)}\Big |\leq \frac{3\|F\|_{\infty}}{2\pi}\left(|g|\sum_{l}\tilde{K}(l,n)\Big|\frac{\partial V(l)}{\partial \omega(m)}\Big|+
\lambda\tilde{r}(m,n)\right). $$
Whenever $\frac{72\sqrt{2}|g|}{\eta\left(1-e^{-2\nu}\right)^d}<1$ we have that
\begin{equation}\label{smallkernel}
|g|\|\tilde{K}\|_{\infty,\infty}<1
\end{equation}
 where
 
 \begin{equation} \label{Ksum}\|\tilde{K}\|_{\infty,\infty}=\sup_{l}\sum_{m}\tilde{K}(l,m).
 \end{equation}
Considering the weight $W(n):=e^{\nu|m-n|}$ we let
\begin{equation}\theta:=\frac{3\|F\|_{\infty}}{2\pi}\sup_{n}\sum_{l}\frac{W(n)}{W(l)}\tilde{K}(n,l).
\end{equation}
By the triangle inequality,
\begin{align*}\\
\theta&\leq \frac{3\|F\|_{\infty}}{2\pi}\sup_{n}\sum_{l}e^{\nu|n-l|}\tilde{K}(n,l)\\
&\leq \frac{72\sqrt2 \|F\|_{\infty}}{\eta\left(1-e^{-\nu}\right)^d}.\\
\end{align*}
hence, whenever $\frac{72\sqrt{2}|g|}{\eta\left(1-e^{-\nu}\right)^d}<1$, we have that
\begin{equation}\label{smallkernel2} |g|\theta<1.
\end{equation}
Moreover, with the choice
\begin{equation}D_1:=\sum_{n}W(n)\tilde{r}(m,n)
\end{equation}
we have
\begin{equation}\label{smallkernel3}
D_1\leq \frac{72\sqrt2 \|F\|_{\infty}}{\eta\left(1-e^{-\nu}\right)^d}.
\end{equation}

After conditions (\ref{smallkernel}), (\ref{smallkernel2}) and (\ref{smallkernel3}) have been verified, the general result \cite[Theorem 9.2]{A-W-B} applies, yielding
\begin{equation}\sum_{m}e^{\nu|n-m|}\Big|\frac{\partial V(n)}{\partial \omega(m)}\Big|<\frac{\lambda D_1}{1-g\theta}:=C_1(d,\|F\|_{\infty},\lambda,g,\eta,\nu).
\end{equation}

Differentiating (\ref{ResDeriv}) with respect to $\omega(l)$,
\begin{align*}
\frac{\partial^2}{\partial \omega(m)\partial \omega(l)}\frac{1}{H-z}&+ g \left(\frac{\partial}{\partial \omega(l)}\frac{1}{H-z}\right)\frac{\partial V}{\partial \omega(m)}\frac{1}{H-z}
+g \frac{1}{H-z}\frac{\partial V}{\partial \omega(m)}\left(\frac{\partial}{\partial \omega(l)}\frac{1}{H-z}\right)\\
+g\frac{1}{H-z}\frac{\partial^2 V}{\partial \omega(m)\partial \omega(l)}\frac{1}{H-z}
&=-\lambda\left(\frac{\partial}{\partial \omega(l)} \frac{1}{H-z}\right)P_{m}\frac{1}{H-z}-\lambda\frac{1}{H-z}P_{m}\left(\frac{\partial}{\partial \omega(l)}\frac{1}{H-z}\right)\\
\end{align*}

Repeating the previous argument and using the established decay of $\frac{\partial V(n)}{\partial \omega(m)}$ we reach (\ref{lemmapotential2}), finishing the proof.
\end{proof}

Given a finite set $\Lambda \subset \mathbb{Z}^d$, let us define $\mathcal{T}:\mathbb{R}^{|\Lambda|} \rightarrow \mathbb{R}^{|\Lambda|}$ by
\begin{equation}
\left(\mathcal{T}\omega\right)(n)=\omega(n)+\frac{g}{\lambda}V_{\mathrm{eff}}(n).
\end{equation}
Let $U(n):=\left(\mathcal{T}\omega\right)(n)$ be the new coordinates in the probability space. The bound (\ref{lemmapotential1}) implies that, for $|g|$ sufficiently small, $\mathcal{T}$ is a differentiable perturbation of the identity by an operator with norm less than one hence $\mathcal{T}^{-1}$ is well defined.\par Fix $n_0 \in \Lambda$ and denote by $U_{\alpha}=U+\left(\alpha-U(n_0)\right)\delta_{n_0}$ the new potential  obtained from $U$ by setting its value at $n_0$ equal to $\alpha$. Let $\omega_{\alpha}(n)=\left(\mathcal{T}^{-1}U_{\alpha}\right)(n)$. The variables $\omega_{\alpha}(n)$ correspond to the change in $\omega(n)$ when a resampling argument is applied to the new probability space at the point $n_0$. Intuitively, the exponential decay guarantees that this change is not too large if $n$ and $n_0$ are far away. This is the content of the lemma below.
\begin{lem}\label{resamp} For all $\alpha\in \mathbb{R}$ and $|g|<\lambda C^{-1}_1$, we have
$$
\sum_{n\neq n_0} e^{\nu |n-n_0|}\big|\omega_\alpha(n)-\omega(n)\big|\leq
\frac{C_1|g|}{\lambda}\frac{\left(|\alpha-U(n_0)|+2\frac{|g|\|F\|_{\infty}}{\lambda}\right)}{\left(1-\frac{|g|}{\lambda}C_1\right)}.
$$
where $C_1$ is the upper bound on equation (\ref{lemmapotential1}).

\end{lem}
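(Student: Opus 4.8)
The plan is to derive a fixed-point relation for the discrepancy $\omega_\alpha - \omega$ and then close it using the exponential-decay bound (\ref{lemmapotential1}). First I would record that both $\omega$ and $\omega_\alpha$ are preimages under $\mathcal{T}$: by definition $\left(\mathcal{T}\omega_\alpha\right)(n) = U_\alpha(n)$ and $\left(\mathcal{T}\omega\right)(n) = U(n)$, so subtracting gives, for every $n$,
\begin{equation}
\omega_\alpha(n) - \omega(n) = \big(U_\alpha(n) - U(n)\big) - \frac{g}{\lambda}\Big( V_{\mathrm{eff}}(n,\omega_\alpha) - V_{\mathrm{eff}}(n,\omega)\Big).
\end{equation}
The first term on the right is $\left(\alpha - U(n_0)\right)\delta_{n,n_0}$. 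For the second term I would write the difference of effective potentials as an integral of its derivative along the segment $\omega_s := \omega + s(\omega_\alpha - \omega)$, $s\in[0,1]$, obtaining
\begin{equation}
V_{\mathrm{eff}}(n,\omega_\alpha) - V_{\mathrm{eff}}(n,\omega) = \sum_{m} \left(\int_0^1 \frac{\partial V_{\mathrm{eff}}(n,\omega_s)}{\partial\omega(m)}\,ds\right)\big(\omega_\alpha(m) - \omega(m)\big),
\end{equation}
which is legitimate by the smoothness established in section \ref{regularitysection}, provided the whole segment stays in the domain where the contraction/implicit-function arguments apply (this holds since $\|V_{\mathrm{eff}}\|_\infty \le \|F\|_\infty$ uniformly, so the relevant operator $H_0 + \lambda\omega_s + gV_{\mathrm{eff}}$ is controlled).

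Next I would fix the site $n\neq n_0$, multiply by $e^{\nu|n-n_0|}$ and sum over $n\neq n_0$. Writing $X := \sum_{n\neq n_0} e^{\nu|n-n_0|}\,|\omega_\alpha(n) - \omega(n)|$, the contribution of the $m=n_0$ term in the double sum is bounded, using $e^{\nu|n-n_0|}\le e^{\nu|n-m|}e^{\nu|m-n_0|} = e^{\nu|n-n_0|}$ when $m=n_0$ and the bound (\ref{lemmapotential1}), by $\frac{|g|}{\lambda} C_1\, |\alpha - U(n_0)|$ — but one must also handle the diagonal $n = n_0$ piece of $V_{\mathrm{eff}}$ separately, which is why the extra $\frac{|g|\|F\|_\infty}{\lambda}$ appears. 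For the terms with $m \neq n_0$ I would use the triangle inequality $e^{\nu|n-n_0|} \le e^{\nu|n-m|}e^{\nu|m-n_0|}$, swap the order of summation, and apply the \emph{column} bound in (\ref{lemmapotential1}) (the $\sum_n e^{\nu|n-m|}|\partial V_{\mathrm{eff}}(n)/\partial\omega(m)| \le C_1$ half) to extract a factor $\frac{|g|}{\lambda} C_1$ times $X$ itself. This yields an inequality of the schematic form
\begin{equation}
X \le \frac{|g|}{\lambda} C_1\, |\alpha - U(n_0)| + 2\frac{|g|\|F\|_\infty}{\lambda}\cdot\frac{|g|}{\lambda}C_1 + \frac{|g|}{\lambda} C_1\, X,
\end{equation}
and solving for $X$ under $|g| < \lambda C_1^{-1}$ gives exactly the claimed bound $X \le \frac{C_1|g|}{\lambda}\big(|\alpha - U(n_0)| + 2\frac{|g|\|F\|_\infty}{\lambda}\big)/\big(1 - \frac{|g|}{\lambda}C_1\big)$.

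The main obstacle I anticipate is the bookkeeping around the site $n_0$ itself: one has to carefully separate the forcing term $(\alpha - U(n_0))\delta_{n_0}$ from the $V_{\mathrm{eff}}$-difference, and account for how $V_{\mathrm{eff}}(n_0)$ changes (this is the source of the $\|F\|_\infty$ term, bounded crudely by $|V_{\mathrm{eff}}| \le \|F\|_\infty$ on both ends, hence the factor $2$), all while making sure the derivative bound (\ref{lemmapotential1}) — which was proved for the operator $H$, not literally along the interpolating family — transfers to each $\omega_s$. The latter is not really an issue because the proof of Lemma \ref{decay1} only used the Combes–Thomas bound and $\|V_{\mathrm{eff}}\|_\infty \le \|F\|_\infty$, both uniform along the segment, but it is the point that needs an explicit remark. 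Everything else is the standard "geometric series absorbed into the left-hand side" maneuver, already used repeatedly in sections \ref{existencesection} and \ref{decaysection}.
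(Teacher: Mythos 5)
Your proposal is correct and follows essentially the same route as the paper: the paper also derives the fixed-point relation from $\mathcal{T}\omega_\alpha=U_\alpha$, applies the mean value estimate at an intermediate configuration, isolates the $l=n_0$ term (bounding $|\omega_\alpha(n_0)-\omega(n_0)|\leq |\alpha-U(n_0)|+2\frac{|g|\|F\|_\infty}{\lambda}$ via $|V_{\mathrm{eff}}|\leq\|F\|_\infty$), and closes with the weight $W(n)=e^{\nu|n-n_0|}$ and the bound (\ref{lemmapotential1}). The only cosmetic difference is that the paper cites \cite[Theorem 9.2]{A-W-B} for the final absorption step, whereas you carry out the same ``absorb $\frac{|g|}{\lambda}C_1 X$ into the left-hand side'' computation explicitly.
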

\begin{proof}
Using the given definitions and the mean value inequality we obtain, for $n\neq n_0$,
\begin{align*}
|\omega(n)-\omega_{\alpha}(n)|&\leq \frac{|g|}{\lambda}|V(n,\omega)-V(n,\omega_{\alpha})|\\
&\leq\frac{|g|}{\lambda}\sum_{l\in \mathbb{Z}^d} \Big|\frac{\partial V_{\mathrm{eff}}(n,\hat{\omega}_{\alpha})}{\partial \omega(l)}\Big|\Big|\omega_{\alpha}(l)-\omega(l)\Big|\\
&\leq \frac{|g|}{\lambda}\Big|\frac{\partial V_{\mathrm{eff}}(n,\hat{\omega}_{\alpha})}{\partial \omega(n_0)}\Big|\left(|\alpha-U(n_0)|+2\frac{|g|\|F\|_{\infty}}{\lambda}\right)+\frac{|g|}{\lambda}\sum_{l\neq n_0} \Big|\frac{\partial V_{\mathrm{eff}}(n,\hat{\omega}_{\alpha})}{\partial \omega(l)}\Big|\Big|\omega_{\alpha}(l)-\omega(l)\Big|.
\end{align*}
Where $\hat{\omega}_{\alpha}$ denotes some configuration   with $\hat{\omega}_{\alpha}(l)$ in the interval connecting $\omega(l)$ to  $\omega_{\alpha}(l)$.
Let $W(n)=e^{\nu|n-n_0|}$. According to (\ref{lemmapotential1}),
\begin{align*}\sup_n\sum_l \frac{W(n)}{W(l)}\Big|\frac{\partial V_{\mathrm{eff}}(n,\hat{\omega}_{\alpha})}{\partial \omega(l)}\Big|&\leq\sup_n\sum_l e^{\nu|n-l|}\Big|\frac{\partial V_{\mathrm{eff}}(n,\hat{\omega}_{\alpha})}{\partial \omega(l)}\Big|\\
&\leq C_1.\\
\end{align*}
 Once again, the conditions of \cite[Theorem 9.2]{A-W-B} are satisfied for 
$|g|<\lambda C^{-1}_1$
therefore
$$\sum_{n\neq n_0} e^{\nu|n-n_0|}\big|\omega_{\alpha}(n)-\omega(n)\big|\leq \frac{C_1|g|}{\lambda}\frac{\left(|\alpha-U(n_0)|+2\frac{|g|\|F\|_{\infty}}{\lambda}\right)}{\left(1-\frac{|g|}{\lambda}C_1\right)}.\qedhere$$

\end{proof}
Since another application of the mean value theorem gives, after a possible correction on $\hat{\omega}_{\alpha}$ that \begin{align*} \Big|\frac{\partial V_{\mathrm{eff}}(n,\omega)}{\partial \omega(m)}-\frac{\partial V_{\mathrm{eff}}(n,\omega_{\alpha})}{\partial \omega(m)}\Big|&\leq
\sum_{l\in \mathbb{Z}^d}\Big| \frac{\partial^2 V_{\mathrm{eff}}(n,\hat{\omega}_{\alpha})}{\partial \omega(l)\partial \omega(m)}\Big|\Big|\omega(l)-\omega_{\alpha}(l)\Big|\\
\end{align*}
we obtain, for any $\nu'\in(0,\nu)$,
\begin{align*}\Big|\frac{\partial V_{\mathrm{eff}}(n,\omega)}{\partial \omega(m)}-\frac{\partial V_{\mathrm{eff}}(n,\omega_{\alpha})}{\partial \omega(m)}\Big|\leq&\frac{C_2|g|}{\lambda}\left(C_1\frac{\left(|\alpha-U(n_0)|+2\frac{|g|\|F\|_{\infty}}{\lambda}\right)}{\left(1-\frac{|g|}{\lambda}C_1\right)(1-e^{\nu'-\nu})^d}+2\|F\|_{\infty} \right)e^{-\nu'\left(|m-n|+|n-n_0|+|m-n_0|\right)}.
\end{align*}
where $C_2$ is the constant in (\ref{lemmapotential2}).

In particular, letting $\nu'=\nu/2$, if $A=\frac{g}{\lambda}\left(\frac{\partial V_{\mathrm{eff}}(n_i,\omega_{\alpha})}{\partial \omega(n_j)}\right)_{|\Lambda|\times |\Lambda|}$ and $B=\frac{g}{\lambda}\left(\frac{\partial V_{\mathrm{eff}}(n_i,\omega)}{\partial \omega(n_j)}\right)_{|\Lambda|\times |\Lambda|}$ we have
\begin{equation}\label{explicitboundsum}\sum_{(m,n)\in \Lambda \times \Lambda}|(A-B)_{m,n}|\leq \frac{C_2|g|^2}{\lambda^2}\left(C_1\frac{\left(|\alpha-U(n_0)|+2\frac{|g|\|F\|_{\infty}}{\lambda}\right)}{\left(1-\frac{|g|}{\lambda}C_1\right)(1-e^{-\nu/2})^{3d}}+\frac{2\|F\|_{\infty}}{(1-e^{-\nu/2})^{2d}} \right).
\end{equation} 
We summarize the above observation as a lemma.
\begin{lem}\label{bddtrace}
Let $A=\frac{g}{\lambda}\left(\frac{\partial V_{\mathrm{eff}}(n_i,\omega_{\alpha})}{\partial \omega(n_j)}\right)_{|\Lambda|\times |\Lambda|}$ and $B=\frac{g}{\lambda}\left(\frac{\partial V_{\mathrm{eff}}(n_i,\omega)}{\partial \omega(n_j)}\right)_{|\Lambda|\times |\Lambda|}$. Whenever $ \frac{|g|}{\lambda}C_1<1$ we have
\begin{equation}
  \sum_{(m,n)\in \Lambda \times \Lambda}|(A-B)_{m,n}|\leq  |g|^2\left(C_3|\alpha-U(n_0)|+C_4\right).
\end{equation}
Moreover, the constant $C_3$ can be chosen independent of $\lambda$ and $C_4$ is proportional to $\frac{1}{\lambda}$.
\end{lem}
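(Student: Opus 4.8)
The plan is to observe that the asserted inequality is essentially a repackaging of the estimate (\ref{explicitboundsum}) derived immediately above its statement, so the work consists of assembling three already-available ingredients and then reading off the two constants together with their $\lambda$-dependence. The ingredients are: (i) the mean-value bound
\[
\Big|\frac{\partial V_{\mathrm{eff}}(n,\omega)}{\partial\omega(m)}-\frac{\partial V_{\mathrm{eff}}(n,\omega_\alpha)}{\partial\omega(m)}\Big|\le\sum_{l}\Big|\frac{\partial^2 V_{\mathrm{eff}}(n,\hat\omega_\alpha)}{\partial\omega(l)\partial\omega(m)}\Big|\,\big|\omega(l)-\omega_\alpha(l)\big|;
\]
(ii) Lemma \ref{resamp}, which controls $\sum_{l\neq n_0}e^{\nu|l-n_0|}|\omega_\alpha(l)-\omega(l)|$ and is applicable exactly under the hypothesis $\frac{|g|}{\lambda}C_1<1$; and (iii) the weighted summability (\ref{lemmapotential2}) of the second derivatives of $V_{\mathrm{eff}}$.

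First I would split the $l=n_0$ term off the sum in (i): its coefficient $|\omega(n_0)-\omega_\alpha(n_0)|$ is at most $|\alpha-U(n_0)|+2\frac{|g|\|F\|_\infty}{\lambda}$, because $\mathcal{T}\omega_\alpha=U_\alpha$, $U_\alpha(n_0)=\alpha$, and $|V_{\mathrm{eff}}|\le\|F\|_\infty$. The remaining sum over $l\neq n_0$ is handled by Lemma \ref{resamp}. Multiplying by $\frac{|g|}{\lambda}$, summing the resulting pointwise bound over all $(m,n)\in\Lambda\times\Lambda$, and pairing the exponential weights $e^{-\nu'(|m-n|+|n-n_0|+|m-n_0|)}$ (with $\nu'=\nu/2$) against (\ref{lemmapotential2}) collapses everything into two geometric series of ratio $e^{-\nu/2}$, producing the factors $(1-e^{-\nu/2})^{-3d}$ and $(1-e^{-\nu/2})^{-2d}$ and hence (\ref{explicitboundsum}). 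Splitting its right-hand side into the part linear in $|\alpha-U(n_0)|$ and the part independent of $\alpha$ yields the form $|g|^2(C_3|\alpha-U(n_0)|+C_4)$ with
\[
C_3:=\frac{C_1C_2}{\lambda^2\bigl(1-\tfrac{|g|}{\lambda}C_1\bigr)(1-e^{-\nu/2})^{3d}},\qquad
C_4:=\frac{C_2}{\lambda^2}\!\left(\frac{2C_1|g|\|F\|_\infty/\lambda}{\bigl(1-\tfrac{|g|}{\lambda}C_1\bigr)(1-e^{-\nu/2})^{3d}}+\frac{2\|F\|_\infty}{(1-e^{-\nu/2})^{2d}}\right).
\]

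The only step that is more than bookkeeping --- and thus the ``main obstacle'' --- is verifying the claimed scaling in $\lambda$. Here I would invoke the remark in Lemma \ref{decay1} that $C_1,C_2\le\frac{\lambda D}{1-g\theta}$ with $D,\theta$ independent of $\lambda$, so that $C_1/\lambda$ and $C_2/\lambda$ (and also $\tfrac{|g|}{\lambda}C_1=|g|\cdot C_1/\lambda$) are $\lambda$-free. It then follows that $C_3=(C_1/\lambda)(C_2/\lambda)\bigl(1-\tfrac{|g|}{\lambda}C_1\bigr)^{-1}(1-e^{-\nu/2})^{-3d}$ does not depend on $\lambda$, while every summand of $C_4$ retains one extra power of $1/\lambda$ (one of them $1/\lambda^2$, which only helps for large $\lambda$), so $C_4=O(1/\lambda)$, as asserted. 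No positivity issues arise, since $1-\tfrac{|g|}{\lambda}C_1>0$ is precisely the standing hypothesis.
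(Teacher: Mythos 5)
Your route is the paper's own: the mean-value inequality for the first derivatives, Lemma \ref{resamp} for the resampled configuration, the weighted second-derivative bound (\ref{lemmapotential2}) with $\nu'=\nu/2$, summation of the geometric series to arrive at (\ref{explicitboundsum}), and finally the remark in Lemma \ref{decay1} that $C_1,C_2\le \lambda D/(1-g\theta)$ with $D,\theta$ independent of $\lambda$ to read off that $C_3$ is $\lambda$-free and $C_4=O(1/\lambda)$. This is exactly how the text proceeds (the lemma is introduced as a summary of the displayed computation), and your identification of the constants agrees with (\ref{explicitboundsum}).

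One bookkeeping point, which you inherit from the paper rather than introduce, is worth flagging because your own write-up exposes it: you correctly bound the $l=n_0$ coefficient by $|\alpha-U(n_0)|+2\tfrac{|g|\|F\|_\infty}{\lambda}$, but only its second piece survives into your $C_4$. The $|\alpha-U(n_0)|$ piece of the $l=n_0$ term, after summing over $(m,n)$ and multiplying by $\tfrac{|g|}{\lambda}$, contributes an additional $\tfrac{|g|\,C_2}{\lambda}\,\tfrac{|\alpha-U(n_0)|}{(1-e^{-\nu/2})^{2d}}$, which appears neither in (\ref{explicitboundsum}) nor in your $C_3$; note it is linear, not quadratic, in $|g|$. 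Since $C_2/\lambda$ is $\lambda$-independent by Lemma \ref{decay1}, this term can be absorbed into a $\lambda$-independent $C_3$ only at the price of $C_3\sim 1/|g|$, or else kept as a separate $O(|g|)$ contribution; either way the property actually used in Section \ref{proofoflemma} (a lower bound $D(\alpha)=e^{-c(g)(|\alpha-U(n_0)|+\mathrm{const})}$ with $c(g)\to 0$ as $g\to 0$, uniformly in $\lambda$ and $\Lambda$) is unaffected, but the derivation as you state it does not literally produce the constants you quote, and a careful version of the proof should track this extra term. (Also, your parenthetical about one summand of $C_4$ scaling like $1/\lambda^2$ is not needed: with $C_1/\lambda$ and $C_2/\lambda$ treated as $\lambda$-free, both summands are exactly one power of $1/\lambda$.)
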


Finally, we  analyze how the effective potential varies with respect to disorder and interaction. This will be relevant to the Integrated Density of States regularity. More precisely
\begin{lem}\label{comparepotentials} For a fixed $\omega \in \Omega$
\begin{equation} 
|V_{\lambda,g}(n)-V_{\lambda',g'}(n)|\leq \frac{C_5(d,\|F\|_{\infty},g,\eta,\nu,\omega)}{1-gC_6(d,\|F\|_{\infty},g,\eta,\nu)}|\lambda-\lambda'| +C_7(d,\|F\|_{\infty},g,\eta,\nu)|g-g'|.
\end{equation} 
\end{lem}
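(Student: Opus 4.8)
The plan is to treat the two parameters separately, first varying $g$ with $\lambda$ fixed, then varying $\lambda$ with $g$ fixed, and to combine the resulting bounds by the triangle inequality. Write $V_{\lambda,g}$ for the effective potential solving $V_{\lambda,g}(n)=\mel{n}{F(H_0+\lambda\omega+gV_{\lambda,g})}{n}$, which exists and is unique by Proposition \ref{existencepot}. For the first step I would subtract the two fixed-point equations,
\begin{equation}
V_{\lambda,g}(n)-V_{\lambda,g'}(n)=\mel{n}{F(H_0+\lambda\omega+gV_{\lambda,g})}{n}-\mel{n}{F(H_0+\lambda\omega+g'V_{\lambda,g'})}{n},
\end{equation}
and estimate the right-hand side using the contraction bound (\ref{contraction1}) of Proposition \ref{Contraction} with $T=H_0+\lambda\omega$, $V=gV_{\lambda,g}$, $W=g'V_{\lambda,g'}$. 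Taking $m=n$ there and using $\|gV_{\lambda,g}-g'V_{\lambda,g'}\|_\infty\le |g|\,\|V_{\lambda,g}-V_{\lambda,g'}\|_\infty+|g-g'|\,\|V_{\lambda,g'}\|_\infty$ together with the uniform bound $\|V_{\lambda,g'}\|_\infty\le\|F\|_\infty$, one gets an inequality of the form
\begin{equation}
\|V_{\lambda,g}-V_{\lambda,g'}\|_\infty\le c\,|g|\,\|V_{\lambda,g}-V_{\lambda,g'}\|_\infty+c\,\|F\|_\infty\,|g-g'|,
\end{equation}
with $c=\frac{72\sqrt2\,\|F\|_\infty}{\eta(1-e^{-\nu})^d}$; absorbing the first term (legitimate since $|g|<g_d$ makes $c|g|<1$) yields $\|V_{\lambda,g}-V_{\lambda,g'}\|_\infty\le C_7(d,\|F\|_\infty,g,\eta,\nu)\,|g-g'|$, which is exactly the second term in the claimed bound, uniformly in $n$ and $\omega$.

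For the $\lambda$-dependence I would do the analogous subtraction with $g$ fixed,
\begin{equation}
V_{\lambda,g}(n)-V_{\lambda',g}(n)=\mel{n}{F(H_0+\lambda\omega+gV_{\lambda,g})}{n}-\mel{n}{F(H_0+\lambda'\omega+gV_{\lambda',g})}{n},
\end{equation}
and split this into (a) the change coming from $\lambda\omega\to\lambda'\omega$ at fixed effective potential $gV_{\lambda,g}$, and (b) the change from $gV_{\lambda,g}\to gV_{\lambda',g}$ at fixed disorder $\lambda'\omega$. For (b) the contraction bound again gives a factor $c|g|<1$ times $\|V_{\lambda,g}-V_{\lambda',g}\|_\infty$, to be absorbed. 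For (a) I need to differentiate the matrix element of $F$ with respect to the overall disorder strength: using the integral representation $F(H)=\frac{1}{2\pi i}\int(\frac{1}{H+t-i\eta}-\frac{1}{H+t+i\eta})f(t)\,dt$ and the resolvent identity as in the proof of Lemma \ref{decay1}, the derivative in $\lambda$ brings down $\sum_m\omega(m)\,r(m,n)$ with $\tilde r(m,n)=\int|r(m,n)|\,dt\le\frac{48\sqrt2\pi}{\eta}e^{-2\nu|m-n|}$; hence the $\omega$-dependent factor $C_5(d,\|F\|_\infty,g,\eta,\nu,\omega)$ in the statement is controlled by $\sum_m|\omega(m)|e^{-2\nu|m-n|}$, finite for any fixed configuration. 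Collecting terms and dividing by $1-gC_6$ produces the first term of the asserted inequality.

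The main obstacle, and the only place requiring care, is the term (a): one must extract a bound on $\frac{\partial}{\partial\lambda}\mel{n}{F(H_0+\lambda\omega+gV)}{n}$ at fixed $V$ that is summable against $e^{-2\nu|m-n|}$ and explicit in the parameters. This is essentially a re-run of the first differentiation step in the proof of Lemma \ref{decay1} (where the derivative was taken in a single $\omega(m)$) but now with the whole potential scaled, so the bound is a weighted sum of the single-site bounds; the combinatorial decay sum $\sum_m|\omega(m)|e^{-2\nu|m-n|}$ converges pointwise in $\omega$ but is not uniform in $\omega$, which is precisely why the constant $C_5$ is allowed to depend on $\omega$ in the statement. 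Everything else is a direct application of Proposition \ref{Contraction}, the Combes–Thomas and cancellation bounds already invoked above, and absorption of a contraction term, so no genuinely new estimate is needed.
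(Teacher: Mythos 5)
Your argument reaches the stated bound and is correct in substance, but it is organized differently from the paper. The paper does not split the parameters: it writes the resolvent identity for $R_{\lambda,g}(z)-R_{\lambda',g'}(z)$ directly, which produces at once the $(\lambda-\lambda')V_{\omega}$ term, the $(g-g')V_{\lambda',g'}$ term, and the self-consistent term $-gR_{\lambda,g}(z)\left(V_{\lambda,g}-V_{\lambda',g'}\right)R_{\lambda',g'}(z)$; after the $z,\bar z$ subtraction, the integral representation (\ref{intrep}) and the kernel bounds of lemma \ref{decay1} turn this into a single pointwise inequality in $n$ for $|V_{\lambda,g}(n)-V_{\lambda',g'}(n)|$, which is closed by the weighted absorption result \cite[Theorem 9.2]{A-W-B}, producing the factor $(1-gC_6)^{-1}$. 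You instead split by the triangle inequality into a pure $g$-comparison and a pure $\lambda$-comparison, dispose of the $g$-comparison entirely with the contraction estimate of proposition \ref{Contraction} in sup norm (cleaner and more modular than re-deriving kernel bounds), and for the $\lambda$-comparison redo essentially the same cancellation/Combes--Thomas kernel estimate as the paper for the $(\lambda-\lambda')\omega$ contribution, correctly identifying it as the only place where unboundedness of $\omega$ forces an $\omega$-dependent constant. What each route buys: yours reuses the already-established contraction and isolates the delicate term; the paper's single expansion avoids introducing the intermediate potential and stays pointwise in $n$ from the start.

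One step in your $\lambda$-comparison needs care. You absorb the self-consistent term by taking suprema over $n$ and dividing by $1-c|g|$, but the inhomogeneous term of that inequality is of size $|\lambda-\lambda'|\sum_m|\omega(m)|e^{-2\nu|m-n|}$, and its supremum over $n$ is almost surely infinite for unbounded single-site distributions (Gaussian, exponential), which are allowed by $(A_2)$--$(A_4)$; so the absorption cannot literally be performed in $\ell^{\infty}$. The fix is to absorb pointwise in $n$: iterate the inequality as a Neumann series with the kernel bounded by a small constant times $e^{-2\nu|n-l|}$, using the a priori bound $\|V_{\lambda,g}\|_{\infty}\leq\|F\|_{\infty}$ to control the remainder, or invoke \cite[Theorem 9.2]{A-W-B} exactly as the paper does. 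This yields the same conclusion, with $C_5$ depending on $\omega$ (and, strictly speaking, on the site $n$ through $\sum_m|\omega(m)|e^{-2\nu|m-n|}$), which is also how the paper's constant must be read.
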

Note when $\lambda\neq \lambda'$ the bound depends on $\omega$ through the constant $C_5$.
\begin{proof}
Let $R_{\lambda,g}(z)=\frac{1}{H_0+\lambda \omega+gV_{\lambda,g}-z}$ and $R_{\lambda',g'}(z)=\frac{1}{H_0+\lambda' \omega+g'V_{\lambda',g'}-z}$ for $z=t+i\eta$. Similarly as in the above proofs, it is immediate to check that
\begin{equation}
R_{\lambda,g}(z)-R_{\lambda',g'}(z)=(\lambda-\lambda')R_{\lambda,g}(z)V_{\omega}R_{\lambda',g'}(z)+(g-g')R_{\lambda,g}(z)V_{\lambda',g'}R_{\lambda',g'}(z)
-gR_{\lambda,g}(z)\left(V_{\lambda,g}-V_{\lambda',g'}\right)R_{\lambda',g'}(z).
\end{equation}
 Replacing $z$ by $\bar{z}$ and subtracting the resulting equations:
 \begin{align*}\left(R_{\lambda,g}(z)-R_{\lambda,g}(\bar{z})\right)-\left(R_{\lambda',g'}(z) -R_{\lambda',g'}(\bar{z})\right)=&
 \left(R_{\lambda,g}(z)-R_{\lambda,g}(\bar{z})\right)\left((\lambda-\lambda')V_{\omega}+(g-g')V_{\lambda',g'}\right)R_{\lambda',g'}(z) \\
 &+R_{\lambda,g}(z)\left((\lambda-\lambda')V_{\omega}+(g-g')V_{\lambda',g'}\right)\left(R_{\lambda',g'}(z)-R_{\lambda',g'}(\bar{z})\right)\\
 &-gR_{\lambda,g}(z)\left(V_{\lambda,g}-V_{\lambda',g'}\right)\left(R_{\lambda',g'}(z)-R_{\lambda',g'}(\bar{z})\right).
 \end{align*}
 Taking matrix elements, multiplying by $f(t)$, integrating with respect to $t$ and taking absolute values we can read from the representation (\ref{intrep}) that, denoting
  \begin{equation}\label{differencekernel}
  K_{\lambda,g}(n,l)=|G_{\lambda,g}(n,l;z)-G_{\lambda,g}(n,l;\bar{z})|,
 \end{equation}
 \begin{align*}|V_{\lambda,g}(n)-V_{\lambda',g'}(n)|\leq&\frac{3\|F\|_{\infty}}{2\pi}|\lambda-\lambda'|\sum_{l\in \mathbb{Z}^d}|\omega(l)|\int^{\infty}_{-\infty}\left({{K}}_{\lambda,g}(n,l){{K}}_{\lambda',g'}(l,n)+|G_{\lambda,g}(n,l)|{\tilde{K}}_{\lambda',g'}(l,n)\right)\,dt\\
 &+\frac{3\|F\|^2_{\infty}}{2\pi}|g-g'|\sum_{l\in \mathbb{Z}^d}\int^{\infty}_{-\infty}\left(|G_{\lambda,g}(n,l)|{{K}}_{\lambda',g'}(l,n)+G_{\lambda',g'}(n,l)|{{K}}_{\lambda,g}(l,n) \right)\,dt.\\
 &+g\sum_{l\in \mathbb{Z}^d}\int^{\infty}_{-\infty}|G_{\lambda,g}(n,l)||V_{\lambda,g}(l)-V_{\lambda',g'}(l)|{{K}}_{\lambda',g'}(l,n)\,dt.
 \end{align*}
  Using equation \ref{cancellemma} together with \cite[Theorem 9.2]{A-W-B} we conclude the proof.
\end{proof}

\subsection{Improvements}\label{improvesec}
We will now improve upon the previous bounds. Specifically, we need robust bounds which also reflect the decay of the derivatives of $V_{\mathrm{eff}}(n)$ when the local potential $\omega(n)$ is large.
 The improvements on this section will be important for a general fluctuation analysis on section \ref{proofoflemma} and for localization in the one dimensional setting. 
 Before stating the main result of the section we start with the following deterministic estimate, which incorporates ideas from \cite[Lemma 3]{A-Graf}.
 \begin{lem}\label{adaptAG}
 \begin{equation}|G(m,l;t+i\eta)|\leq \sqrt{2}\mel{l}{\frac{1}{(H-t)^2+\eta^2}}{l}^{1/2}e^{-\nu|m-l|}
 \end{equation}
 \end{lem}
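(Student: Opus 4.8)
The statement is a Combes--Thomas-type bound, and the plan is to prove it by the exponential-conjugation method, in the spirit of \cite[Lemma 3]{A-Graf}. Fix the site $l$, write $z=t+i\eta$, and work on a finite volume $\Lambda$ (the estimate is volume-independent; for the full lattice one replaces the weight $e^{\nu|n-l|}$ by the truncated weight $e^{\nu\min(|n-l|,R)}$, carries out the argument uniformly in $R$, and lets $R\to\infty$, the a priori Combes--Thomas decay \cite[Theorem 10.5]{A-W-B} guaranteeing that the matrix elements converge). Let $M$ be the bounded, boundedly invertible multiplication operator $(M\varphi)(n)=e^{\nu|n-l|}\varphi(n)$. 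Since $H$ is the sum of $H_0$ and a real diagonal potential, and the diagonal part commutes with $M$, the conjugation only affects the hopping term:
\[
MHM^{-1}=H+W,\qquad W(m,n)=H_0(m,n)\bigl(e^{\nu(|m-l|-|n-l|)}-1\bigr).
\]

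The first step is the operator-norm bound $\|W\|<\eta/2$ on $\ell^2(\Lambda)$. From the triangle inequality $\bigl||m-l|-|n-l|\bigr|\le|m-n|$ and the elementary estimate $|e^{a}-1|\le e^{|a|}-1$ one gets $|W(m,n)|\le|H_0(m,n)|\bigl(e^{\nu|m-n|}-1\bigr)$, so, using $|H_0(m,n)|=|H_0(n,m)|$ for self-adjoint $H_0$, both the row and the column sums of $|W(m,n)|$ are bounded by $\zeta(\nu)$; the Schur test then gives $\|W\|\le\zeta(\nu)<\eta/2$ by assumption $(A_1)$. Conjugating the resolvent yields
\[
|G(m,l;z)|=e^{-\nu|m-l|}\bigl|\mel{m}{(H+W-z)^{-1}}{l}\bigr|\le e^{-\nu|m-l|}\,\bigl\|(H+W-z)^{-1}\delta_l\bigr\|,
\]
since $\|\delta_m\|=1$.

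It remains to bound $\|(H+W-z)^{-1}\delta_l\|$ by a constant times $\|(H-z)^{-1}\delta_l\|=\mel{l}{\frac{1}{(H-t)^2+\eta^2}}{l}^{1/2}$. From the second resolvent identity $(H+W-z)^{-1}\delta_l=(H-z)^{-1}\delta_l-(H-z)^{-1}W(H+W-z)^{-1}\delta_l$ together with $\|(H-z)^{-1}W\|\le\|W\|/\eta<1/2$, a Neumann series already gives $\|(H+W-z)^{-1}\delta_l\|\le(1-\|W\|/\eta)^{-1}\|(H-z)^{-1}\delta_l\|$. To reach the sharp constant $\sqrt2$ one refines this as in \cite[Lemma 3]{A-Graf}: setting $R=(H+W-z)^{-1}$ and using self-adjointness of $H$ one obtains the quadratic-form bound $\eta\|R\delta_l\|^2\le\mathrm{Im}\,\mel{l}{R}{l}+\|W\|\,\|R\delta_l\|^2$, and then, expanding $\mel{l}{R}{l}$ once more by the resolvent identity, $\mathrm{Im}\,\mel{l}{R}{l}\le\mathrm{Im}\,G(l,l;z)+\|W\|\,\|(H-z)^{-1}\delta_l\|\,\|R\delta_l\|$ with $\mathrm{Im}\,G(l,l;z)=\eta\,\|(H-z)^{-1}\delta_l\|^2$; solving the resulting quadratic inequality for the ratio $\|R\delta_l\|/\|(H-z)^{-1}\delta_l\|$ produces the factor $\sqrt2$. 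Combining this with the preceding display finishes the proof. The conjugation and the Schur estimate are routine; the only points that require care are the justification of the unbounded conjugation (handled by the finite-volume/truncation device above) and the optimization of the numerical constant in the last step.
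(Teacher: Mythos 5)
Your setup is fine and matches the paper's in spirit: conjugating by the truncated weight $e^{\nu\min(|n-l|,R)}$, noting that only $H_0$ is affected, and using the Schur test with $(A_1)$ to get $\|W\|\leq\zeta(\nu)<\eta/2$ is exactly the right opening, and your identity $\|(H-z)^{-1}\delta_l\|=\mel{l}{\frac{1}{(H-t)^2+\eta^2}}{l}^{1/2}$ is correct. The gap is in the last step. Your two inequalities, $\eta\|R\delta_l\|^2\leq \mathrm{Im}\,\mel{l}{R}{l}+\|W\|\|R\delta_l\|^2$ and $\mathrm{Im}\,\mel{l}{R}{l}\leq \eta\|(H-z)^{-1}\delta_l\|^2+\|W\|\,\|(H-z)^{-1}\delta_l\|\,\|R\delta_l\|$, are both valid, but solving the resulting quadratic does \emph{not} produce $\sqrt2$. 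Writing $w=\|W\|$ and $x=\|R\delta_l\|/\|(H-z)^{-1}\delta_l\|$, the inequality is $(\eta-w)x^2-wx-\eta\leq 0$, whose discriminant is $w^2+4\eta(\eta-w)=(2\eta-w)^2$, so the positive root is exactly $\frac{\eta}{\eta-w}$. In other words, your ``refinement'' reproduces precisely the Neumann-series constant $(1-\|W\|/\eta)^{-1}$: under the only available hypothesis $\|W\|<\eta/2$ this is merely $<2$, and it equals $\sqrt2$ only if $\|W\|\leq(1-2^{-1/2})\eta\approx 0.29\,\eta$, which $(A_1)$ does not give. So as written the proof establishes the lemma with a constant that can be arbitrarily close to $2$, not the stated $\sqrt2$.

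The paper gets the factor $\sqrt2$ by a genuinely quadratic-form device rather than a resolvent expansion: with $H_f=e^fHe^{-f}$ it invokes the operator inequality $(H_f-i\eta)(H_f^{\ast}-i\eta)\geq\frac12\left((H-t)^2+\frac{\eta^2}{2}\right)$ from \cite[Eq.\ D.9]{A-Graf}, which yields the weighted norm bound $\|(H_f-t-i\eta)^{-1}\big((H-t)^2+\tfrac{\eta^2}{2}\big)^{1/2}\|\leq\sqrt2$, and then applies this operator to $\delta_l$; note this produces $\eta^2/2$ (not $\eta^2$) inside the diagonal matrix element, which is the form actually used later (e.g.\ in lemma \ref{improvedct}). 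To repair your argument you should either import that operator inequality (proving it requires the lower bound at the level of quadratic forms, $\|(A+W^{\ast})\psi\|^2$-type completions, not a Neumann series) or be content with the constant $\frac{\eta}{\eta-\zeta(\nu)}<2$, which would still suffice for the qualitative conclusions of the paper but is not the statement you were asked to prove.
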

 \begin{proof} To keep the notation simple, we set $t=0$ without loss of generality. Let $H_{f}=e^{f}He^{-f}$ where $f(n)=\nu\min\{|n-l|,R\}$ for a fixed $l\in \mathbb{Z}^d$ and $R>0$. By choosing $R$ sufficiently large, we may assume that $|m-l|<R$. We then have
 \begin{equation*}\label{trick}e^{\nu|m-l|}G(m,l;i\eta)=\mel{m}{(H_f-i\eta)^{-1}}{l}.
 \end{equation*}
 We claim that
 \begin{equation}\label{claimAG}
 ||(H_f-i\eta)^{-1}(H^2+\eta^2)^{1/2}||\leq \sqrt{2}.
 \end{equation}
 Indeed,
 \begin{align*}
 ||(H_f-i\eta)^{-1}(H^2+\frac{\eta^2}{2})^{1/2}||^2&=||(H^2+\frac{\eta^2}{2})^{1/2}(H^{\ast}_f+i\eta)^{-1}(H_f-i\eta)^{-1}(H^2+\frac{\eta^2}{2})^{1/2}||\\
 &=||(H^2+\frac{\eta^2}{2})^{1/2}\frac{1}{(H_f-i\eta)(H^{\ast}_f-i\eta)}(H^2+\frac{\eta^2}{2})^{1/2}|| 
 \end{align*}
 where by \cite[Eq D.9]{A-Graf} (with $f$ replaced by $-f$) we have
 \begin{equation*}(H_f-i\eta)(H^{\ast}_f-i\eta)\geq \frac{1}{2}\left( H^2+\frac{\eta^2}{2}\right)
 \end{equation*} 
 showing the claim in (\ref{claimAG}). Equation (\ref{adaptAG}) will now follow from
\begin{align*}|\mel{m}{(H_f-i\eta)^{-1}}{l}|&\leq \|(H_f-i\eta)^{-1}(H^2+\frac{\eta^2}{2})^{1/2}\|\,|(H^2+\frac{\eta^2}{2})^{-1/2}\delta_l|\\
&\leq\sqrt{2}\mel{l}{(H^2+\frac{\eta^2}{2})^{-1}}{l}^{1/2}.\qedhere
\end{align*} 
 \end{proof}

\begin{lem}\label{improvedct} There exists $C_7(\lambda,\eta,d,g,\|F\|_{\infty},\nu)>0$ such that, for $m\neq n$, 
\begin{equation}\label{improvedct1}
\max\{\,|\omega(n)|,|\omega(m)|\,\}\Big|\frac{\partial V(n)}{\partial \omega(m)}\Big|\leq C_7e^{-2\nu|m-n|}
\end{equation}

and, for $n\neq n_0$,
\begin{equation}\label{improvedct2}\big|\omega(n)(\omega_{\alpha}(n)-\omega(n))\big|\leq
\frac{C_7|g|}{\lambda-|g|C_1}\left(|\alpha-U(n_0)|+2\frac{|g|\|F\|_{\infty}}{\lambda}+\frac{1}{\left(1-e^{-\nu}\right)^d}\right)e^{-\nu|n-n_0|}.
\end{equation}
Moreover, whenever $\frac{|g|}{\lambda}C_1<1$,  $C_7$ can be chosen to be uniformly bounded as a function of the parameters $\lambda$ and $g$.

\end{lem}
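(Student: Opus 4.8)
The plan is to revisit the integral representation $V(n,\omega)=\frac{1}{2\pi i}\int_{-\infty}^{\infty}K(n,t,\omega)f(t)\,dt$ and the resolvent differentiation formula (\ref{ResDeriv}), but this time to extract a factor of $|\omega(n)|$ (or $|\omega(m)|$) by exploiting the structure of the operator $H=H_0+\lambda\omega+gV_{\mathrm{eff}}$. The crucial algebraic input is the boundedness of $tF(t)$ and $t(1-F(t))$ highlighted in the Notation section: writing $z=t+i\eta$ and using $ H\frac{1}{H-z}=I+z\frac{1}{H-z}$, one can trade a power of $H$ (hence, after subtracting $H_0+gV_{\mathrm{eff}}$, a power of $\lambda\omega(n)$) against a decaying resolvent factor, at the cost of an extra bounded kernel. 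Concretely, I would insert $|\omega(n)|\lesssim \frac{1}{\lambda}\big(|(H\frac{1}{H-z}\delta_n)(n)| + \|H_0+gV_{\mathrm{eff}}\|_{\infty}|G(n,n;z)|\big)$-type estimates into the bounds already derived for $\big|\frac{\partial V(n)}{\partial\omega(m)}\big|$ in the proof of Lemma \ref{decay1}, so that the $\max\{|\omega(n)|,|\omega(m)|\}$ prefactor gets absorbed. The key technical tool for keeping the exponential decay is Lemma \ref{adaptAG}, which bounds $|G(m,l;t+i\eta)|$ by $\sqrt{2}\,\langle l|((H-t)^2+\eta^2)^{-1}|l\rangle^{1/2}e^{-\nu|m-l|}$ without any $1/\eta$ loss, so that the extra factor coming from $H\frac{1}{H-z}$ is controlled after integration in $t$ using Cauchy–Schwarz and $\int \langle l|((H-t)^2+\eta^2)^{-1}|l\rangle\,dt = \pi/\eta$ together with the fact that $t\mapsto tf(t)$ is integrable against the Poisson-type weights (this is exactly where $tF(t)$ bounded enters).

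For (\ref{improvedct1}) I would proceed in two cases. When $|\omega(m)|$ is the larger, the factor $|\omega(m)|$ is absorbed through the $r(m,n)$ term (the inhomogeneity in (\ref{ResDeriv})), where the resolvents are evaluated between $\delta_m$ and $\delta_n$; the identity $H\frac{1}{H-z}=I+z\frac{1}{H-z}$ applied on the $m$-side produces the needed power of $\lambda\omega(m)$ while Lemma \ref{adaptAG} supplies the $e^{-2\nu|m-n|}$ decay. When $|\omega(n)|$ is the larger, I apply the same trick on the $n$-side but must now handle the convolution term $g\sum_l \tilde K(l,n)\big|\frac{\partial V(l)}{\partial\omega(m)}\big|$; here I bootstrap, using the already-established unweighted and weighted bounds of Lemma \ref{decay1} on $\big|\frac{\partial V(l)}{\partial\omega(m)}\big|$ to close a fixed-point/Neumann-series estimate of the form $C_7 \le \frac{\lambda D}{1-g\theta}$ with $D,\theta$ as in Lemma \ref{decay1}, which also yields the claimed uniform boundedness of $C_7$ in $\lambda,g$ when $\frac{|g|}{\lambda}C_1<1$. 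This is the same schematic as \cite[Theorem 9.2]{A-W-B}, just with the improved kernel.

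For (\ref{improvedct2}) I would mimic the proof of Lemma \ref{resamp}, but multiply the mean-value estimate by $|\omega(n)|$ and use (\ref{improvedct1}) in place of (\ref{lemmapotential1}) at the site-$n_0$ term: $\big|\omega(n)\big(\omega_\alpha(n)-\omega(n)\big)\big| \le \frac{|g|}{\lambda}\big|\omega(n)\frac{\partial V_{\mathrm{eff}}(n,\hat\omega_\alpha)}{\partial\omega(n_0)}\big|\big(|\alpha-U(n_0)|+2\frac{|g|\|F\|_\infty}{\lambda}\big) + \frac{|g|}{\lambda}\sum_{l\ne n_0}\big|\omega(n)\frac{\partial V_{\mathrm{eff}}(n,\hat\omega_\alpha)}{\partial\omega(l)}\big|\big|\omega_\alpha(l)-\omega(l)\big|$. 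The first term is $\le C_7 e^{-\nu|n-n_0|}$ times the stated bracket directly from (\ref{improvedct1}) (with $2\nu$ relaxed to $\nu$); for the sum one needs $\big|\omega(n)\frac{\partial V_{\mathrm{eff}}(n)}{\partial\omega(l)}\big|$, which is \emph{not} covered by (\ref{improvedct1}) when $l\ne n$ and $|\omega(n)|$ is not the max, so I would use the cruder bound $|\omega(n)|\,\big|\frac{\partial V_{\mathrm{eff}}(n)}{\partial\omega(l)}\big| \le \big(|\omega(n)|+|\omega(l)|\big)\big|\frac{\partial V_{\mathrm{eff}}(n)}{\partial\omega(l)}\big| \le C_7 e^{-2\nu|n-l|} + |\omega(l)|\,\big|\frac{\partial V_{\mathrm{eff}}(n)}{\partial\omega(l)}\big|$; the last piece gets the extra $\frac{1}{(1-e^{-\nu})^d}$ factor in the statement after a geometric summation and an application of Lemma \ref{resamp}, while the first piece reinvokes \cite[Theorem 9.2]{A-W-B} with weight $W(n)=e^{\nu|n-n_0|}$ and kernel governed by $C_7 e^{-2\nu|\cdot|}$. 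The main obstacle I anticipate is bookkeeping the $|\omega(n)|$ factor cleanly through the convolution/fixed-point step in (\ref{improvedct1}): one must be careful that, after pulling out $|\omega(n)|$ on the left, the right-hand side still involves $\big|\frac{\partial V(l)}{\partial\omega(m)}\big|$ (no stray $|\omega(l)|$), so that the already-proven Lemma \ref{decay1} bounds genuinely close the estimate rather than producing a new, a priori uncontrolled, weighted quantity.
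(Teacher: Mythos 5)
Your skeleton is essentially the paper's: you use the pointwise Schr\"odinger (depleted resolvent) identity at the site carrying the large potential to convert $\lambda U(n)G(n,m;z)$ (and $\omega(n)$ differs from $U(n)$ only by the bounded term $\frac{g}{\lambda}V_{\mathrm{eff}}(n)$) into $\delta_{nm}+zG(n,m;z)-\sum_l H_0(n,l)G(l,m;z)$, you control the resulting kernels with Lemma \ref{adaptAG} and the contour representation, you close the convolution term by the same \cite[Theorem 9.2]{A-W-B} bootstrap as in Lemma \ref{decay1}, and you obtain (\ref{improvedct2}) by rerunning the proof of Lemma \ref{resamp} with (\ref{improvedct1}) in place of (\ref{lemmapotential1}); all of this matches the paper (which is in fact terser than you on the last step).

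There is, however, a genuine gap at the central analytic step. Your claim that the extra factor $zG$ is harmless because ``$tf(t)$ is integrable against the Poisson-type weights'' is false on the negative half-line: since $F(t)\to 1$ as $t\to-\infty$, the weight $f=F_++F_-+D\ast F$ tends to a nonzero constant there, so $|tf(t)|\to\infty$ and the $t$-integral of $|z|\,|f(t)|$ against the kernels $\langle n|((H-t)^2+\eta^2)^{-1}|n\rangle^{1/2}\langle m|\cdots|m\rangle^{1/2}$ is not bounded by $C/\eta$. The ratio $|z|/|\lambda U(n)-z|\leq 1$ only on the half-line where $t$ has sign opposite to $U(n)$, so when $U(n)<0$ (which occurs since $\mathrm{supp}\,\rho=\mathbb{R}$) the region $t\to-\infty$ is left uncontrolled by your argument. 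The missing idea is the paper's dichotomy on the \emph{sign of $U(n)$}, not on which of $|\omega(m)|,|\omega(n)|$ is larger (your case split does not address the problem): for $U(n)\geq 0$ one uses the ratio bound for $t<0$ and the decay of $tF(t)$ for $t\to+\infty$; for $U(n)<0$ one first replaces $F$ by $F-1$ in the representation of $\partial V(n)/\partial\omega(m)$ --- legitimate for $m\neq n$ because the constant only shifts the diagonal and does not change the derivative --- so that the new weight $f^{\ast}$ decays as $t\to-\infty$ (this is precisely where the boundedness of $t(1-F(t))$, which you cite in your opening sentence but never actually deploy, enters), and uses the ratio bound for $t>0$. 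With this sign-dependent substitution inserted, your estimate closes and coincides with the paper's proof.
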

\begin{proof}
Recall that $U(n)=\omega(n)+\frac{g}{\lambda}V_{\mathrm{eff}}(n)$ denotes the ``full" potential at site $n$. We split the proof in two cases.
\setlist[enumerate,1]{label={(\roman*)}}
\begin{enumerate}
\item{ Case one: $U(n)\geq 0$.\par
 Let us start by noting that lemma \ref{adaptAG} implies that for $n,l \in \mathbb{Z}^d$
 \begin{equation*}
 \int^{\infty}_{-\infty}|G(n,l;t+i\eta)G(l,n;t+i\eta)|\,dt\leq\frac{2\sqrt{2}\pi}{\eta}e^{-2\nu|n-l|}.
 \end{equation*}

From the previous section we already know that
\begin{equation}\label{repderv}
\frac{\partial V(n)}{\partial \omega(m)}=\int^{\infty}_{-\infty}\left( -g\sum_{l}r(n,l)\frac{\partial V(l)}{\partial \omega(m)}+
\lambda r(m,n)\right)f(t)\,dt
\end{equation}
where $f(t)=F_{+}(t+i\eta)+F_{-}(t-i\eta)-D\ast F(t)$ and
$$r(m,n)=G(n,m;t+i\eta)G(m,n;t+i\eta)-G(n,m;t-i\eta)G(m,n;t-i\eta).$$
 Observe that, for $z=t+i\eta$ and $n\neq m$,
\begin{align*}\lambda|U(n)G(n,m;t+i\eta)G(m,n;t+i\eta)|
&=\frac{\lambda|U(n)|}{|\lambda U(n)-z|}\sum_{l} |H_{0}(n,l)G(l,m;t+i\eta)G(m,n,t+i\eta)|\\
&\leq \left(1+\frac{|z|}{|\lambda U(n)-z|}\right)\sum_{l}
|H_{0}(n,l)G(l,m;t+i\eta)G(m,n,t+i\eta)|
\end{align*}
where we made use of the identity

\begin{equation}\label{depleted}
(\lambda U(m)-z)G(n,m;z)=\delta_{mn}-\sum_{l}H_{0}(n,l) G(l,m;z).
\end{equation}
Note that if $U(n)\geq 0$ and $t=\mathrm{Re}z<0$, then
\begin{equation}\frac{|z|}{|\lambda U(n)-z|}\leq 1.
\end{equation}
Using the fact that $tf(t)$ goes to zero as $t\to \infty$ we conclude that
\begin{equation}
\int^{\infty}_{-\infty}|U(n)G(n,m;t+i\eta)G(m,n;t+i\eta)||f(t)|\,dt\leq \frac{C(\nu,\eta,\|F\|_{\infty})}{\lambda}e^{-2\nu|m-n|}.
\end{equation}
 Since a similar equation holds with $m$ replaced by $l$, we can proceed as in the previous section
 and, using the exponential decay of $\frac{\partial V(n)}{\partial \omega(m)}$, conclude the proof.}
 \item{ Case two: $U(n)<0$.
 
 In this case, the argument given above must be modified to take into account that the inequality \begin{equation}\frac{|z|}{|\lambda U(n)-z|}\leq 1.
\end{equation}
is satisfied when $t=\mathrm{Re}>0$. In this case, the use of (\ref{repderv}) would result in a problem as $tf(t)$ is unbounded as $t\to -\infty$. This can be addressed by observing that the Fermi-Dirac function $F(z)$ has the following symmetry
\begin{equation}\frac{1}{1+e^{\beta(z-\mu)}}=1-\frac{1}{1+e^{\beta(-z+\mu)}}.
\end{equation}
Hence we can make use of the representation (\ref{repderv}) corresponding to
\begin{equation}
-\frac{1}{1+e^{\beta(-z+\mu)}}:=F^{\ast}_{\mu}(z)
\end{equation}
since, for $m\neq n$, the constant term does not affect the calculation of $\frac{\partial V(n)}{\partial \omega(m)}$. Denoting by \begin{equation*}f^{\ast}(t)=F^{\ast}_{+}(t+i\eta)+F^{\ast}_{-}(t-i\eta)-D\ast F^{\ast}(t)
\end{equation*}
 we reach
\begin{equation}\label{repderv2}
\frac{\partial V(n)}{\partial \omega(m)}=\int^{\infty}_{-\infty}\left( -g\sum_{l}r(n,l)\frac{\partial V(l)}{\partial \omega(m)}+
r(m,n)\right)f^{\ast}(t)\,dt
\end{equation}
where now $tf^{\ast}(t)\to 0$ as $t\to -\infty$. Proceeding as in the first case the proof is finished, showing (\ref{improvedct1}). Following the proof of lemma \ref{resamp} and using (\ref{improvedct1}) we conclude (\ref{improvedct2})\qedhere}

 \end{enumerate}
 \end{proof}
 \begin{lem}\label{localbeh} Let $\Lambda_1$ and $\Lambda_2$ be subsets of $\mathbb{Z}^d$ with $dist(\Lambda_1,\Lambda_2)\geq r$. Let $V^{c}_{2}$ be the effective potential defined by
 \begin{equation*}
 V^{c}_{2}(n)=\mel{n}{F(H^{\Lambda^{c}_{2}})}{n}\,\,\,n\in \mathbb{Z}^d.
 \end{equation*}
 where $H^{\Lambda^{c}_{2}}$ denotes the restriction of $H$ to the complement of $\Lambda_2$.
 
 Then, for any $n\in \Lambda_1$
 \begin{equation}
 \Big|\frac{\partial V(n)}{\partial \omega(m)}-\frac{\partial V^{c}_2(n)}{\partial \omega(m)}\Big|\leq C(\eta,d,\lambda,g,\|F\|_{\infty},\nu)e^{-\nu(|m-n|+r)}
 \end{equation}
 
 \end{lem}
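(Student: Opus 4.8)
The plan is to run the fixed-point/resolvent scheme of Lemma~\ref{decay1} once more, now for the \emph{difference} $D(n):=\frac{\partial V(n)}{\partial\omega(m)}-\frac{\partial V_2^{c}(n)}{\partial\omega(m)}$, defined for $n\in\Lambda_2^{c}$ (which contains $\Lambda_1$; if $m\in\Lambda_2$ then $\partial V_2^{c}/\partial\omega(m)=0$ and the bound is immediate from $|n-m|\geq r$ and (\ref{lemmapotential1}), so we may assume $m\in\Lambda_2^{c}$). Both derivatives solve the self-consistent linear equation (\ref{repderv}): the first with the Green's functions $G$ of $H$ and the summation index $l$ running over $\mathbb{Z}^{d}$, the second with the Green's functions $G^{c}$ of $H^{\Lambda_2^{c}}$ and $l$ restricted to $\Lambda_2^{c}$. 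Subtracting and regrouping the two equations, $D$ satisfies
\[
D(n)=-g\int_{-\infty}^{\infty}\sum_{l\in\Lambda_2^{c}}r(n,l)\,D(l)\,f(t)\,dt+S(n),
\]
where the kernel is exactly the one treated in Section~\ref{decaysection}, so under the standing smallness hypothesis on $g$ it acts as a contraction on $\ell^{\infty}(\Lambda_2^{c})$ equipped with the weight $W(k)=e^{\nu|m-k|}$; and the source is the sum of three ``surgery'' terms: (i) $-g\int\sum_{l\in\Lambda_2}r(n,l)\frac{\partial V(l)}{\partial\omega(m)}f\,dt$, in which $l$ is forced into $\Lambda_2$; (ii) $\lambda\int(r(m,n)-r^{c}(m,n))f\,dt$; and (iii) $-g\int\sum_{l\in\Lambda_2^{c}}(r(n,l)-r^{c}(n,l))\frac{\partial V_2^{c}(l)}{\partial\omega(m)}f\,dt$, where $r^{c}$ is the analogue of $r$ with $G$ replaced by $G^{c}$.

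By \cite[Theorem 9.2]{A-W-B} it then suffices to show $\sup_{n\in\Lambda_2^{c}}W(n)|S(n)|\leq C e^{-\nu r}$. Term (i) is handled directly: for $n\in\Lambda_1$ and $l\in\Lambda_2$ one has $|n-l|\geq r$, so the bound $\tilde r(n,l)\leq\frac{C}{\eta}e^{-2\nu|n-l|}$ from Section~\ref{decaysection} gives $\tilde r(n,l)\leq\frac{C}{\eta}e^{-\nu r}e^{-\nu|n-l|}$, and combined with the decay of $\frac{\partial V(l)}{\partial\omega(m)}$ from Lemma~\ref{decay1} and a convolution of exponentials one gets $W(n)\cdot(\mathrm{i})\leq C|g|e^{-\nu r}$. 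Terms (ii) and (iii) require the key input that the Green's-function differences $G-G^{c}$ carry an extra factor $e^{-\nu\,\mathrm{dist}(\,\cdot\,,\Lambda_2)}$. I would obtain this by viewing $(H^{\Lambda_2^{c}}-z)^{-1}$ as a two-step perturbation of $(H-z)^{-1}$: first cut the $H_0$-hopping across $\partial\Lambda_2$, a geometric resolvent expansion whose boundary terms, via the Combes--Thomas bound and $\mathrm{dist}(\Lambda_1,\Lambda_2)\geq r$, produce $e^{-\nu r}$ together with the off-diagonal decay in $|m-n|$ (kept in the form of Lemma~\ref{adaptAG} and (\ref{cancellemma}), so that integration in $t$ via Cauchy--Schwarz and the spectral representation is available); then correct for the mismatch of effective potentials, which on $\Lambda_2^{c}$ is the diagonal operator $g(V_{\mathrm{eff}}-V_2^{c})$.

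For this second correction one needs the companion estimate $|V_{\mathrm{eff}}(w)-V_2^{c}(w)|\leq C e^{-\nu\,\mathrm{dist}(w,\Lambda_2)}$. This is \emph{not} circular: the self-consistent feedback enters this bound with a factor of $g$, so it follows from the same contraction argument applied directly at the level of the potential, using the integral representation (\ref{intrep}) for $F(H)-F(H^{\Lambda_2^{c}})$ and the same ``cut-the-hopping'' surgery. Feeding it back, $\int|r(m,n)-r^{c}(m,n)|\,dt$ and $\int|r(n,l)-r^{c}(n,l)|\,dt$ acquire the factor $e^{-\nu r}$ while retaining enough off-diagonal decay (in $|m-n|$, resp.\ $|n-l|$) that, after multiplying by $W(n)$ and, in (iii), summing against $\frac{\partial V_2^{c}(l)}{\partial\omega(m)}$ (whose decay is controlled by Lemma~\ref{decay1} applied to $H^{\Lambda_2^{c}}$, the constants there being volume-independent), both $W(n)\cdot(\mathrm{ii})$ and $W(n)\cdot(\mathrm{iii})$ are bounded by $C e^{-\nu r}$. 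Combining the three terms and invoking \cite[Theorem 9.2]{A-W-B} yields $\sup_{n\in\Lambda_2^{c}}W(n)|D(n)|\leq C e^{-\nu r}$, i.e.\ $|D(n)|\leq C e^{-\nu(|m-n|+r)}$ for $n\in\Lambda_1$, which is the claim.

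The main obstacle is precisely this last point: the surgery at $\partial\Lambda_2$ propagates not only through the local hopping $H_0$ but also through the nonlocal, nonlinear feedback of the effective potential, so one must organize the perturbation carefully (cut hopping $+$ diagonal potential correction) and verify that the auxiliary bound on $V_{\mathrm{eff}}-V_2^{c}$ is genuinely produced by a contraction --- that is, that the feedback really comes with a prefactor $g$ --- rather than inadvertently presupposing the very decay being proved. A secondary, purely technical nuisance is the usual harmless loss $\nu\mapsto\nu'<\nu$ in convolutions of exponentials and in the crude routing of the surgery terms, absorbed into the constant (or into the slack in hypothesis $(A_1)$).
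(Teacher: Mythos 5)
Your overall skeleton is close to the paper's: subtract the two linearized self-consistent equations for the derivatives, view the difference $D$ as the solution of a fixed-point equation with the small kernel of section \ref{decaysection}, and control the ``surgery'' source terms coming from comparing $r(m,n)$ with its restricted analogue. The paper does this comparison in one line, via the resolvent identity $G(m,n;z)-G^{\Lambda_2^c}(m,n;z)=-\lambda\sum_{l\in\Lambda_2}G(m,l;z)U(l)G^{\Lambda_2^c}(l,n;z)$, and then invokes the improved estimates of lemma \ref{improvedct} to absorb the unbounded factor $U(l)$ on $\Lambda_2$; your route (cut the $H_0$-hopping across $\partial\Lambda_2$, then correct by the bounded diagonal mismatch $g(V_{\mathrm{eff}}-V_2^{c})$, backed by the companion bound $|V_{\mathrm{eff}}(w)-V_2^{c}(w)|\leq Ce^{-\nu\,\mathrm{dist}(w,\Lambda_2)}$ proved by a separate contraction) is a legitimate alternative which avoids the unbounded potential altogether, though it tacitly reads $V_2^{c}$ as the re-solved self-consistent potential of the restricted system rather than literally $\mel{n}{F(H^{\Lambda_2^c})}{n}$ with the full-space $V_{\mathrm{eff}}$ kept inside, so your identity ``$\partial V_2^{c}/\partial\omega(m)$ solves (\ref{repderv}) with $G^{c}$ and $l\in\Lambda_2^{c}$'' needs that reading to be stated explicitly.

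The genuine flaw is the reduction step. You claim it suffices to show $\sup_{n\in\Lambda_2^{c}}W(n)|S(n)|\leq Ce^{-\nu r}$ with $W(n)=e^{\nu|m-n|}$, but then verify the bound only for $n\in\Lambda_1$, using $|n-l|\geq r$ for $l\in\Lambda_2$. That supremum is in fact not small: for $n\in\Lambda_2^{c}$ adjacent to $\Lambda_2$, term (i) alone contains a summand with $|n-l|=1$, giving $W(n)|S(n)|\gtrsim |g|e^{-3\nu}$ uniformly in $r$; the same happens for (ii) and (iii), since $G-G^{c}$ is not small near $\partial\Lambda_2$. Because the fixed-point equation couples $D(n)$, $n\in\Lambda_1$, to $D(l)$ at \emph{all} $l\in\Lambda_2^{c}$, including sites where the source is $O(1)$, an application of \cite[Theorem 9.2]{A-W-B} with your weight only returns $|D(n)|\leq Ce^{-\nu|m-n|}$ and loses the crucial factor $e^{-\nu r}$. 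The repair stays inside your toolbox: run the weighted bound with $W(n)=e^{\nu'\left(|m-n|+\mathrm{dist}(n,\Lambda_2)\right)}$ for some $\nu'<\nu$ (the kernel still has small weighted norm, since $W(n)/W(l)\leq e^{2\nu'|n-l|}$ while $\tilde K(n,l)\lesssim e^{-2\nu|n-l|}$), check only $\sup_{n\in\Lambda_2^{c}}W(n)|S(n)|\leq C$ --- which your term-by-term estimates do deliver, as each surgery term carries decay both in $|m-n|$ and in $\mathrm{dist}(n,\Lambda_2)$ --- and then restrict to $n\in\Lambda_1$, where $\mathrm{dist}(n,\Lambda_2)\geq r$, to read off $|D(n)|\leq Ce^{-\nu'(|m-n|+r)}$, i.e.\ the lemma up to the $\nu\mapsto\nu'$ loss you already acknowledged as harmless.
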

 \begin{proof}
 The proof follows the same steps as in the previous results. The only modification which is required comes when comparing the quantities $r(m,n)$ and $r^{\Lambda_1}(m,n)$ given by
 $$r(m,n)=G(n,m;t+i\eta)G(m,n;t+i\eta)-G(n,m;t-i\eta)G(m,n;t-i\eta)$$
 $$r^{\Lambda^{c}_{2}}(m,n)=G^{\Lambda^{c}_{2}}(n,m;t+i\eta)G^{\Lambda^{c}_{2}}(m,n;t+i\eta)-G^{\Lambda^{c}_{2}}(n,m;t-i\eta)G^{\Lambda^{c}_{2}}(m,n;t-i\eta).$$
We observe that
\begin{align*}
&G(n,m;z)G(m,n;z)-G^{\Lambda^{c}_{2}}(n,m;z)G^{\Lambda^{c}_{2}}(m,n;z)=\\
&G(n,m;z)\left(G(m,n;z)-G^{\Lambda^{c}_{2}}(m,n;z)\right)+\left(G(n,m;z)-G^{\Lambda^{c}_{2}}(n,m;z)\right)G^{\Lambda^{c}_{2}}(m,n;z).
\end{align*}
Moreover,
\begin{equation}
G(m,n;z)-G^{\Lambda^{c}_{2}}(m,n;z)=-\lambda\sum_{l\in {\Lambda_{2}}}G(m,l;z)U(l)G^{\Lambda^{c}_{2}}(l,n;z).
\end{equation}
The proof is now finished using arguments identical to the proof of lemma \ref{decay1} and the improvement on lemma \ref{improvedct}.

 \end{proof}

\section{Proof of lemma \ref{bdddensity} }  
\label{proofoflemma}
     In this section we show the existence of the effective density $\rho_{\mathrm{eff}}$. Fix $\Lambda \subset \mathbb{Z}^d$ finite. Recall that we defined
    \begin{equation}\label{U}
 U(n,\omega):=\omega(n)+\frac{g}{\lambda}V_{\mathrm{eff}}(n,\omega).
 \end{equation}
 Until the end of this section we suppress the $\omega$ dependence on $U(n)$ and $V_{\mathrm{eff}}$.
Note that, for $m,n\in \Lambda$,
 \begin{equation}\label{DerU}
 \frac{\partial U(m)}{\partial \omega(n)}=\delta_{mn}+\frac{g}{\lambda} \frac{\partial V_{\mathrm{eff}}(m)}{\partial \omega(n)}.
 \end{equation}
 We have denoted the above change of variables by $\mathcal{T}:\mathbb{R}^{|\Lambda|} \rightarrow \mathbb{R}^{|\Lambda|}$, which reads
\begin{equation}\label{DefT} \mathcal{T}(\omega(n_1),...,\omega(n_{|\Lambda|}))=(U(n_1),...,U(n_{|\Lambda|}))
\end{equation}

We can now compute the joint distribution of the $\{U(n)\}$. Using the fact that the random variables $\{\omega(n)\}_{n\in \mathbb{Z}^d}$ have a common density $\rho$ we conclude that for all Borel sets $I_1,...,I_N$ in $\mathbb{R}$:
\begin{align*} \mathbb{P}\left( U(n_1)\in I_1,\,...\,,U(n_{|\Lambda|}),\in I_{|\Lambda|}\right)&=\int_{{\mathcal{T}}^{-1}\left(I_1\times...\times I_{|\Lambda|}\right)}\prod^{|\Lambda|}_{k=1}\rho(\omega(n_k))\,d\omega(n_k)\\
&=\int_{I_1\times...\times I_{|\Lambda|}}|\det J_{{\mathcal{T}}^{-1}}|\prod^{|\Lambda|}_{k=1}\rho\left({\mathcal{T}}^{-1}U(n_k)\right)\,dU(n_1)...dU(n_{|\Lambda|})\\
&=\int_{I_1\times...\times I_{|\Lambda|}}\big|\det\Big{(}I+\frac{g}{\lambda}\frac{\partial V_{\mathrm{eff}}(n_i,{\mathcal{T}}^{-1}U)}{\partial U(n_j)}\Big{)}\big|\prod^{|\Lambda|}_{k=1}\rho\left(U(n_k)-\frac{g}{\lambda}V_{\mathrm{eff}}(n_k,{\mathcal{T}}^{-1}U)\right)\,dU(n_k).
\end{align*}

Therefore the joint distribution of $\{U(n_k)\}^{|\Lambda|}_{k=1}$ is given by the measure  \begin{equation}\label{densU}\Big|\det\Big{(}I+\frac{g}{\lambda}\frac{\partial V_{\mathrm{eff}}(n_i,{\mathcal{T}}^{-1}U)}{\partial U(n_j)}\Big{)}\Big|\prod^{|\Lambda|}_{k=1}\rho\left(U(n_k)-\frac{g}{\lambda}V_{\mathrm{eff}}(n_k,\mathcal{T}^{-1}U) \right)\,dU(n_1)...dU(n_{|\Lambda|}). \end{equation}
It follows that for each $n_0\in \Lambda$ the conditional expectation of $U(n_0)$ at specified values of $\{U(n)\}_{n\neq n_0}$ has a density given by
\begin{equation}\label{cdensU}{\rho}^{\Lambda}_{n_0}=\frac{\big|\det\Big{(}I+\frac{g}{\lambda}\frac{\partial V_{\mathrm{eff}}(n_i,{\mathcal{T}}^{-1}U)}{\partial U(n_j)}\Big{)}\big|\prod^{|\Lambda|}_{k=1}\rho\left(U(n_k)-\frac{g}{\lambda}V_{\mathrm{eff}}(n_k,{\mathcal{T}}^{-1}U)\right)}{\int^{\infty}_{-\infty}\big|\det\Big{(}I+\frac{g}{\lambda}\frac{\partial V_{\mathrm{eff}}(n_i,{\mathcal{T}}^{-1}U^{\alpha})}{\partial U(n_j)}\Big{)}\big|\prod^{|\Lambda|}_{k=1}\rho\left(U^{\alpha}(n_k)-\frac{g}{\lambda}V_{\mathrm{eff}}(n_k,{\mathcal{T}}^{-1}U^{\alpha}) \right)\,d\alpha} \end{equation}
Where $U^{\alpha}(n):=U(n)+\left(\alpha-U(n_0)\right)\delta_{n=n_0}$.
This strategy naturally leads to the analysis of ratios of determinants. A sufficient condition for an upper bound to the right-hand side of (\ref{cdensU}) is to obtain positive constants $C=C_{\mathrm{fluct}}(U(n_0))$ and $D=D(\alpha)$ which are independent of $|\Lambda|$ and such that the following estimates hold true
\begin{equation}\label{fluctuation2}\frac{\big|\det\Big{(}I+\frac{g}{\lambda}\frac{\partial V_{\mathrm{eff}}(n_i,{\mathcal{T}}^{-1}U^{\alpha})}{\partial U(n_j)}\Big{)}\big|}{\big|\det\Big{(}I+\frac{g}{\lambda}\frac{\partial V_{\mathrm{eff}}(n_i,{\mathcal{T}}^{-1}U)}{\partial U(n_j)}\Big{)}\big|}\geq D(\alpha).
 \end{equation} 
 \begin{equation}\label{fluctuation1} \int^{\infty}_{-\infty}D(\alpha)\rho\left(\alpha-\frac{g}{\lambda}V_{\mathrm{eff}}(n_0,{\mathcal{T}}^{-1}U^{\alpha})\right)\prod_{n\in |\Lambda|\setminus\{n_0\}}\frac{\rho\left(U^{\alpha}(n)-\frac{g}{\lambda}V_{\mathrm{eff}}(n,{\mathcal{T}}^{-1}U^{\alpha})\right)}{\rho\left(U(n)-\frac{g}{\lambda}V_{\mathrm{eff}}(n,{\mathcal{T}}^{-1}U)\right)}  \,d\alpha\geq C_{\mathrm{fluct}}(U(n_0)).
     \end{equation}     
     
The bounds (\ref{fluctuation1}) and (\ref{fluctuation2}) readily imply that, setting $U(n_0)=u$
\begin{equation}\label{conditionalbound}{\rho}^{\Lambda}_{n_0}(u)\leq \frac{\rho\left(u-\frac{g}{\lambda}V_{\mathrm{eff}}(n_0,{\mathcal{T}}^{-1}U)\right)}{C_{\mathrm{fluct}}(u)}.
\end{equation}
Lemma \ref{bdddensity} will follow from a precise control of the right-hand side of equation (\ref{conditionalbound}).
     
\par

We now execute the strategy which was outlined above. The ratio of determinants can be controlled through the following bound, where $\|M\|_1$ denotes the trace norm of a matrix $M$.

\begin{lem}\label{detabsbd} Let $A,B$ be square matrices with $I+B$ invertible.
 Then, \begin{equation}\left|\frac{\det\left(I+A\right)}{\det\left(I+B\right)}\right|\leq e^{\|(A-B)(I+B)^{-1}\|_1}.\end{equation}
\end{lem}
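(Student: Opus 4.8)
The plan is to reduce the ratio of determinants to a single determinant and then estimate that determinant through its eigenvalues. Since $I+B$ is invertible, one may write
\begin{equation*}
I + A = (I+B) + (A-B) = (I+B)\bigl(I + (I+B)^{-1}(A-B)\bigr),
\end{equation*}
and taking determinants and using multiplicativity gives
\begin{equation*}
\frac{\det(I+A)}{\det(I+B)} = \det\bigl(I + (I+B)^{-1}(A-B)\bigr).
\end{equation*}

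Next I would establish the general bound $|\det(I+M)| \le e^{\|M\|_1}$, valid for any square matrix $M$. Let $\mu_1,\dots,\mu_k$ denote the eigenvalues of $M$ listed with algebraic multiplicity, so that $\det(I+M) = \prod_{j=1}^k (1+\mu_j)$; hence
\begin{equation*}
|\det(I+M)| = \prod_{j=1}^k |1+\mu_j| \le \prod_{j=1}^k (1+|\mu_j|) \le \prod_{j=1}^k e^{|\mu_j|} = e^{\sum_{j=1}^k |\mu_j|}.
\end{equation*}
By Weyl's majorization inequality the sum of the moduli of the eigenvalues is dominated by the sum of the singular values, i.e. $\sum_j |\mu_j| \le \sum_j s_j(M) = \|M\|_1$, which yields the claim.

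Finally I would apply this with $M = (I+B)^{-1}(A-B)$, noting that $(I+B)^{-1}(A-B)$ and $(A-B)(I+B)^{-1}$ are similar (or invoking Sylvester's identity $\det(I+XY)=\det(I+YX)$ with $X=(I+B)^{-1}$, $Y=A-B$), so that the estimate may be rewritten in terms of the trace norm appearing in the statement:
\begin{equation*}
\left|\frac{\det(I+A)}{\det(I+B)}\right| = \bigl|\det\bigl(I+(A-B)(I+B)^{-1}\bigr)\bigr| \le e^{\|(A-B)(I+B)^{-1}\|_1}.
\end{equation*}
The argument is essentially routine; the only genuine input is Weyl's inequality relating eigenvalue moduli to singular values, and the sole point requiring a little care is matching the trace norm to $(A-B)(I+B)^{-1}$ rather than $(I+B)^{-1}(A-B)$, which is handled by the similarity remark or by Sylvester's identity.
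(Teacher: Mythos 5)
Your proof is correct and follows essentially the same route as the paper: both reduce the ratio to $\det\bigl(I+(A-B)(I+B)^{-1}\bigr)$ and then apply the bound $|\det(I+M)|\leq e^{\|M\|_1}$. The only (harmless) differences are that you factor on the left and pass to $(A-B)(I+B)^{-1}$ via similarity/Sylvester, and that you prove the key determinant inequality directly through Weyl's majorization, whereas the paper simply cites it from Simon's \emph{Trace Ideals}.
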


\begin{proof}
We make use of the elementary identities
\begin{equation}\frac{\det(I+A)}{\det(I+B)}=\det(I+A)(I+B)^{-1}
\end{equation}
 and
\begin{equation}(I+A)(1+B)^{-1}=I+(A-B)(I+B)^{-1}.
 \end{equation}
The proof is now finished making use of the inequality
\begin{equation*}\left|\det(1+M)\right|\leq e^{\|M\|_{1}}
    \end{equation*}
    which holds on the general setting of trace class operators, see \cite[Lemma 3.3]{trace}
\end{proof}
The triangle inequality for the trace norm implies the following.
\begin{cor}\label{ratiolbound}Under the above conditions
\begin{equation}
\Big|\frac{\det(I+B)}{\det(I+A)}\Big|\geq e^{-\sum_{m,n}|\left((A-B)(I+B)^{-1}\right)_{mn}|}\end{equation}
\end{cor}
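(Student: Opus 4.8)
The plan is to obtain the corollary as an immediate reciprocal form of Lemma \ref{detabsbd}, combined with the elementary comparison between the trace norm and the $\ell^1$-norm of the matrix entries. First, observe that if $\det(I+A)=0$ then the left-hand side is to be read as $+\infty$ and the inequality is vacuous, so we may assume $I+A$ is invertible. Applying Lemma \ref{detabsbd} exactly as stated gives
\[
\Big|\frac{\det(I+A)}{\det(I+B)}\Big|\leq e^{\|(A-B)(I+B)^{-1}\|_1},
\]
and taking reciprocals of both sides yields
\[
\Big|\frac{\det(I+B)}{\det(I+A)}\Big|\geq e^{-\|(A-B)(I+B)^{-1}\|_1}.
\]

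It then remains only to bound $\|M\|_1$ from above by $\sum_{m,n}|M_{mn}|$, where $M:=(A-B)(I+B)^{-1}$. Writing $M=\sum_{m,n}M_{mn}E_{mn}$ with $E_{mn}$ the matrix having a single $1$ in the $(m,n)$ entry, each $E_{mn}$ is a rank-one operator with $\|E_{mn}\|_1=1$, so the triangle inequality for the trace norm gives $\|M\|_1\leq\sum_{m,n}|M_{mn}|$. Since $t\mapsto e^{-t}$ is decreasing, substituting this into the previous display completes the argument. There is no genuine obstacle here: the only point requiring verification is the trace-norm estimate $\|M\|_1\leq\sum_{m,n}|M_{mn}|$, which is precisely the ``triangle inequality for the trace norm'' referred to just before the statement, and the reciprocal manipulation is valid once the $\det(I+A)=0$ case has been dispatched as above.
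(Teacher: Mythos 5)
Your proposal is correct and follows essentially the same route as the paper: apply Lemma \ref{detabsbd}, take reciprocals, and dominate $\|(A-B)(I+B)^{-1}\|_1$ by the entrywise sum $\sum_{m,n}|((A-B)(I+B)^{-1})_{mn}|$ via the triangle inequality for the trace norm, which is precisely the one-line justification the paper gives. Your explicit decomposition into rank-one matrices $E_{mn}$ with $\|E_{mn}\|_1=1$ and the remark on the degenerate case $\det(I+A)=0$ are fine additions but do not change the argument.
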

Letting $A=\frac{g}{\lambda}\left(\frac{\partial V(n_i,\omega)}{\partial U(n_j)}\right)_{|\Lambda|\times |\Lambda|}$ and $B=\frac{g}{\lambda}\left(\frac{\partial V(n_i,\omega_{\alpha})}{\partial U(n_j)}\right)_{|\Lambda|\times |\Lambda|}$ and using lemma \ref{decay1} we see that, for $|g|<\lambda C_1^{-1}$, $(1+B)^{-1}$ has uniformly bounded operator norm. Using lemma \ref{bddtrace} and corollary \ref{ratiolbound} we conclude that (\ref{fluctuation2}) holds with $D(\alpha)=e^{-|g|^2C_3\left(|\alpha-U(n_0)|+C_4\right)}$.

 We now check that equation (\ref{fluctuation1}) holds when $\rho$ satisfies the fluctuation bound  (\ref{flucassump}). We divide the proof in two cases:
 \begin{enumerate}[label=(\Roman*)]
 \item{ Suppose that $c_2(\rho)>0$.\par
 Let $c_{\rho}=\max\{c_1(\rho),c_2(\rho)\}$.
 The left-hand side of (\ref{fluctuation1}) is bounded from below by
 \begin{equation}\int^{\infty}_{-\infty}D(\alpha)\rho\left(\alpha-\frac{g}{\lambda}V_{\mathrm{eff}}(n_0,{\mathcal{T}}^{-1}U^{\alpha})\right)\prod_{n\in \Lambda \setminus\{n_0\}}e^{-c_{\rho}|\omega(n)-\omega_{\alpha}(n)|(1+|\omega(n)|+|\omega_{\alpha}(n)|)}\,d\alpha
 \end{equation}
which equals
\begin{equation}\label{densityint}
\int^{\infty}_{-\infty}D(\alpha)\rho\left(\alpha-\frac{g}{\lambda}V_{\mathrm{eff}}(n_0,{\mathcal{T}}^{-1}U^{\alpha})\right) e^{\sum_{n\in \Lambda \setminus\{n_0\}}-c_{\rho}|\omega(n)-\omega_{\alpha}(n)|(1+|\omega(n)|+|\omega_{\alpha}(n)|\,)}\,d\alpha.
\end{equation}

Due to the triangle inequality and lemmas \ref{resamp} and \ref{improvedct}, we conclude that there is a positive constant 
$C=C(d,\|F\|_{\infty},g,\eta,\nu)$ with $\lim_{g\to 0}C(d,\|F\|_{\infty},g,\eta,\nu)<\infty$ such that for $n\neq n_0$
\begin{align*}
-c_{\rho}|\omega(n)-\omega_{\alpha}(n)|\left(1+|\omega(n)|+|\omega_{\alpha}(n)|\right)&\geq -c_{\rho}|\omega(n)-\omega_{\alpha}(n)|\left(1+2|\omega(n)|+|\omega_{\alpha}(n)-\omega(n)|\right)\\
&\geq  -\frac{|g|}{\lambda}c_{\rho}e^{-\nu|n-n_0|}\left(C^{2}|\alpha-U(n_0)|^2+2C|\alpha-U(n_0)| \right).\\
\end{align*}

Therefore,
\begin{align*}-c_{\rho}\sum_{n\in \Lambda\setminus\{n_0\}}|\omega(n)-\omega_{\alpha}(n)|\left(1+|\omega(n)|+|\omega_{\alpha}(n)|\right)&\geq -\frac{|g|}{\lambda}\frac{2c_{\rho}}{\left(1-e^{-\nu}\right)^d}\left(C^{2}|\alpha-U(n_0)|^2+2C|\alpha-U(n_0)| \right)).\\
\end{align*}}
  Thus, by choosing $\frac{|g|}{\lambda}$ sufficiently small so that $\frac{|g|}{\lambda}\frac{4c_{\rho}}{\left(1-e^{-\nu}\right)^d}C^{2}<{\overline{c}}_{\rho}$ and
 using the assumption \ref{fluctintegral1} we obtain
\begin{equation}
0<\int^{\infty}_{-\infty}D(\alpha)\frac{\rho\left(\alpha-\frac{g}{\lambda}V_{\mathrm{eff}}(n_0,{\mathcal{T}}^{-1}U^{\alpha})\right)}{\rho\left(U(n_0)-\frac{g}{\lambda}V_{\mathrm{eff}}(n_0,{\mathcal{T}}^{-1}U)\right)} \prod_{n\in \Lambda \setminus\{n_0\}}e^{-c_{\rho}|\omega(n)-\omega_{\alpha}(n)|(1+|\omega(n)|+|\omega_{\alpha}(n)|)}\,d\alpha<\infty.
\end{equation} this, together with (\ref{conditionalbound}),
verifies lemma \ref{bdddensity} when $c_2(\rho)>0$. If $\rho$ satisfies the assumption \ref{momentassumption} the above argument yields ${\rho}^{\Lambda}_{n_0}(u)\in L^{1}\left(\mathbb{R},|x|^{\varepsilon}dx\right)$.
\item{ Assume that $c_2(\rho)=0$:\par
Similarly to the above argument, the left-hand side of (\ref{fluctuation1}) is bounded from below by
\begin{equation}
\int^{\infty}_{-\infty}D(\alpha)\frac{\rho\left(\alpha-\frac{g}{\lambda}V_{\mathrm{eff}}(n_0,{\mathcal{T}}^{-1}U^{\alpha})\right)}{\rho\left(U(n_0)-\frac{g}{\lambda}V_{\mathrm{eff}}(n_0,{\mathcal{T}}^{-1}U)\right)}e^{-\frac{|g|}{\lambda}\frac{c_{1}({\rho})}{\left(1-e^{-\nu}\right)^d}C|\alpha-U(n_0)|}.
\end{equation}
Where, from (\ref{flucassump})
\begin{equation}
\frac{\rho\left(\alpha-\frac{g}{\lambda}V_{\mathrm{eff}}(n_0,{\mathcal{T}}^{-1}U^{\alpha})\right)}{\rho\left(U(n_0)-\frac{g}{\lambda}V_{\mathrm{eff}}(n_0,{\mathcal{T}}^{-1}U)\right)}\geq e^{-c_1(\rho)\left(|\alpha-U(n_0)|+2\frac{|g|}{\lambda}\right)}.
\end{equation}
Again, choosing $|g|$ sufficiently small we conclude that
\begin{equation}
0<\int^{\infty}_{-\infty}D(\alpha)\frac{\rho\left(\alpha-\frac{g}{\lambda}V_{\mathrm{eff}}(n_0,{\mathcal{T}}^{-1}U^{\alpha})\right)}{\rho\left(U(n_0)-\frac{g}{\lambda}V_{\mathrm{eff}}(n_0,{\mathcal{T}}^{-1}U)\right)} \prod_{n\in \Lambda \setminus\{n_0\}}e^{-c_{1}({\rho})|\omega(n)-\omega_{\alpha}(n)|}\,d\alpha<\infty.
\end{equation}
finishing the proof.}
\end{enumerate}

\section{The Hartree approximation for the Hubbard model}\label{hubbardext}

Let us now adapt the techniques from the previous sections to the Hubbard model. Recall that $H_{\mathrm{Hub}}$ is defined as

$$\begin{pmatrix}
\,H_{\uparrow}(\omega) & 0 \\
0 & H_{\downarrow}(\omega)\,\\
\end{pmatrix}
:=
\begin{pmatrix}
\, H_0+\lambda V_{\omega}+gV_{\uparrow}(\omega) & 0 \\
0 & H_0+\lambda V_{\omega}+gV_{\downarrow}(\omega)\,\\
\end{pmatrix}
$$
acting on $\ell^2\left(\mathbb{Z}^d\right)\oplus \ell^2\left(\mathbb{Z}^d\right)$.
The operators $H_0$ and $V_{\omega}$ are defined as before, i.e; 
$H_0+\lambda V_{\omega}$ is the standard Anderson model acting on 
$\ell^2\left(\mathbb{Z}^d\right)$. The effective potentials are defined as \begin{equation}\label{potentialHubbard}
\begin{pmatrix}
\,V_{\uparrow}(\omega)(n) \\
V_{\downarrow}(\omega)(n)\,\\
\end{pmatrix}
=\begin{pmatrix}
\,\mel{n}{F(H_{\downarrow})}{n} \\
\mel{n}{F(H_{\uparrow})}{n}\,\\
\end{pmatrix}.
\end{equation}

Mathematically, the treatment of the above model is very similar to the the proof of theorem \ref{main} above, therefore some details are skipped and we just indicate the required modifications.
\subsection{Existence of the Effective potential}\label{existHubb}
 Let $\Phi(X,Y):\ell^{\infty}\left(\mathbb{Z}^d\right)\oplus \ell^{\infty}\left(\mathbb{Z}^d \right)\rightarrow \ell^{\infty}\left(\mathbb{Z}^d\right)\oplus \ell^{\infty}\left(\mathbb{Z}^d \right)$ be given by
 $$\Phi(X,Y)(m,n):=\left(\mel{n}{F(H_0+V_{\omega}+gY)}{n},\mel{m}{F(H_0+V_{\omega}+gX)}{m}\right).$$
 using proposition \ref{Contraction}, we immediately reach
 \begin{equation}
 \|\Phi(X_1,Y_1)-\Phi(X_2,Y_2)\|_{\ell^{\infty}\left(\mathbb{Z}^d\right)\oplus \ell^{\infty}\left(\mathbb{Z}^d \right)}\leq |g|\frac{72\sqrt{2}}{\eta\left(1-e^{\nu'-\nu}\right)^d}\|F\|_{\infty}\left(\|X_1-X_2\|_{\ell^{\infty}\left(\mathbb{Z}^d\right)}+\|Y_1-Y_2\|_{\ell^{\infty}\left(\mathbb{Z}^d\right)}\right).
  \end{equation}
Therefore, if $|g|\frac{72\sqrt{2}}{\eta\left(1-e^{\nu'-\nu}\right)^d}\|F\|_{\infty}<1$ we conclude $\Phi$ has a unique fixed point $V_{\mathrm{eff}}=\left(V_{\uparrow},V_{\downarrow}\right)$ belonging to $\ell^{\infty}\left(\mathbb{Z}^d\right)\oplus \ell^{\infty}\left(\mathbb{Z}^d \right) $.

\subsection{Regularity of the effective potential}
Fix $\Lambda \subset \mathbb{Z}^d$ finite and define functions $\xi:\left(\ell^{\infty}\left(\Lambda\right)\oplus \ell^{\infty}\left(\Lambda \right)\right)\times \mathbb{R}^{n}\rightarrow \ell^{\infty}\left(\Lambda\right)\oplus \ell^{\infty}\left(\Lambda \right)$ through
\begin{equation}
\xi^{\uparrow}(V,\omega)(j)=V^{\uparrow}(j)-\mel{j}{F(H_0+\lambda\omega+gV_{\downarrow})}{j}.
\end{equation}
\begin{equation}
\xi^{\downarrow}(V,\omega)(j)=V^{\downarrow}(j)-\mel{j}{F(H_0+\lambda\omega+gV_{\uparrow})}{j}.
\end{equation}
 Our goal is to conclude $V^{\uparrow}$,$V^{\downarrow}$ are smooth functions of an arbitrary, but finite, list $(\omega(1),...,\omega(n))$. Again, this can be done via implicit function theorem once we check that the derivative
\begin{equation}
\frac{\partial \xi(\omega, V)(j)}{\partial V(l)}=
\delta_{jl}-\frac{\partial \mel{j}{F(H_0+\lambda\omega+gV)}{j}}{\partial V(l)}.
\end{equation} 
is non-singular.
Using lemma \ref{Contraction}, we have that for $\sharp\in \{\uparrow,\downarrow\}$
\begin{equation}
\Big|\frac{\partial \mel{j}{F(H_0+\lambda\omega+gV_{\sharp})}{j}}{\partial V(l)}\Big|\leq 
|g|\frac{72\sqrt{2}e^{-2\nu|j-l|}}{\eta}\|F\|_{\infty}.
\end{equation}

In particular, whenever $|g|\frac{144\sqrt{2}\|F\|_{\infty}}{\eta(1-e^{-2\nu})^{d}}<1$  we have that the operator $D\xi(\omega,.):\ell^{\infty}\left(\Lambda \right)\oplus \ell^{\infty}\left(\Lambda \right)\rightarrow \ell^{\infty}\left(\Lambda \right)\oplus \ell^{\infty}\left(\Lambda \right)$ has an inverse.
From the implicit function theorem it follows that $V$ is a smooth function of $(\omega(1),...,\omega(n))$ for $n=|\Lambda|$.
\subsection{Decay estimates}
The decay rate in the case of the Hubbard model is dictated by
\begin{equation}\label{Hubbarddecayup} \Big|\frac{\partial V_{\uparrow}(n)}{\partial \omega(m)}\Big |\leq 3|g|\|F\|_{\infty}\sum_{l}\tilde{K_{\downarrow}}(l,m)\Big|\frac{\partial V_{\downarrow}(l)}{\partial \omega(m)}\Big|+
\tilde{r_{\downarrow}}(n).
\end{equation}

\begin{equation}\label{Hubbarddecaydown} \Big|\frac{\partial V_{\downarrow}(n)}{\partial \omega(m)}\Big |\leq 3|g|\|F\|_{\infty}\sum_{l}\tilde{K_{\uparrow}}(l,m)\Big|\frac{\partial V_{\uparrow}(l)}{\partial \omega(m)}\Big|+
\tilde{r_{\uparrow}}(n).
\end{equation}

where, for $\sharp\in\{\uparrow,\downarrow\}$

  $$\tilde{G_{\sharp}}(l,m):=G_{\sharp}(l,n;t+i\eta)G_{\sharp}(n,l;t+i\eta)-G_{\sharp}(l,n;t-i\eta)G_{\sharp}(n,l;t-i\eta).$$
 $$r_{\sharp}(m,n):=G_{\sharp}(n,m;t+i\eta)G_{\sharp}(m,n;t+i\eta)-G_{\sharp}(n,m;t-i\eta)G_{\sharp}(m,n;t-i\eta).$$
 
$$\tilde{K_{\sharp}}(l,m):=\int^{\infty}_{-\infty}|\tilde{G_{\sharp}}(l,m)|\,dt.$$
$$\tilde{r_{\sharp}}(n):=\int^{\infty}_{-\infty} |r_{\sharp}(n)|\,dt.
$$
 
In particular,
\begin{equation}\label{Hubbardfulldecay}
\Big|\frac{\partial V_{\uparrow}(n)}{\partial \omega(m)}\Big|+\Big|\frac{\partial V_{\downarrow}(n)}{\partial \omega(m)}\Big |\leq 3|g|\|F\|_{\infty}\sum_{l}\left(\tilde{K_{\uparrow}}(l,m)+\tilde{K_{\downarrow}}(l,m)\right)\left(\Big|\frac{\partial V_{\uparrow}(l)}{\partial \omega(m)}\Big|+\Big|\frac{\partial V_{\downarrow}(l)}{\partial \omega(m)}\Big|\right)
+\left(\tilde{r_{\uparrow}}(n,m)+\tilde{r_{\downarrow}}(n,m)\right).
\end{equation}

The analysis from the previous sections applies and we obtain lemmas \ref{decay1},\ref{resamp},\ref{bddtrace} and \ref{improvedct} with $|.|$ being replaced by the matrix norm  $|M|=|M_{11}|+|M_{21}|$ for $M=
\begin{pmatrix}
M_{11}\\
M_{21}\,\\
\end{pmatrix}$.
The effective potential and its derivatives are to be interpreted as follows:

 \begin{equation*} V_{\mathrm{eff}}(n)=
\begin{pmatrix}
V^{\uparrow}_{\mathrm{eff}}(n)\\
V^{\downarrow}_{\mathrm{eff}}(n)\,\\
\end{pmatrix},
\,\,\,\,\,\,
\frac{\partial V_{\mathrm{eff}}(n)}{\partial \omega(m)}=
\begin{pmatrix}
\frac{\partial V^{\uparrow}_{\mathrm{eff}}(n)}{\partial \omega(m)}\\
\frac{\partial V^{\downarrow}_{\mathrm{eff}}(n)}{\partial \omega(m)}\,\\
\end{pmatrix}
\,\,\,\mathrm{and}\,\,\,
\frac{\partial^2 V_{\mathrm{eff}}(n)}{\partial \omega(m)\partial \omega(l)}=
\begin{pmatrix}
\frac{\partial^2 V^{\uparrow}_{\mathrm{eff}}(n)}{\partial \omega(m)\omega(l)}\\
\frac{\partial^2 V^{\downarrow}_{\mathrm{eff}}(n)}{\partial \omega(m)\omega(l)}\,\\
\end{pmatrix}.
\end{equation*}

\section{One dimensional Aspects:proof of theorem \ref{1dloc}}\label{details1d}
In this section we will prove theorem \ref{1dloc}. We let $H_{+}=H_{[0,\infty)\cap\mathbb{Z}}$ and denote by $G^{+}(m,n;z)$ the Green's function of $H^{+}$. Recall the definition of the Lyapunov exponent: \begin{equation}
\mathcal{L}(z)=-\mathbb{E}\left(\ln|G^{+}(0,0;z)|\right)
\end{equation}

\begin{equation}
\mathcal{L}_{\mathrm{And}}(z)=-\mathbb{E}\left(\ln|G^{+}_{\mathrm{And}}(0,0;z)|\right).
\end{equation}
Recall in this case $H_0=-\Delta$ hence, we define $H_{\mathrm{Hub}}$ acting on $\left(\ell^{2}\left(\mathbb{Z}\right)\oplus \ell^{2}\left(\mathbb{Z} \right)\right)$ by
\begin{equation} H_{\mathrm{Hub}}=\begin{pmatrix}
\,H_{\uparrow}(\omega) & 0 \\
0 & H_{\downarrow}(\omega)\,\\
\end{pmatrix}
\end{equation}
where, denoting by $H_{\mathrm{And}}$ the standard Anderson model $-\Delta+V_{\omega}$ on $\ell^2\left( \mathbb{Z}\right)$,
\begin{equation}\label{Hubbarddefonedim}\begin{pmatrix}
\,H_{\uparrow}(\omega) & 0 \\
0 & H_{\downarrow}(\omega)\,\\
\end{pmatrix}
:=
\begin{pmatrix}
\,H_{\mathrm{And}}+gV_{\uparrow}(\omega) & 0 \\
0 & H_{\mathrm{And}}+gV_{\downarrow}(\omega)\,\\
\end{pmatrix}.
\end{equation}
The effective potentials are defined as 
(\ref{eff}). In the theorem below, we will use an abbreviation and $\mathcal{L}(z)$ will refer to the Lyapunov exponent of either $H_{\uparrow}$ or $H_{\downarrow}$ whereas $\mathcal{L}_{\mathrm{And}}(z)$ will denote the Lyapunov exponent of the Anderson model on $\ell^2\left( \mathbb{Z}\right)$.

\subsection{Proof of Lemma \ref{Fekete}}
For simplicity we set $C=0$. The general statement will follow by considering the related sequence  $b_n:=a_n+C$. Given integers $L$ and $\ell$ with $L>>\ell$, our goal is to bound $\frac{a_{L}}{L}$ from above in terms of $\frac{a_{\ell}}{\ell}$. As a initial step, observe that by (\ref{subad}) we have
 \begin{equation*}
a_L\leq a_{L-\delta \log L-\ell}+a_{\ell}.
\end{equation*}
Iterating the above procedure $k+1$ times for \begin{equation}k=k_{\ell,L}:=\lfloor \frac{L-2\ell-\delta \log L}{\delta \log L+\ell}\rfloor
\end{equation}
we obtain
\begin{equation*}
a_L\leq \left(k+2\right) a_{\ell}
\end{equation*}
In the above iteration we have made use of the fact that in the assumption (\ref{subad}) the remainder $r$ can be adjusted as long as it satisfies the inequality given there. Thus,
\begin{equation}\label{startscal}
\frac{a_L}{L}\leq \frac{\ell(k+2)}{L}\frac{a_{\ell}}{\ell}
\end{equation}
Before proceeding with the proof, a few remarks are in order. Firstly, nothing is achieved by holding $\ell$ fixed and letting $L\to \infty$ directly on equation (\ref{startscal}) since this only yields the upper bound of zero. A second attempt would be showing that letting $\ell \to \infty$ (hence, $L \to \infty$ as well) implies that the ratio $\frac{k\ell}{L}$ converges to one. However, as
\begin{equation}
q_{\ell,L}-\frac{\ell}{L}\leq \frac{k\ell}{L}\leq q_{\ell,L}
\end{equation} 
for the choice
\begin{equation}\label{qdef}
q_{\ell,L}=\frac{1-2\frac{\ell}{L}-\delta \frac{\log L}{L}}{1+\frac{\delta \log L}{\ell}}
\end{equation}
we see that $\frac{k\ell}{L}$ converges to one as $\ell \to \infty$ only along a subsequence where 
\begin{equation}\label{initialreq}
\frac{\ell}{L}\to 0\,\,\,\,\mathrm{and}\,\,\,\, \frac{\log L}{\ell}\to 0.
\end{equation}
Taking this into account, let $\varepsilon>0$ be given and $\ell_1$ be an initial scale to be determined. Let $L>>\ell_1$ be a positive integer to be interpreted as a larger scale. Iterating equation (\ref{startscal}) throughout a sequence of scales \begin{equation}\ell_1<\ell_2<...<\ell_{N_L} \leq L<\ell_{N_L+1}<...
 \end{equation} satisfying, for some $p>0$,
\begin{equation}
p\log \left(\ell_j\right)\leq \log \ell_{j+1} < p^2\log \left(\ell_j\right).
\end{equation}
and
\begin{equation}\label{summability}
\sum^{\infty}_{j=1}\frac{\ell_{j}}{\ell_{j+1}}<\infty
\end{equation}

we reach, for $q_{\ell,L}$ defined in (\ref{qdef}),
\begin{equation}
\frac{a_L}{L}\leq \left(q_{\ell_{N_L},L}+\frac{2\ell_{N_L}}{L}\right)\prod^{N_{L}-1}_{j=1} \left(q_{\ell_{j},\ell_{j+1}}+\frac{2\ell_j}{\ell_{j+1}}\right)
\frac{a_{\ell_1}}{\ell_1}.
\end{equation}
Since $q_{\ell_{j},\ell_{j+1}}\to \infty$ as $j\to \infty$ Due to (\ref{summability}), we conclude that the value of $\ell_1$ can be chosen(independently of $L$) so that 
 \begin{equation}\label{tails}
 \sum^{N_{L}-1}_{j=1} \log\left(q_{\ell_{j},\ell_{j+1}}+\frac{2\ell_j}{\ell_{j+1}}\right)+\log\left(q_{\ell_{N_L},L}+\frac{2\ell_{N_L}}{L}\right)<\varepsilon.
\end{equation}
Thus

\begin{equation}\label{conclusionseq}
\frac{a_L}{L}\leq e^{\varepsilon}\frac{a_{\ell_1}}{\ell_1}.
\end{equation}
Moreover, the above conclusion holds for any integer $\ell_1$ sufficiently large, as long and $L>>\ell_1$. In particular, we can also require that 
\begin{equation}\label{infimumseq}
\frac{a_{\ell_1}}{\ell_1}\leq \inf_{n}\frac{a_n}{n}+\varepsilon.
\end{equation}
Combining equations (\ref{conclusionseq}) and (\ref{infimumseq}) the proof is finished letting 
$\varepsilon \to 0$.
\subsection{Proof of lemma \ref{mixinglem}}

We will show that the following inequality holds
\begin{equation}\label{upperdecoupling}
\mathbb{E}_{[0,n+m+r]}\left(|\hat G(0,n+m+r;z)|^s \right)\leq  C_{\mathrm{AP}}e^{C(\eta,g,\|F\|_{\infty}) e^{-\nu'r}(m^2+n^2)}\mathbb{E}_{[0,n]}\left(|\hat G(0,n;z)|^s \right) \mathbb{E}_{[0,m]}\left(|\hat G(0,m;z)|^s \right)
\end{equation}
where $\mathbb{E}_{[0,n]}$ denotes the expectation with respect to the variables $U(0),...,U(n)$ and $C_{\mathrm{AP}}$ is, up to a multiplication by a constant independent of $m,n$ and $r$, the constant obtained on the \emph{a priori} from lemma \ref{apriori}.
Let us change variables according to
\begin{equation}\label{change1d}
\left(\omega(1),...,\omega(n+1),\omega(n+r),...,\omega(n+r+m)\right)\mapsto 
\left(U(1),..., U(n+1), U(n+r),...,U(n+r+m)\right).
\end{equation}
We remark that the variables $\omega(n+2),...,\omega(n+r-1)$ are fixed in this process.

Note that by lemma \ref{apriori} and a geometric resolvent expansion we have

\begin{equation}\label{firstapriori}
\mathbb{E}\left(\hat G(0,n+m+r;z)|^s \right)\leq 
C_{\mathrm{AP}}\mathbb{E}_{\neq n+1,n+r}\left(|\hat G(0,n;z)|^s |\hat G(n+r+1,n+r+m;z)|^s \right).
\end{equation}
where $\mathbb{E}_{\neq n+1,n+r}$ indicates that the variables $\omega(n+1)$ and $\omega(n+r)$ were integrated out. 
Observe that the corresponding Jacobian has the structure
\begin{equation*}\label{jacobian}
\mathcal{J}=\begin{pmatrix}
\mathcal{A}_{n\times n} & \mathcal{B}_{n\times(m+r)}\\
\mathcal{C}_{(m+r) \times n} & \mathcal{D}_{(m+r)\times(m+r)}\\
\end{pmatrix}
\end{equation*}
where
\begin{equation*}\mathcal{A}_{jk}=\delta_{jk}+g\frac{\partial V_{\mathrm{eff}}(j)}{\partial U(k)},\,\,\, \mathcal{B}_{jk}=g\frac{\partial V_{\mathrm{eff}}(j)}{\partial U(n+k)},\,\,\, \mathcal{C}_{jk}=g\frac{\partial V_{\mathrm{eff}}(n+j)}{\partial U(k)}.
\end{equation*}
Moreover,
\begin{equation*}\mathcal{D}=\begin{pmatrix}
\mathcal{I}_{r\times r} & \mathcal{Q}_{r\times m}\\
\mathcal{O}_{m \times r} & \mathcal{P}_{m\times m}\\
\end{pmatrix}
\end{equation*}
where $\mathcal{I}_{r\times r}$ is the identity matrix and
\begin{equation*}\mathcal{P}_{jk}=\delta_{jk}+g\frac{\partial V_{\mathrm{eff}}(n+r+j)}{\partial U(n+r+k)},\,\,\, \mathcal{Q}_{jk}=g\frac{\partial V_{\mathrm{eff}}(n+j)}{\partial U(n+r+k)},\,\,\, \mathcal{O}_{jk}=g\frac{\partial V_{\mathrm{eff}}(n+r+j)}{\partial U(n+k)}.
\end{equation*}

By the Schur complement formula  
\begin{equation}\label{determinantschur}\det \mathcal{J}=\det \mathcal{A}\det \mathcal{P} \det\left(\mathcal{I}_{{m+r}\times{m+r}}-{\mathcal{D}}^{-1}
\mathcal{C}{\mathcal{A}}^{-1}\mathcal{B}\right)\det\left(\mathcal{I}_{{m}\times{m}}-{\mathcal{P}}^{-1}
\mathcal{O}\mathcal{Q}\right).
\end{equation}
  where, according to the estimate \ref{lemmapotential1}, the matrices $\mathcal{B}$ and $\mathcal{Q}$ have entries which decay exponentially away from their lower-left corner. Likewise, the entries of $\mathcal{C}$, $\mathcal{O}$ decay exponentially away from their upper-right corner. It readily follows from lemma that \ref{detabsbd}, 
  \begin{eqnarray*}
  \det\left(\mathcal{I}_{{m+r}\times{m+r}}-{\mathcal{D}}^{-1}
\mathcal{C}{\mathcal{A}}^{-1}\mathcal{B}\right)\det\left(\mathcal{I}_{{m}\times{m}}-{\mathcal{P}}^{-1}
\mathcal{O}\mathcal{Q}\right)\leq C(\eta,\nu,g,\|F\|_{\infty}).
  \end{eqnarray*} 
  therefore, for $C$ as above,
  \begin{equation*}
      \det \mathcal{J}\leq C\det \mathcal{A}\det \mathcal{P}.
  \end{equation*}
  Let $\rho(l)=\rho\left(U(l)-gV_{\mathrm{eff}}(l)\right)$. We obtain a decoupling estimate by observing that, setting $U(l)=0$ for $l\geq n+r+1$ would only alter $$V_{\mathrm{eff}}(j)\,\,\mathrm{and}\,\,\frac{\partial V_{\mathrm{eff}}(j)}{\partial U(k)} $$ by at most a factor which decays as $C(\eta,g,\|F\|_{\infty})e^{-\nu(r+|j-k|)}$ for $1\leq j,k\leq n$. This follows from the exponential decay on lemmas \ref{decay1}, \ref{improvedct} and \ref{localbeh}. Similarly, we can set $U(l)=0$ for $l\leq n$ and this only changes $$V_{\mathrm{eff}}(j)\,\,\mathrm{and}\,\,\frac{\partial V_{\mathrm{eff}}(j)}{\partial U(k)} $$ by at most a factor bounded which decays as $C(\eta,\lambda,g,\|F\|_{\infty})e^{-\nu(r+|j-k|)}$ for $n+r\leq j,k\leq m+n+r$.
The above process yields two independent measures 
$${\det}_{0}\left (I+g\frac{\partial V_{\mathrm{eff}}(j)}{\partial U(k)}\right)_{ [0,n]} \prod_{0\leq l\leq n}\rho^{0}(l)\,dU(l)$$ and 
$${\det}_{0}\left ( I+g\frac{\partial V_{\mathrm{eff}}(j)}{\partial U(k)}\right)_{ [n+r+1,n+r+m]}\prod_{n+r+1\leq n+r+m}\rho^{0}(l)\,dU(l).$$
  
   Using lemma \ref{detabsbd}
  we then arrive at an inequality of the type
\begin{align*}\mathbb{E}_{[0,m+n+r]}&\left(|\hat G(0,n+m+r;z)|^s \right)\\
&\leq C_{\mathrm{AP}}e^{ C(\eta,g,\|F\|_{\infty})e^{-\nu r}(m^2+n^2)}\int |\hat G(0,n;z)|^s {\det}_{0}\left (I+g\frac{\partial V_{\mathrm{eff}}(j)}{\partial U(k)}\right)_{ [0,n]} \prod_{0\leq l\leq n}\rho^{0}(l)\,dU(l)\\
&\times | G_{+}(n+r,n+r+m;z)|^s {\det}_{0}\left ( I+g\frac{\partial V_{\mathrm{eff}}(j)}{\partial U(k)}\right)_{ [n+r+1,n+r+m]}\prod_{n+r+1\leq n+r+m}\rho^{0}(p)\,dU(p).\\
\end{align*}
 Rewriting the above conclusion in terms of expectations  we obtain
\begin{equation}
\mathbb{E}_{[0,m+n+r]}\left(|\hat G(0,n+m+r;z)|^s \right)\leq C_{\mathrm{AP}}e^{ C(\eta,g,\|F\|_{\infty})e^{-\nu r}(m^2+n^2)} \mathbb{E}_{[0,n]}\left(|\hat G(0,n;z)|^s\right)|  \mathbb{E}_{[n+r+1,n+m+r]}\left(G_{+}(n+r,n+r+m;z)|^s\right).
\end{equation}
At the expense of increasing $C_{\mathrm{AP}}$ the above expectations can be replaced by expectations over the full probability space, which yields the desired conclusion by translation invariance.\qedhere

\subsection{Proof of (\ref{upperbound})}
\label{detailslowerbound}
We shall modify the proof of lemma \ref{mixinglem} to obtain an ``super-additive" estimate of the form
\begin{equation}\label{goallowerdec}
\mathbb{E}_{[0,n+m+r]}\left(|\hat G(0,n+m+r;z)|^s \right)\geq  \underline{C}(s,E)e^{\varphi(s,z)r}e^{-C(\eta,g,\|F\|_{\infty}) e^{-\nu'r}(m^2+n^2)}\mathbb{E}_{[0,n]}\left(|\hat G(0,n;z)|^s \right) \mathbb{E}_{[0,m]}\left(|\hat G(0,m;z)|^s \right)
\end{equation}
where the constant $\underline{C}(s,z)$ can be chosen locally uniform in $z$ and $s\in (0,1)$. Since the argument is very similar to the one in the proof of (\ref{upperdecoupling}), we only explain the key modification which consists in obtaining a lower bound for $\mathbb{E}_{[n+1,n+r]}\left(|\hat G(n+1,n+r;z)|^s \right)$ as follows. We start by writing
\begin{equation}\label{bottleneck}|\hat G(n+1,n+r;z)|^s=|\hat G(n+1,n+1;z)|^s|\hat G(n+2,n+r;z)|^s.
\end{equation}
Using Jensen's inequality we have that, for any $\varepsilon\in (0,s)$,
\begin{equation}\label{jensen}
\mathbb{E}_{n+1}\left(|\hat G(n+1,n+1;z)|^s\right)\geq \mathbb{E}_{n+1}\left(|\hat G(n+1,n+1;z)|^{-\varepsilon}\right)^{\frac{-s}{\varepsilon}}.
\end{equation}
where $\mathbb{E}_{n+1}$ denotes the conditional expectation with respect to $U(n+1)=\omega(n+1)+gV_{\mathrm{eff}}(n+1)$. Making use of the discrete Riccati equation \cite[Proposition 12.1]{A-W-B} we obtain
\begin{equation}\label{riccati}
\mathbb{E}_{n+1}\left(|\hat G(n+1,n+1;z)|^{-\varepsilon}\right)=\mathbb{E}_{n+1}\left(|U(n+1)-z-\hat G(n+2,n+2;z)|^{\varepsilon}\right).
\end{equation}
Equations (\ref{bottleneck}), (\ref{jensen}) and (\ref{riccati}) together with lemma \ref{bdddensity} yield a lower bound
\begin{equation}
\mathbb{E}_{[n+1,n+r]}\left(|\hat G(n+1,n+r;z)|^s\right)\geq \underline{C}(z,s)\mathbb{E}_{[n+2,n+r]}\left(|\hat G(n+2,n+r;z)|^s\right)
\end{equation}
which in combination with (\ref{lowerbound}) implies that, after a suitable adjustment of the constant $\underline{C}(s,z)$,
\begin{equation}
\mathbb{E}_{[n+1,n+r]}\left(|\hat G(n+1,n+r;z)|^s\right)\geq \underline{C}(z,s)e^{\varphi(s,z)r}.
\end{equation}
Equation (\ref{goallowerdec}) follows from the above inequality combined with a decoupling estimate analogous to the one in the proof of (\ref{upperdecoupling}). Again, choosing $r$ comparable to $\max\{\log m,\log n\}$ we obtain
\begin{equation}
\mathbb{E}_{[0,n+m+r]}\left(|\hat G(0,n+m+r;z)|^s \right)\geq  \underline{C}(s,z)e^{\varphi(s,z)r}\mathbb{E}_{[0,n]}\left(|\hat G(0,n;z)|^s \right) \mathbb{E}_{[0,m]}\left(|\hat G(0,m;z)|^s \right)
\end{equation}
multiplying both sides of the above inequality by $e^{-\varphi(s,z)(m+n)}$ and taking logarithms we conclude that the sequence $b_n=\log e^{-\varphi(s,z)n}\left( \mathbb{E}_{[0,n]}\left(|\hat G(0,n;z)|^s \right)\right)$ satisfies
\begin{equation}
b_{n+m+r}\geq \log(\underline{C}(s,z))+b_n+b_m.
\end{equation}
 The bound (\ref{upperbound}) now follows from an application of the supperaditive version of lemma \ref{Fekete}.

\section{H\"older Continuity for the integrated density of states at weak interaction}
\label{idssection}

In this section we shall address the problem of H\"older continuity for the integrated density of states for the Hubbard model with respect to energy, disorder and interaction. Our results follow from modifications of the methods in \cite{H-K-S} and references therein after we have established the existence of a suitable conditional density as in lemma \ref{bdddensity}.

Let's now prove theorem \ref{thmids}, starting from H\"older continuity with respect to energy, equation (\ref{IDSenergy}). We proceed as in \cite[Section 2]{H-K-S}. For simplicity, we replace $H_{\mathrm{Hub}}$ by $H$ defined in (\ref{toymodel}). The arguments given below will apply directly to $H_{\uparrow}$ and $H_{\downarrow}$ and, therefore, suffice to show the same result for $H_{\mathrm{Hub}}$.\par
 Fix an energy interval $I$ of length $\varepsilon>0$ centered at $E\in \mathbb{R}$. The idea is to use the H\"older continuity of $N_0$ and the resolvent identity to reach the following inequality for $\varepsilon<<1$ and $|I|=\varepsilon$, where we denote by $P_{\Lambda}(I)$ the spectral projection of $H^{\Lambda}$ on the interval $I$.
 \begin{equation}\label{goalids}(1-\mathrm{o}(\varepsilon))\mathbb{E}\left( \mathrm{Tr}P_{\Lambda}(I)\right)\leq C(I,\rho)\varepsilon^{\alpha}|\Lambda|.
\end{equation}
Dividing both sides of (\ref{goalids}) by $|\Lambda|$ and letting $|\Lambda|\to \infty$ gives (\ref{IDSenergy}). To obtain (\ref{goalids}) we fix an interval $J$ containing $I$ with $|J|$ to be determined. We then write, with $P_{0,\Lambda}(J)=P\left(H_{0}^{\Lambda}\right)(J)$,
\begin{equation}\label{eq0}
\mathrm{Tr}(P_{\Lambda}(I))=\mathrm{Tr}(P_{\Lambda}(I)P_{0,\Lambda}(J))+\mathrm{Tr}(P_{\Lambda}(I)P_{0,\Lambda}(J^c)).
\end{equation}
Note \begin{equation}\mathrm{Tr}(P_{\Lambda}(I)P_{0,\Lambda}(J))\leq \mathrm{Tr}(P_{0,\Lambda}(J)).
\end{equation}
The above inequality combined with to the H\"older continuity of $N_0$ with respect to $E\in \mathbb{R}$ 
\begin{equation}|N_0(E)-N_0(E')|\leq C(I,d)|E-E'|^{\alpha_0}.
\end{equation}
yields, for $|\Lambda|$ sufficiently large depending only on $J$,
\begin{equation}\label{1st}
\mathrm{Tr}(P_{\Lambda}(I)P_{0,\Lambda}(J))\leq C(J,d)|J|^{\alpha_0}|\Lambda|.
\end{equation}
We now estimate the second term on the left-hand side of equation (\ref{eq0}). By the resolvent identity,
\begin{equation}\label{remainingterm}
\mathrm{Tr}\left(P_{\Lambda}(I)P_{0,\Lambda}(J^c)\right)=\mathrm{Tr}\left(P_{\Lambda}(I)(H-E)P_{0,\Lambda}(J^c)(H_{0,\Lambda}-E)^{-1}\right)-\lambda \mathrm{Tr}\left(P_{\Lambda}(I)U^{\Lambda}P_{0,\Lambda}(J^c)(H_{0,\Lambda}-E)^{-1}\right).
\end{equation}
Where we have written $U=V_{\omega}+\frac{g}{\lambda}V_{\mathrm{eff}}$. Moreover,  using using functional calculus and that $E$ is the center of $I$, we estimate the first term on the left-hand side of equation (\ref{remainingterm}) by
\begin{equation}\label{2nd}
\mathrm{Tr}\left((P_{\Lambda}(I))(H^{\Lambda}-E)P_{0,\Lambda}(J^c)(H_{0,\Lambda}-E)^{-1}\right)\leq \frac{|I|}{|J|-|I|}\mathrm{Tr}(P_{\Lambda}(I)).
\end{equation}
Now, the second term in in equation (\ref{remainingterm}) can be controlled by means of
\begin{align*}
-\lambda \mathrm{Tr}\left(P_{\Lambda}(I)U^{\Lambda}P_{0,\Lambda}(J^c)(H_{0,\Lambda}-E)^{-1}\right)=&-\lambda \mathrm{Tr}\left((H^{\Lambda}-E)(P_{\Lambda}(I))U^{\Lambda}P_{0,\Lambda}(J^c)(H_{0,\Lambda}-E)^{-2}\right)\\
&+\lambda^2\mathrm{Tr}\left(U^{\Lambda}(P_{\Lambda}(I))U^{\Lambda}P_{0,\Lambda}(J^c)(H_{0,\Lambda}-E)^{-2}\right).\\
&=A+B
\end{align*}
Now, because $U^{\Lambda}$ is unbounded, we continue a slight modification of the argument in \cite{H-K-S}.
 The only difference is that we bound term (A) above (which corresponds to \cite[(iii) in equation (2.6)]{H-K-S} as
\begin{equation}\label{iii}
|\mathrm{Tr}\left((H^{\Lambda}-E)(P_{\Lambda}(I))U^{\Lambda}P_{0,\Lambda}(J^c)(H_{0,\Lambda}-E)^{-2}\right)|\leq \frac{|I|}{(|J|-|I|)^2}|\mathrm{Tr}\left(P_{\Lambda}(I)U^{\Lambda}\right)|.
\end{equation}
At this point, with an estimate analogous to the one in the proof of Proposition 3.2 in \cite{C-H-K} we reach \begin{equation}
\mathbb{E}\left(|\mathrm{Tr}P_{\Lambda}(I)V_{\omega}|\right)\leq \lambda^{-1} \sup_{m\in \mathbb{N}}\Big\{\int^{(m+1)\varepsilon}_{m\varepsilon}\omega_j\rho(\omega_j)\,d\omega_j\Big\}|\Lambda| \,\,\,\,\varepsilon=|I|.
\end{equation}
Thus, with $M_1(\varepsilon):=\sup_{m\in \mathbb{N}}\Big\{\int^{(m+1)\varepsilon}_{m\varepsilon}\omega_j\rho(\omega_j)\,d\omega_j\Big\}$,
\begin{equation}\label{3rd}
\lambda|\mathrm{Tr}(H_{\Lambda}-E)P_{\Lambda}(I)U^{\Lambda}P_{0,\Lambda}(J^c)(H^{\Lambda}_{0}-E)^{-2}|\leq \frac{\lambda|I|}{(|J|-|I|)^2}(\frac{M_1(\varepsilon)}{\lambda}+\frac{g\|F\|_{\infty}}{\lambda})|\Lambda|.
\end{equation}
Similarly, with $M_2(\varepsilon):=\sup_{m\in \mathbb{N}}\Big\{\int^{(m+1)\varepsilon}_{m\varepsilon}\omega^2_j\rho(\omega_j)\,d\omega_j\Big\}$, we estimate term (B) through
\begin{equation}\label{4th}
\lambda^2|\mathrm{Tr}U^{\Lambda}(P_{\Lambda}(I))U^{\Lambda}P_{0,\Lambda}(J^c)(H_{0,\Lambda}-E)^{-2}|\leq \frac{4\lambda^2}{(|J|-|I|)^2}\left(\frac{M_2(\varepsilon)}{\lambda}|\Lambda|+\frac{g^2}{\lambda^2}\mathrm{Tr}(P_{\Lambda}(I))\right).
\end{equation}
Due lemma \ref{bdddensity} and the Wegner estimate (see \cite[theorem 4.1]{A-W-B}) we conclude that
\begin{equation}\label{wegner}
\mathrm{Tr}(P_{\Lambda}(I))\leq \frac{C}{\lambda}|I||\Lambda|.
\end{equation}

 Choosing the interval $J$ such that $|J|=\varepsilon^{\delta}$ for $\delta<1$, keeping in mind the assumption $g^2<\lambda$ and combining the bounds (\ref{1st}), (\ref{2nd}), (\ref{3rd}), (\ref{4th}), (\ref{wegner}) and optimizing over $\delta$ gives $\delta=\frac{1}{2+\alpha_0}$ therefore we reach
(\ref{goalids}) for $\alpha\in [0,\frac{\alpha_0}{2+\alpha_0}]$ and (\ref{IDSenergy}) is proven.\par To show \ref{IDSdisorder} we follow the proof of theorem 1.2 in \cite{H-K-S}. We fix $\lambda,\lambda'\in J$ and $E\in I$. As explained in \cite{H-K-S}, using H\"older continuity with respect to energy given by equation (\ref{IDSenergy}), trace identities and ergodicity of $H_{\lambda,g}$ and $H_{\lambda',g'}$, it suffices to estimate $\mathbb{E}\left( \mathrm{Tr} P_0\varphi(H_{\lambda,g})(\varphi(H_{\lambda,g})-\varphi(H_{\lambda',g'}))P_0\right)$ where $\varphi$ is a smooth function such that

\begin{equation}
\left\{
\begin{array}{lll}
\varphi\equiv 1\; \mathrm{on} \;(-\infty,E], \\
\varphi\equiv 0 \;\mathrm{on} \; (-\infty,E+|\lambda-\lambda'|^{\delta}+|g-g'|^{\delta})^c , \\
\|\varphi^{(j)}\|_{\infty}\leq C\left(|\lambda-\lambda'|^{\delta}+|g-g'|^{\delta}\right)^{-j},\,j=1,2...,3d+4 \;\;\;
\end{array}
\right.
\end{equation}
with $\delta>0$ to be determined. The need for a high regularity of $\varphi$ is due to the fact that the random potential $V_{\omega}$ may be unbounded.
 Let $\tilde{\varphi}$ be an almost analytic extension of $\varphi$ of order $3+3d$. In particular, $\tilde{\varphi}$ is defined in a complex neighborhood of the support of $\varphi$ and if $z=E+i\eta$ we have that
\begin{equation}\label{order3}|\partial_{\overline{z}}\tilde{\varphi}(z)|\leq |\eta|^{3d+3}|\varphi^{(3d+4)}(E)|.
\end{equation}
 By the Helffer-Sj\"ostrand formula,
\begin{align*}
 \mathrm{Tr}\left( P_0\varphi(H_{\lambda,g})(\varphi(H_{\lambda,g})-\varphi(H_{\lambda',g'}))P_0\right)&=\frac{1}{\pi}\int_{\mathbb{C}}\partial_{\overline{z}}\tilde{\varphi}\,\mathrm{Tr}P_0\varphi(H_{\lambda,g})R_{\lambda,g}(z)\left(\lambda 'U_{\lambda',g'}-\lambda U_{\lambda,g}\right)R_{\lambda',g'}(z)P_0\,d^2z\\
&=\frac{(\lambda'-\lambda)}{\pi}\int_{\mathbb{C}}\partial_{\overline{z}}\tilde{\varphi}\,\mathrm{Tr}P_0\varphi(H_{\lambda,g})R_{\lambda,g}(z)V_{\omega}R_{\lambda',g'}(z)P_0\,d^2z\\
&+\frac{(g'-g)}{\pi}\int_{\mathbb{C}}\partial_{\overline{z}}\tilde{\varphi}\,\mathrm{Tr}P_0\varphi(H_{\lambda,g})R_{\lambda,g}(z)V_{\mathrm{eff},\lambda}(g)R_{\lambda',g'}(z)P_0\,d^2z\\
&+\frac{g'}{\pi}\int_{\mathbb{C}}\partial_{\overline{z}}\tilde{\varphi}\,\mathrm{Tr}P_0\varphi(H_{\lambda,g})R_{\lambda,g}(z)\left(V_{\mathrm{eff},\lambda'}(g')-V_{\mathrm{eff},\lambda}(g)\right)R_{\lambda',g'}(z)P_0\,d^2z.\\
\end{align*}
Since the last two terms enjoy a better modulus of H\"oldr continuity (since they do not involve $V_{\omega}$) and can be treated as in \cite{H-K-S}, we shall only estimate the first of the above integrals.
By the resolvent identity,
\begin{align*}R_{\lambda,g}(z)V_{\omega}R_{\lambda',g'}(z)=&R_{\lambda,g}(z)V_{\omega}R_{\lambda,g}(z)+(\lambda'-\lambda)R_{\lambda,g}(z)V_{\omega}R_{\lambda',g'}(z)V_{\omega}R_{\lambda,g}(z)+(g'-g)R_{\lambda,g}(z)V_{\omega}R_{\lambda',g'}(z)V_{\mathrm{eff},\lambda'}(g')R_{\lambda,g}(z)\\
&+gR_{\lambda,g}(z)V_{\omega}R_{\lambda',g'}(z)(V_{\mathrm{eff},\lambda'}(g')-V_{\mathrm{eff},\lambda}(g))R_{\lambda,g}(z).
\end{align*}

The above considerations lead to a perturbative expansion of $\frac{(\lambda'-\lambda)}{\pi}\int_{\mathbb{C}}\partial_{\overline{z}}\tilde{\varphi}\,\mathrm{Tr}P_0\varphi(H_{\lambda,g})R_{\lambda,g}(z)V_{\omega}R_{\lambda',g'}(z)P_0\,d^2z$ into four terms. We will show below that each of them  can be bounded in terms of powers of either $|\lambda-\lambda'|$ or $|g-g'|$.\par
We start by estimating $\mathbb{E}\left( \Big|\frac{(\lambda-\lambda')^2}{\pi}\int_{\mathbb{C}}\partial_{\overline{z}}\tilde{\varphi}\,\mathrm{Tr}P_0\varphi(H_{\lambda,g})R_{\lambda,g}(z)V_{\omega}R_{\lambda',g'}(z)V_{\omega}R_{\lambda,g}(z)P_0\,d^2z\Big|\right)$ with a slight modification of equation (3.15) in \cite{H-K-S} since  $V_{\omega}$ is unbounded. By the Combes-Thomas bound, equation (\ref{order3}) and the choice of $\varphi$
\begin{eqnarray*}
\mathbb{E}\left( \Big|\frac{(\lambda-\lambda')^2}{\pi}\int_{\mathbb{C}}\partial_{\overline{z}}\tilde{\varphi}\,\mathrm{Tr}P_0\varphi(H_{\lambda,g})R_{\lambda,g}(z)V_{\omega}R_{\lambda',g'}(z)V_{\omega}R_{\lambda,g}(z)P_0\,d^2z\Big|\right)\leq C(d)\left(1+\mathbb{E}^2(|V_{\omega}|)\right)\frac{|\lambda-\lambda'|^{2}}{(|\lambda-\lambda'|^{\delta}+|g-g'|^{\delta})^{3d+4}}.
\end{eqnarray*}
Similarly,
\begin{align*}
\mathbb{E}&\left( \Big|\frac{(\lambda-\lambda')(g-g')}{\pi}\int_{\mathbb{C}}\partial_{\overline{z}}\tilde{\varphi}\,\mathrm{Tr}P_0\varphi(H_{\lambda,g})R_{\lambda,g}(z)V_{\mathrm{eff},\lambda'}(g')R_{\lambda',g'}(z)V_{\omega}R_{\lambda,g}(z)P_0\,d^2z\Big|\right)\\
&\leq C(d)\left(1+\mathbb{E}(|V_{\omega}|)\right)\frac{|\lambda-\lambda'||g-g'|}{(|\lambda-\lambda'|^{\delta}+|g-g'|^{\delta})^{3d+4}}.
\end{align*} Moreover, using lemma \ref{comparepotentials} with the the explicit dependence on $\omega$ given there, we obtain
\begin{align*}
\mathbb{E}&\left( \Big|\frac{g(\lambda-\lambda')}{\pi}\int_{\mathbb{C}}\partial_{\overline{z}}\tilde{\varphi}\,\mathrm{Tr}P_0\varphi(H_{\lambda,g})R_{\lambda,g}(z)V_{\omega}R_{\lambda',g'}(z)(V_{\mathrm{eff},\lambda'}(g')-V_{\mathrm{eff},\lambda}(g))R_{\lambda,g}(z)P_0\,d^2z\Big|\right)\leq\\
 &C(d)\left(1+\mathbb{E}(|V_{\omega}|)\right)\frac{|g||\lambda-\lambda'|(|g-g'|+|\lambda-\lambda'|)}{(|\lambda-\lambda'|^{\delta}+|g-g'|^{\delta})^{3+3d}}.\\
\end{align*}
Using the same arguments as in\cite[Equations 3.17 and 3.18]{H-K-S} we see that 
\begin{equation*}\Big|\frac{(\lambda'-\lambda)}{\pi}\int_{\mathbb{C}}\partial_{\overline{z}}\tilde{\varphi}\,\mathrm{Tr}P_0\varphi(H_{\lambda,g})R_{\lambda,g}(z)V_{\omega}R_{\lambda',g'}(z)P_0\,d^2z\Big|
\end{equation*}
 can be bounded from above by
\begin{equation}
|\lambda-\lambda'||\mathbb{E}\left( \mathrm{Tr}(P_0\varphi(H_{\lambda,g})R_{\lambda,g}(z)V_{\omega}R_{\lambda,g}(z)P_0)\right)|\leq \frac{C|\lambda-\lambda'|\mathbb{E}\left(|V_{\omega}|\right)}{(|\lambda-\lambda'|^{\delta}+|g-g'|^{\delta})}.
\end{equation}
Finally, we conclude that
$|N_{\lambda,g}(E)-N_{\lambda',g'}(E)|$ is bounded from above by
\begin{eqnarray*}C(\alpha_0,d,I)\left(|\lambda-\lambda'|^{\delta \alpha}+|g-g'|^{\delta \alpha}+|\lambda-\lambda'|^{2-(3d+4)\delta}+|g-g'|^{2-(3d+4)\delta}+|\lambda-\lambda'|^{1-\delta}+|g-g'|^{1-\delta}\right).
\end{eqnarray*}
Choosing $\delta=\frac{2}{\alpha+3d+4}$ we obtain, for any $\beta\in [0,\frac{2}{\alpha+3d+4}]$,
\begin{equation*}|N_{\lambda,g}(E)-N_{\lambda',g'}(E)|\leq C(\alpha_0,I)\left(|\lambda-\lambda'|^{\beta}+|g-g'|^{\beta}\right)
\end{equation*}
finishing the proof of theorem \ref{thmids}.

\bibliographystyle{amsplain}

 \end{document}